\newcommand{\beq}{\begin{equation}}
\newcommand{\eeq}{\end{equation}}
\newcommand{\beqs}{\begin{equation*}}
\newcommand{\eeqs}{\end{equation*}}
\renewcommand{\d}{\mathrm{d}} 
\renewcommand{\Im}{\mathrm{Im}}
\newcommand{\eps}{\veps}
\newcommand{\veps}{\varepsilon}
\newcommand{\Z}{\mathbb{Z}}
\newcommand{\N}{\mathbb{N}}
\newcommand{\C}{\mathbb{C}}
\newcommand{\R}{\mathbb{R}}
\newcommand{\cR}{\mathcal{R}}
\newcommand{\sH}{\mathscr{H}}
\newcommand{\ue}{\mathrm{e}}
\newcommand{\ui}{\mathrm{i}}
\newcommand{\1}{\mathds{1}}
\newcommand{\supp}{\,\text{supp}}
\newcommand{\ds}{\,\d s}
\newcommand{\Tr}[1]{\textnormal{Tr}\left[#1\right]}	
\newcommand{\tr}[1]{\textnormal{tr}\left[#1\right]}	
\newcommand\myfrac[2]{\genfrac{}{}{0pt}{}{#1}{#2}}
\newtheorem{thm}{Theorem}[section]
\newtheorem{lm}[thm]{Lemma}
\newtheorem{cor}[thm]{Corollary}
\newtheorem{prop}[thm]{Proposition}
\newtheorem{rmk}[thm]{Remark}
\theoremstyle{definition}
\newtheorem{defn}[thm]{Definition}
\newtheorem{exam}[thm]{Example}
\theoremstyle{remark}
\def\dotuline{\bgroup
  \ifdim\ULdepth=\maxdimen  
   \settodepth\ULdepth{(j}\advance\ULdepth.4pt\fi
  \markoverwith{\begingroup
  \advance\ULdepth0.08ex
  \lower\ULdepth\hbox{\kern.15em .\kern.1em}%
  \endgroup}\ULon}
\def\dashuline{\bgroup
  \ifdim\ULdepth=\maxdimen  
   \settodepth\ULdepth{(j}\advance\ULdepth.4pt\fi
  \markoverwith{\kern.15em
  \vtop{\kern\ULdepth \hrule width .3em}%
  \kern.15em}\ULon}
\title{Dynamics and equilibrium states of infinite systems of lattice bosons}
\author{Andreas Deuchert}
\address{Andreas Deuchert, 448 McBryde Hall, Virginia Tech,
225 Stanger Street,
Blacksburg, VA 24061-1026}
\email{andreas.deuchert@vt.edu}
\author{ Jonas Lampart}
\address{Jonas Lampart, CNRS \& LICB, UMR 6303 Université Bourgogne Europe, 9 Av. A. Savary, 21078 Dijon}
\email{jonas.lampart@u-bourgogne.fr}
\author{Marius Lemm}
\address{Marius Lemm, University of Tübingen,
Department of Mathematics,
Auf der Morgenstelle 10,
72076 Tübingen}
\email{marius.lemm@uni-tuebingen.de}
\begin{document}

\begin{abstract}
We consider the dynamics of systems of lattice bosons with infinitely many degrees of freedom. We show that their dynamics defines a group of automorphisms on a $C^*$--algebra introduced by Buchholz,
which extends the resolvent algebra of local field operators. For states that admit uniform bounds on moments of the local particle number, we derive propagation bounds of Lieb--Robinson type. Using these bounds, we show that the dynamics of local observables gives rise to a strongly continuous unitary group in the GNS representation. Moreover, accumulation points of finite-volume Gibbs states satisfy the KMS condition with respect to this group. This, in particular, proves the existence of KMS states.
\end{abstract}

\maketitle
\tableofcontents

\section{Introduction}

In this article, we are interested in the dynamics of systems of lattice bosons. A paradigmatic example of such a system is the Bose--Hubbard model on $\mathbb Z^d$, defined by the Hamiltonian
\begin{equation}
 H=- \sum_{\myfrac{x,y\in \Z^d}{ d(x,y)=1}} (a_x^*a_y + a_y^* a_x) + U \sum_{x\in \Z^d}  N_x (N_x-1).
 \label{eq:BoseHubbardHamiltonian}
\end{equation}
Here, $a^*_x$ and $a_x$ are the usual creation and annihilation operators of a particle at site $x \in \mathbb{Z}^{d}$ and $N_x = a_x^* a_x$ is the associated number operator. The first term of the Hamiltonian describes nearest-neighbor hopping, and the second term corresponds to onsite boson-boson interaction of strength $U > 0$. The Bose--Hubbard model is expected to exhibit an interesting phase diagram~\cite{Fisher1989, Freericks1994, Greiner2002}, including Bose--Einstein condensed (BEC) and Mott insulator phases. A proof of BEC for the model with hard core repulsion ($U=+\infty$) and density equal to $1/2$ was given in 1978 in the seminal article\footnote{The proof has been provided for a class of quantum spin systems including the quantum $xy$ model, which can be mapped to the hard-core Bose gas at half-filling, see e.g. \cite{ALSSY2004}.} \cite{DLS1978} using reflection positivity. The existence of a version of the BEC--Mott insulator transition was established in 2004 in \cite{ALSSY2004}.  

We are interested in studying bosonic lattice systems with infinitely many degrees of freedom, focusing primarily on the existence of equilibrium states and the study of near-equilibrium dynamics. The natural mathematical framework for describing such systems is provided by quasi-local $C^*$-algebras, states on these algebras, and automorphism groups. However, defining an interacting bosonic dynamics in this framework is a notoriously difficult problem because \textit{infinitely} many particles can accumulate in a \textit{finite} region of space.

One approach that has been developed to treat such systems is based on the study of the thermodynamic limit of time-dependent Green functions, $\gamma_{\Lambda}(\tau^{\Lambda}_t(A)B)$, for a local Gibbs state $\gamma_{\Lambda}$, $\Lambda \subset \Z^{d}$, a local time evolution $\tau^{\Lambda}_t$, and two local observables $A,B$, see~\cite[Ch.~6.3.4]{BraRob2}.
In \cite{Park1984, Park1985}, Park proved bounds on exponential moments of the local number operators for a system of bosons moving in $\mathbb{R}^d$. 
Using these bounds and a compactness argument, he concluded that Green's functions have (possibly non-unique) thermodynamic limits. Using~\cite[Thm.~6.3.27]{BraRob2}, one can reconstruct a state $\gamma$ on a quasi-local algebra and a unitary group acting on a representing Hilbert space of the algebra related to $\gamma$. Moreover, the limiting dynamics satisfy a form of the Kubo--Martin--Schwinger (KMS)  equilibrium condition.
However, this construction is not completely satisfactory since the Hilbert space can be very large, containing for every time an isomorphic copy of the Hilbert space associated with $t=0$.
In~\cite{PY1994, PY1995} a similar approach was used to study a system of oscillators on a lattice interacting via superstable interactions, see also~\cite{olivieri1993}.

An alternative approach to bosonic systems with infinitely many degrees of freedom was pioneered by Buchholz and Grundling~\cite{BG2007, BG2008, BG2013, BG2015} and developed further by Buchholz~\cite{Buchholz2014, Buchholz2017, buchholz2018,Buchholz2020}.
This approach is based on the $C^*$-algebra generated by resolvents of local field operators (the resolvent algebra).
For finitely many degrees of freedom, this algebra is invariant under time-evolutions generated by Hamiltonians with pair interactions~\cite{BG2008}, and for bosons moving in $\mathbb{R}^d$ this holds for a natural extension $\mathcal{B}$ of the algebra~\cite{buchholz2018}.
That is, for any $A \in \mathcal{B}$ one has 
\begin{equation}
    \tau_t(A) = \ue^{\mathrm{i} H t} A \ue^{-\mathrm{i} H t}  \in \mathcal{B}.
    \label{eq:defDynamicsOnB}
\end{equation}
The dynamics generated by $H$ can therefore be defined on the $C^*$-algebra $\mathcal{B}$ as a group of $*$-automorphisms. As one would expect from the bosonic nature of the particles, this dynamics exhibits only very weak locality properties and it is not even weakly continuous in time (see Example~\ref{ex:continuity} below). 
The framework of the resolvent algebra has been used in \cite{BuchholzBahns2021,Buchholz2022a,Buchholz2022b,BuchholzYngvason2024} to provide an alternative characterization of Bose–Einstein condensation in systems with infinitely many degrees of freedom and to construct equilibrium states for non-interacting models. Methods developed in \cite{buchholz2018} have also been applied in the fermionic setting~\cite{Siebert2024} to identify an extension of the CAR algebra that remains invariant under the dynamics generated by a Hamiltonian with a pair interaction.

As already hinted at above, the potential nonlocality of the dynamics on $\mathcal{B}$ stems from the fact that there exist states on the algebra describing an infinite number of particles localized in a finite region of space. Such configurations have infinite local energy, and large numbers of particles may therefore cover long distances in arbitrarily short times. 
However, if one restricts attention to states with uniform bounds on the local particle number, 
such pathological behavior is not expected.
The main technical tool we will use to show this are state-dependent Lieb--Robinson-type bounds.

Lieb--Robinson bounds were first proved for quantum spin systems in~\cite{lieb1972finite}. In that case, as well as for lattice fermions, they control the dynamical spreading of local observables under the Heisenberg time evolution in operator norm. Using these bounds, one can approximate the time evolution of a local observable by evolution under a local Hamiltonian, and thereby prove the existence of the thermodynamic limit of the dynamics on an algebra of quasi-local observables~\cite{BraRob2,nachtergaele2006propagation,nachtergaele2010existence,gebert2020lieb,hinrichs2024lieb}. Sparked by discoveries of  Hastings \cite{hastings2004lieb,hastings2005quasiadiabatic,hastings2007area}, Lieb--Robinson bounds have become 
decisive analytical tools for a variety of problems in mathematical condensed-matter physics \cite{hastings2004lieb,hastings2005quasiadiabatic,nachtergaele2006lieb,bravyi2010topological,bachmann2012automorphic,bachmann2018adiabatic} and quantum information theory \cite{bravyi2006lieb,hastings2007area,epstein2017quantum, kliesch2014lieb,woods2015simulating,haah2021quantum}.
However, the standard proofs of Lieb–Robinson bounds break down for systems with unbounded local interactions. For certain perturbations of harmonic oscillator systems, Lieb–Robinson bounds can be proven \cite{nachtergaele2009lieb} 
and these bounds can then be used to obtain the thermodynamic limit of the dynamics \cite{nachtergaele2010existence}. In contrast, for the Bose–Hubbard model, such state-independent bounds are not expected to hold. The situation becomes even more difficult when considering particles in $\mathbb{R}^d$ (instead of on a lattice),
where an ultraviolet cutoff is typically required even for fermions~\cite{gebert2020lieb, hinrichs2024lieb, bachmann2024lieb}. Following prior progress on state-dependent Lieb--Robinson bounds for Bose-Hubbard type Hamiltonians  in~\cite{schuch2011information,kuwahara2021lieb,faupin2022lieb,yin2022finite}, Kuwahara, Vu, and Saito~\cite{KVS2024} recently proposed an alternative approach that restricts attention to initial states with well-behaved local particle number; see also \cite{kuwahara2024enhanced}. This allows them to approximate the generator of the dynamics locally by a bounded Hamiltonian, and to prove Lieb–Robinson-type bounds. 

In this article, we use state-dependent Lieb--Robinson-type bounds, in the spirit of \cite{KVS2024}, to prove new results about the lattice version of the infinite-volume dynamics defined on Buchholz’s extension $\mathcal{B}$ of the resolvent algebra. We first construct the dynamics $\tau_t$ on $\mathcal{B}$ for Bose--Hubbard-type Hamiltonians. 
Then, we derive Lieb--Robinson-type bounds for the dynamics of states that satisfy uniform bounds on moments of the local particle number. Our bounds are weaker than those obtained in \cite{KVS2024} with respect to the relation between spatial and temporal distance. This is sufficient for the purposes of studying infinite-volume dynamics, where the bounds are applied for fixed times with distance going to infinity. Thanks to this different focus, we are able to give a substantially simpler, self-contained proof.
Using the existence of the dynamics $\tau_t$ on $\mathcal{B}$ and our state-dependent Lieb--Robinson bounds, we prove two new operator-algebraic results for systems of lattice bosons with infinitely many degrees of freedom that would not have been possible without these two new ingedients:
\begin{enumerate}[label=(\alph*)]
    \item The dynamics of quasi-local observables can be implemented by a strongly continuous unitary group on the GNS Hilbert space associated with a $\tau_t$-invariant state that satisfies the prescribed moment bounds. Furthermore, this global dynamics can be approximated by the corresponding local dynamics.
    \item 
    Any accumulation point of finite-volume Gibbs states with uniform local particle-number bounds satisfies
    the KMS condition. This, in particular, proves the existence of KMS states for the Bose--Hubbard model.
 \end{enumerate}

Our article is organized as follows. In Section~\ref{sect:setup}, we introduce the resolvent algebra and Buchholz’s extension $\mathcal{B}$, and discuss some of their properties. The dynamics $\tau_t$ on $\mathcal{B}$ is constructed in Section~\ref{sec:existenceOfDynamics}. Section~\ref{sect:approx} is devoted to the proof of Lieb–Robinson-type bounds, which are then applied to define a strongly continuous unitary group on the GNS Hilbert space associated with a $\tau_t$-invariant state. Finally, in Section~\ref{sect:KMS}, we prove the existence of KMS states. Sections~\ref{sec:existenceOfDynamics}--\ref{sect:KMS} begin with the statement of the main results, followed by their proofs in the remainder of the section.

\section{The framework}\label{sect:setup}
In order to allow for different boundary conditions and geometries, we will consider a slightly more general setting than in the introduction.
As one-particle configuration space we choose a graph $\Gamma$
and endow it with the geodesic distance $d(x,y)$. This is the number of points on the shortest path between $x$ and $y$ using the edges of $\Gamma$, i.e., $d(x,y)=1$ if $x$ is joined to $y$ by an edge.
For a given subset $X \subseteq \Gamma$ and a number $\ell > 0$ we denote the $\ell$-enlargement of $X$ by
\begin{equation}
	X[\ell] = \{ x \in \Gamma \ | \ d(x,X) \leq \ell \},
	\label{eq:enlargement}
\end{equation}
and
the (interior) boundary of $X$ as
\begin{equation}
	\partial X = \{ x \in X \ | \ d(x,X^{\mathrm{c}}) = 1 \}.
	\label{eq:boundary}
\end{equation}
If the set $X$ consists of a single point $x$ we write $X[r] = x[r]$. We assume that $\Gamma$ is $d$-dimensional in the sense of the following definition. The prototypical example is $\mathbb{Z}^d$.

\begin{defn}
	\label{def:dDimGraph}
	We say a graph $\Gamma$ is $d$-dimensional with surface parameter $\sigma > 0$ if
	\begin{equation}
		\sup_{x \in \Gamma} \sup_{\ell \geq 1} \frac{| \partial(x[\ell]) |}{\ell^{d-1}} \leq \sigma
		\label{eq:dDimGraph}
	\end{equation}
	holds.  
\end{defn}
Note that this implies $|x[\ell]|\leq \sigma \ell^d$, and $|X[\ell]|\leq |X|\sigma \ell^d$ for all $X\Subset \Gamma$. Here, and in the following, we write $X\Subset \Gamma$ for a finite subset $X$ of $\Gamma$.

Our main concern in this article is the study of the dynamics of bosonic lattice gases with bounded local particle number but with infinitely many particles in the whole space. The appropriate mathematical formalism to study this problem is that of quasi-local algebras, which we introduce later. We will mostly work in a concrete realisation of these algebras on the bosonic Fock space, which is defined by
\begin{equation}
	\mathscr{F} = \mathscr{F}(\ell^2(\Gamma)) := \bigoplus_{n=0}^{\infty} \ell^2(\Gamma)^{\otimes_{\mathrm{s}}^n}.
	\label{eq:localFockSpace}
\end{equation}
Here, $\otimes_{\mathrm{s}}^n$ denotes the $n$-fold symmetric tensor product and $\ell^2(\Gamma)^0 := \mathbb{C}$. 
The Fock space satisfies the exponential identity 
$\mathscr{F}(\mathcal{H}_1 \oplus \mathcal{H}_2) \cong \mathscr{F}(\mathcal{H}_1) \otimes \mathscr{F}(\mathcal{H}_2)$ with two Hilbert spaces $\mathcal{H}_1, \mathcal{H}_2$ and unitary equivalence denoted by $\cong$. In particular, for any set $\Lambda \subset \Gamma$ we have, denoting $\mathscr{F}(\ell^2(\Lambda)) =\mathscr{F}_\Lambda$,
\begin{equation}
    \mathscr{F} \cong \mathscr{F}_\Lambda \otimes \mathscr{F}_{\Lambda^{\mathrm{c}}},
    \label{eq:ExponentialIdentity}
\end{equation}
with $\Lambda^{\mathrm{c}}=\Gamma\setminus\Lambda$. We will, by a slight abuse of notation, denote operators and states on the two spaces by the same symbol. 

For $f\in \ell^2(\Gamma)$, we denote by $a^*(f)$ and $a(f)$ the usual, densely defined, bosonic creation and annihilation operators on $\mathscr{F}$. They satisfy the canonical commutation relations (CCR)
\begin{equation}
	[a(f),a^*(g)] = \langle f, g \rangle, \quad [a(f),a(g)] = 0 = [a^*(f),a^*(g)].
	\label{eq:CCR}
\end{equation}
For $x\in \Gamma$ let $\delta_x(y) = \delta_{x,y}$. By $a^*_x=a(\delta_x)$ and $a_x=a^*(\delta_x)$ we denote the creation and the annihilation operator of a particle at site $x$. Moreover, $N_x=a_x^* a_x$ and
\begin{equation}
	N = \sum_{x \in \Gamma} N_x
	\label{eq:numberOperator}
\end{equation} 
denote the self-adjoint number operators. The general form of the Hamiltonian with two-boson interaction  $v: \Gamma \times \Gamma \to \R$ on a subset $\Lambda\subseteq \Gamma$ reads
\begin{align}
	H_\Lambda &= -\sum_{\myfrac{x,y \in \Lambda : }{ d(x,y) = 1}} (a_x^* a_y + \mathrm{h.c.} ) + \sum_{x,y \in \Lambda} v(x,y) a_x^* a_y^* a_x a_y
	\label{eq:localHamiltonian} \\
	&=: \sum_{\myfrac{x,y \in \Lambda :} { d(x,y) = 1}} T_{xy} + \sum_{x,y \in \Lambda} V_{xy}
	=T_\Lambda+V_\Lambda \notag.
\end{align}
We assume that $v$ satisfies $v(x,y)=v(y,x)$ and $v(x,y) \to 0$ if $d(x,y) \to \infty$, and we denote $H:=H_\Gamma$. 
The Hamiltonian in $H$ is essentially self-adjoint on the linear subspace of $\mathscr{F}$ consisting of finite linear combinations of eigenvectors of the number operator $N$. We note that our results can likely be generalized to include position-dependent hopping terms and many-body interactions. For the sake of simplicity we prefer to work with the Hamiltonian in \eqref{eq:localHamiltonian}.
However, it is important for us that $H$ preserves the number of particles.

\subsection{Algebras}
We now introduce the relevant algebras for our discussion of the dynamics.
All of them are realised as subalgebras of $\mathcal{L}(\mathscr{F})$, the bounded operators on Fock space.
Accordingly, the time evolution $A\mapsto \ue^{\ui Ht}A\ue^{-\ui Ht}$ is always well defined, and the question whether this is an evolution on an algebra $\mathcal{A}\subset \mathcal{L}(\mathscr{F})$ is reduced to the question of invariance of $\mathcal{A}$ under the above unitary conjugation.

For $f \in \ell^2(\Gamma)$ we define the self-adjoint field operator $\phi(f)=a(f) + a^*(f)$.
The resolvent algebra $\mathcal{R}_X$, introduced in~\cite{BG2007, BG2008}, over a set $X\subseteq \Gamma$ is the $C^*$-algebra generated by the identity and the operators
\begin{equation}
	R(f;z) := (\phi(f) - z)^{-1} \quad \text{ with } \quad f \in \ell^2_\mathrm{fin}(X),\, \Im z\neq 0.
	\label{eq:resolvent}
\end{equation}
Here, $\ell^2_\mathrm{fin}(X) \subseteq \ell^2(X)$ denotes the linear subspace of all functions with finite support. We highlight that $X$ may be infinite. The algebra $\mathcal{R}_X$ is not separable~\cite[Thm. 5.3]{BG2008}.

We will restrict ourselves to observables that commute with the number operator, as does the Hamiltonian.
The subalgebra $\mathcal{R}^{\mathrm{inv}}_X\subseteq \mathcal{R}_X$ consists of those  $A\in \mathcal{R}_X$, for which $\ue^{\ui t N}A\ue^{-\ui t N}=A$ holds for all $t\in \R$. Note that the above conjugation is induced by the gauge transformation $f\mapsto \ue^{\ui t}f$ for $f \in \ell^2(\Gamma)$. For $X=\Gamma$, we denote $\mathcal{R}=\mathcal{R}_{\Gamma}$, $\mathcal{R}^{\mathrm{inv}}=\mathcal{R}^{\mathrm{inv}}_\Gamma$. In the following we refer to $\mathcal{R}^{\mathrm{inv}}_X$ as the gauge-invariant resolvent algebra. 

The Buchholz algebra, introduced in~\cite{buchholz2018} for the case of bosonic systems in $\mathbb{R}^d$,
is an extension of the gauge-invariant resolvent algebra. As we will show below, this algebra is left invariant by the dynamics generated by $H$.

\begin{defn}(Buchholz algebra)
For $X\subseteq \Gamma$, the Buchholz algebra $\mathcal{B}_X$ consists of all bounded operators $A \in \mathcal{L}(\mathscr{F})$, such that for any $n \in \mathbb{N}_0$ there exists $R \in \mathcal{R}^{\mathrm{inv}}_X$ with
	\begin{equation*}
		A \1_{N\leq n}  = R\1_{N\leq n}.
	\end{equation*}
	We denote $\mathcal{B}=\mathcal{B}_\Gamma$. 
\end{defn}
We highlight that the Buchholz algebra is a subalgebra of $\mathcal{L}(\mathscr{F})$. In particular, the norm on $\mathcal{B}_X$ is that on $\mathcal{L}(\mathscr{F})$. Using this, it is straightforward to check that $\mathcal{B}_X$ is a $C^*$-algebra. 

By definition, the difference between the algebras $\mathcal{B}$ and $\mathcal{R}^{\mathrm{inv}}$ lies only in the behavior for $n\to \infty$. For example, one easily checks that $\mathcal{R}^{\mathrm{inv}}$ contains the resolvents $(N_x+z)^{-1}$, and thus, by the Stone-Weierstrass theorem (since also $\1\in \mathcal{R}^{\mathrm{inv}}$), all functions $f(N_x)$ that admit a limit for $N_x\to \infty$. On the other hand, $\mathcal{B}$ clearly contains all operators of the form $f(N_x)$ with bounded functions $f$ that may not have a limit for $N_x \to \infty$. We refer to \cite[p.~961]{buchholz2018}, where this example has also been discussed. 

The structure of the above algebras can be made more explicit. Let $\mathscr{F}_n=\1_{N=n} \mathscr{F}$ be the closed linear subspace of $\mathscr{F}$ with exactly $n$ particles, and denote by $\mathcal{K}_n$ the algebra of compact operators on $\mathscr{F}_n$. Then, define the algebra
\begin{equation}
	\mathcal{C}_{m,n} = \underbrace{ \mathds{1}_{\ell^2(\Gamma)} \otimes_{\mathrm{s}} ... \otimes_\mathrm{s} \1_{\ell^2(\Gamma)}  }_{n-m \text{ times }} \otimes_{\mathrm{s}} \mathcal{K}_m
	\label{eq:Calgebra}
\end{equation}
acting on $\mathscr{F}_n$ with $n \geq m$. It consists of operators that act on $m$ particles in a compact way and on $n-m$ particle as the identity. By $\mathfrak{K}_n$ we denote the $C^*$-algebra generated by $\mathcal{C}_{m,n}$ with $0 \leq m \leq n$, which is simply the direct sum over $m=0,\dots, n$.
The following lemma has been proved in \cite[Lemma 3.3]{buchholz2018} for the case of particles moving in $\mathbb{R}^d$. Its proof in our setting is almost literally the same, and therefore omitted.

\begin{lm}\label{lem:structureResolventAlgebra}
For $n \in \mathbb{N}_0$ we have $\cR^\mathrm{inv}|_{\mathscr{F}_n} = \mathfrak{K}_n$.
\end{lm}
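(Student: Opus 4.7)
The plan is to prove the two inclusions $\mathcal{R}^{\mathrm{inv}}|_{\mathscr{F}_n} \subseteq \mathfrak{K}_n$ and $\mathfrak{K}_n \subseteq \mathcal{R}^{\mathrm{inv}}|_{\mathscr{F}_n}$ separately, adapting Buchholz's continuum argument in \cite{buchholz2018} to the lattice. The adaptation is essentially cosmetic because the role of $L^2(\mathbb{R}^d)$ test functions with compact support is played by vectors in $\ell^2_{\mathrm{fin}}(\Gamma)$, while the pertinent analytic structure remains identical.

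For the inclusion $\mathcal{R}^{\mathrm{inv}}|_{\mathscr{F}_n} \subseteq \mathfrak{K}_n$, I would exploit that on $\mathscr{F}_n$ the number operator $N$ is bounded by $n$, so each field operator $\phi(f)$ with $f \in \ell^2_{\mathrm{fin}}(\Gamma)$ restricts to a bounded operator of norm $O(\sqrt{n+1}\,\|f\|)$. Hence $R(f;z)|_{\mathscr{F}_n}$ admits a norm-convergent Neumann expansion in powers of $\phi(f)/z$ for sufficiently large $|z|$, which extends to all $z$ with $\Im z \neq 0$ via the first resolvent identity. This identifies $\mathcal{R}^{\mathrm{inv}}|_{\mathscr{F}_n}$ as the norm closure of gauge-invariant polynomials in $a(f_i)$ and $a^*(g_j)$ with kernels in $\ell^2_{\mathrm{fin}}(\Gamma)$. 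By gauge invariance, each such monomial has an equal number $k$ of creations and annihilations, and since such an operator annihilates $\mathscr{F}_n$ when $k > n$, we may assume $k \leq n$. On $\mathscr{F}_n$ it acts, up to symmetrization, as a finite-rank operator on $k$ particles tensored with the identity on the remaining $n-k$, i.e.\ as an element of $\mathcal{C}_{k,n} \subseteq \mathfrak{K}_n$. Closedness of $\mathfrak{K}_n$ in the operator norm finishes the inclusion.

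For the reverse inclusion $\mathfrak{K}_n \subseteq \mathcal{R}^{\mathrm{inv}}|_{\mathscr{F}_n}$, I would use that $\mathfrak{K}_n$ is the direct sum of the $\mathcal{C}_{m,n}$ and that finite-rank operators of the symmetrized form $P_{\mathrm{sym}}(K \otimes \mathds{1}^{\otimes(n-m)}) P_{\mathrm{sym}}$ with rank-one $K$ are dense in each $\mathcal{C}_{m,n}$. Such generating operators on $\mathscr{F}_n$ arise (up to combinatorial factors) from polynomials of the form $a^*(f_1)\cdots a^*(f_m) a(g_m)\cdots a(g_1)$ with $f_i,g_j \in \ell^2_{\mathrm{fin}}(\Gamma)$. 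To produce these polynomials as restrictions of elements of $\mathcal{R}^{\mathrm{inv}}$, I would invert the Neumann expansion used in the first direction: suitable linear combinations — in effect, contour integrals — of resolvents $R(f;z)$ extract individual powers $\phi(f)^k$, and a gauge projection of the form $A \mapsto \frac{1}{2\pi}\int_0^{2\pi} \mathrm{e}^{\mathrm{i}\theta N} A\, \mathrm{e}^{-\mathrm{i}\theta N}\,\mathrm{d}\theta$ retains only the gauge-invariant monomials. Combining this for the various $f_i, g_j$ yields the desired building blocks inside $\mathcal{R}^{\mathrm{inv}}|_{\mathscr{F}_n}$, after which density of finite-rank operators in the compacts gives $\mathcal{C}_{m,n} \subseteq \mathcal{R}^{\mathrm{inv}}|_{\mathscr{F}_n}$ for every $0 \leq m \leq n$.

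The main obstacle I expect is this second direction: explicitly constructing the required polynomial operators as norm limits within $\mathcal{R}^{\mathrm{inv}}$, while carefully tracking symmetrization, the $n$-dependent combinatorial factors on $\mathscr{F}_n$, and the gauge-invariance condition (so that the limits lie in $\mathcal{R}^{\mathrm{inv}}$ and not merely in $\mathcal{R}$). By contrast, the transition from $\mathbb{R}^d$ to $\Gamma$ is benign, amounting only to replacing $L^2$ smearing functions with vectors in $\ell^2(\Gamma)$; this justifies the statement that the proof proceeds almost literally as in \cite{buchholz2018}.
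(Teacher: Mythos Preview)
Your proposal is correct and takes essentially the same approach as the paper, which simply defers to Buchholz's continuum argument \cite[Lemma~3.3]{buchholz2018} and notes that the lattice adaptation is almost verbatim. One cosmetic imprecision worth fixing: the operators $\phi(f)$ and $R(f;z)$ do not preserve $\mathscr{F}_n$, so in the first inclusion you should phrase the Neumann expansion in terms of compressions $\mathds{1}_{N=n}(\cdot)\mathds{1}_{N=n}$ (which agree with restriction only for gauge-invariant elements) rather than ``$R(f;z)|_{\mathscr{F}_n}$'', but this does not affect the substance of the argument.
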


The restrictions of $A\in \cR^\mathrm{inv}$ to $\mathscr{F}_n$ are, however, not independent. Indeed, the restriction to $\mathscr{F}_n$ uniquely determines that to $\mathscr{F}_m$ for $m<n$ via the map
\begin{equation}
  \begin{aligned}
\kappa_n:\mathfrak{K}_n &\to \mathfrak{K}_{n-1} \label{eq:defKappaN} \\
K_n+ \sum_{m=0}^{n-1}\1_{\mathscr{F}_{n-m}} \otimes_{\mathrm{s}} K_m &\mapsto   n^{-1} K_{n-1} + n^{-1}\sum_{m=0}^{n-2} (n-m)   \1_{\mathscr{F}_{n-m-1}} \otimes_{\mathrm{s}}  K_m, \nonumber
 \end{aligned}  
\end{equation}
which simply deletes the element of $\mathcal{K}_n$ and otherwise drops one factor of the identity, adjusting the prefactors to account for symmetrization. 
The relation $\kappa_n$ can be motivated by considering a product state of $n$ particles and sending one of them to infinity, cf.~\cite[Lem.~3.4]{buchholz2018}.
This provides a complete characterization of $\mathcal{B}$ as the inverse limit of $\mathcal{K}_n$, see~\cite[Thm.~3.5]{buchholz2018}.

\begin{prop}\label{prop:Buchholz}
The algebra $\mathcal{B}$ is given by all bounded sequences $n\mapsto A_n\in \mathfrak{K}_n$ that are consistent in the sense that $\kappa_{n}(A_n)=A_{n-1}$, endowed with the norm $\|A\|=\sup_{n\in \N_0} \|A_n\|$.
\end{prop}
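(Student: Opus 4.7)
The plan is to define a $*$-isomorphism $\Phi:\mathcal{B}\to\mathcal{I}$, where $\mathcal{I}$ denotes the $C^*$-algebra of bounded consistent sequences $(A_n)_{n\in\N_0}$, by setting $\Phi(A)_n := A|_{\mathscr{F}_n}$. Every $A\in\mathcal{B}$ commutes with $N$, since the gauge-invariant $R$ in the defining relation preserves each $\mathscr{F}_k$, which forces the same for $A$ on $\mathscr{F}_k$ whenever $k\leq n$, and $n$ is arbitrary. The orthogonal decomposition $\mathscr{F}=\bigoplus_n\mathscr{F}_n$ then makes $\Phi$ automatically linear, multiplicative, $*$-preserving, injective, and isometric with $\|A\|=\sup_n\|A|_{\mathscr{F}_n}\|$. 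What remains is to verify that $\Phi(A)\in\mathcal{I}$ for $A\in\mathcal{B}$ and that $\Phi$ is surjective.

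For well-definedness, fix $A\in\mathcal{B}$ and $n\in\N_0$ and pick $R\in\mathcal{R}^{\mathrm{inv}}$ with $A\1_{N\leq n}=R\1_{N\leq n}$. Lemma~\ref{lem:structureResolventAlgebra} gives $A|_{\mathscr{F}_n}=R|_{\mathscr{F}_n}\in\mathfrak{K}_n$. For the consistency relation $\kappa_n(A_n)=A_{n-1}$, I would invoke the defining property of $\kappa_n$ recalled before the proposition, namely $\kappa_n(R|_{\mathscr{F}_n})=R|_{\mathscr{F}_{n-1}}$ for every $R\in\mathcal{R}^{\mathrm{inv}}$. Applying this to the $R$ above (which also agrees with $A$ on $\mathscr{F}_{n-1}$) yields $\kappa_n(A_n)=\kappa_n(R|_{\mathscr{F}_n})=R|_{\mathscr{F}_{n-1}}=A_{n-1}$.

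For surjectivity, given a bounded consistent sequence $(A_n)$ I would define $A\in\mathcal{L}(\mathscr{F})$ by $A|_{\mathscr{F}_n}:=A_n$, which is bounded since $\sup_n\|A_n\|<\infty$ and number-preserving by construction. To place $A$ inside $\mathcal{B}$, fix $m$ and use Lemma~\ref{lem:structureResolventAlgebra} to choose $R\in\mathcal{R}^{\mathrm{inv}}$ with $R|_{\mathscr{F}_m}=A_m$. Iterating the same compatibility property of $\kappa_n$ gives
\begin{equation*}
R|_{\mathscr{F}_k} = (\kappa_{k+1}\circ\cdots\circ\kappa_m)(A_m) = A_k = A|_{\mathscr{F}_k}, \qquad 0\leq k\leq m,
\end{equation*}
where the middle equality is consistency of the sequence. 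Hence $R\1_{N\leq m}=A\1_{N\leq m}$, which is precisely the defining condition for $A\in\mathcal{B}$.

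The main technical point, and the only one not amounting to purely formal inverse-limit bookkeeping, is the compatibility $\kappa_n(R|_{\mathscr{F}_n})=R|_{\mathscr{F}_{n-1}}$ for $R\in\mathcal{R}^{\mathrm{inv}}$, the lattice analogue of~\cite[Lem.~3.4]{buchholz2018}. I would verify it on the generators of $\mathcal{R}^{\mathrm{inv}}$: write $R$ as a polynomial in resolvents $R(f_j;z_j)$ with $f_j$ finitely supported, use the exponential identity~\eqref{eq:ExponentialIdentity} to tensor-decompose $\mathscr{F}$ so that all the $f_j$ are supported in one factor, and take the limit in which one of the $n$ particles is sent to a distant site in the complementary factor. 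The prefactors $n^{-1}(n-m)$ appearing in the definition of $\kappa_n$ then arise from counting which of the symmetrized particles plays the role of the one sent to infinity, and the map can be extended to all of $\mathcal{R}^{\mathrm{inv}}$ by continuity and the density of polynomials in resolvents.
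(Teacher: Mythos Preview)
The paper does not actually prove this proposition; it simply refers to \cite[Thm.~3.5]{buchholz2018}. Your proposal reconstructs essentially the same argument as in that reference: identify $\mathcal{B}$ with the inverse limit via restriction to the $n$-particle sectors, reduce both well-definedness and surjectivity to the compatibility relation $\kappa_n(R|_{\mathscr{F}_n})=R|_{\mathscr{F}_{n-1}}$ for $R\in\mathcal{R}^{\mathrm{inv}}$, and recognize that this compatibility is the lattice analogue of \cite[Lem.~3.4]{buchholz2018} (which the paper also cites without proof in the paragraph preceding the proposition). Your sketch of that lemma via sending one particle to infinity in a complementary tensor factor is likewise the argument of the reference, so there is no meaningful divergence between your approach and the one the paper defers to.
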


The resolvent and Buchholz algebras come with a natural local structure, that is, for $X\subseteq Y$ we have the inclusion $\mathcal{R}_X\subseteq  \mathcal{R}_Y$, $\mathcal{B}_X\subseteq  \mathcal{B}_Y$.
An algebra  with such a geometric structure is called quasi-local if every element can be approximated in norm by elements of subalgebras associated with finite sets $X \Subset \Gamma$. The resolvent algebra $\mathcal{R}$ is quasi local, since it is generated by resolvents $R(f;z)$ with functions $f$ of finite support.
The Buchholz algebra, however, is not quasi-local. This is illustrated by the following example.

\begin{exam}\label{ex:R(f)}
Consider a function $f\in \ell^2(\Gamma)\setminus \ell^2_\mathrm{fin}(\Gamma)$ and the bounded operator $A=(1+ a^*(f) a(f))^{-1}$. Of course, $f$ can be approximated in the norm of $\ell^2(\Gamma)$ by a sequence $(f_k)_{k \in \N}$ of functions with finite support, and the corresponding operators $A_k=(1+ a^*(f_k) a(f_k))^{-1}$ are elements of $\mathcal{R}^\mathrm{inv}\subset \mathcal{R}$. The restrictions to the $n$-particle space converge in operator norm, $A_k\vert_{\mathscr{F}_n}\to A\vert_{\mathscr{F}_n}$ and thus $A\vert_{\mathscr{F}_n}\in \mathfrak{K}_n$ because $\mathfrak{K}_n$ is closed. Since the consistency relations also pass to the limit by continuity, we conclude that $A\in \mathcal{B}$. 
However, without the restriction, $A_k$ converges to $A$ only in the strong operator topology, and not in norm.
Indeed, for $X=\supp f_k$ and a vector $\psi= \phi \otimes \Omega_{X}$ (cf.~\eqref{eq:ExponentialIdentity}) with no particles inside $X$, $\langle \psi, A_k \psi \rangle$ equals one, while $\langle \psi, A \psi \rangle$ can be arbitrarily small.
The algebra $\mathcal{B}$ is quasi-local when restricted to the $n$-particle sector (compare~\cite[p.~970]{buchholz2018}), but the approximation by localized operators is not uniform in $n$.
\end{exam}

\subsection{States}

A state on a $C^*$-algebra (with identity)  $\mathcal{A}\subseteq \mathcal{L}(\mathscr{F})$ is a continuous, linear functional $\gamma :\mathcal{A}\to \C$ with $\gamma(A^* A) \geq 0$ for all $A \in \mathcal{A}$ and $\gamma(\1)=1$. A state $\gamma$ is called normal if  there exists a trace-class operator $\rho$ on $\mathscr{F}$ so that $\gamma(A)=\Tr{\rho A}$ (compare~\cite[Thm.~2.4.21]{BraRob1}).
Normal states have (almost-surely) a finite particle number, since the probability of having $n$ particles equals $\gamma(\1_{N=n})$ and we have $\sum_n \gamma(\1_{N=n})=\Tr{\rho}= 1$.
In view of Lemma~\ref{lem:loc_normal} below (see also \cite[Thm. 5.2.14]{BraRob2}), normality is essentially equivalent to $\gamma$ having finite particle number in this sense.

As mentioned before, we are not interested in states with a finite particle number but rather in states with infinitely many particles and a finite density. The algebras $\mathcal{R}$, $\mathcal{B}$ admit states with infinitely many particles locally and globally. For example, $\gamma(f(N_x))=\lim_{n\to \infty} f(n)$ defines a state on the closed subalgebra of $\mathcal{R}_{x}$ of functions of the particle number operator $N_x$ that have a limit. Using the Hahn--Banach theorem and the fact that any continuous linear functional $\phi$ on a unital $C^*$-algebra  with $\phi(\mathds{1}) = \Vert \phi \Vert$ is a state, see e.g. \cite[Proposition~2.3.11]{BraRob1}, $\gamma$ can be extended to a state on $\mathcal{R}$ and $\mathcal{B}$. In the following, we will exclude such states with a condition on the moments of the local particle number. For a state $\gamma$ on $\mathcal{B}$ and $x\in \Gamma$, $p> 0$, we define the $p$-th moment of $N_x$ by
\begin{equation}\label{eq:momentBound}
    \gamma( (1+N_x)^p) := \lim_{\lambda \to \infty} \gamma\big((1+\lambda^{-1} N_x)^{-p}(1+ N_x)^p \big) \in \R \cup\{\infty\}.
\end{equation}
This is well defined since $\{ (1+\lambda^{-1} N_x)^{-p} N_x^p \}_{\lambda \in \R_+}$ is an increasing family of elements of $\mathcal{B}$, and hence the expectation with respect to $\gamma$ is a monotone increasing function of $\lambda$.

The moment bound in \eqref{eq:momentBound} implies that the state $\gamma$ can be \textit{locally} represented by a density matrix. The precise statement is captured in Lemma~\ref{lem:loc_normal} below. States whose restriction to $\mathcal{R}^\mathrm{inv}_X$, $X\Subset \Gamma$ are represented by a density operator are called locally normal.

\begin{lm}\label{lem:loc_normal}
Let $X\Subset \Gamma$ be a finite subset and assume $\gamma$ is a state on $\mathcal{R}^\mathrm{inv}_X$ that satisfies $\gamma((1+N_x)^p)<\infty$ for some $p>0$ and all $x\in X$.
Then there exists a trace-class operator $\rho_X$ on $\mathscr{F}_X$, so that for all $A\in \mathcal{R}^\mathrm{inv}_X$ we have $\gamma(A)=\tr{\rho_X A}$. Moreover, the operator $(1+N_x)^{p/2} \rho_X^{1/2}$, $x \in X$ is Hilbert--Schmidt.   
\end{lm}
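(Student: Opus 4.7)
The plan is to build $\rho_X$ sector-by-sector in $N_X$, exploiting that $X\Subset \Gamma$ makes each $N_X$-eigenspace finite-dimensional. First, $\mathcal{R}^\mathrm{inv}_X$ contains the resolvents $(N_x+z)^{-1}$ together with $\1$, so by Stone--Weierstrass every bounded continuous function of $\{N_x\}_{x\in X}$ that admits a limit at infinity lies in $\mathcal{R}^\mathrm{inv}_X$; in particular the spectral projections $P_n := \1_{N_X=n}$ and the truncations $P_{\leq M} := \sum_{n \leq M} P_n$. By the same argument as in Lemma~\ref{lem:structureResolventAlgebra}, $\mathcal{R}^\mathrm{inv}_X\vert_{\mathscr{F}_{n,X}} = \mathcal{L}(\mathscr{F}_{n,X})$, where $\mathscr{F}_{n,X} := \1_{N_X=n}\mathscr{F}_X$ is finite-dimensional. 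For each $n$ one then defines a positive linear functional $\gamma_n$ on the matrix algebra $\mathcal{L}(\mathscr{F}_{n,X})$ by $\gamma_n(B) := \gamma(P_n \tilde B P_n)$ for any lift $\tilde B \in \mathcal{R}^\mathrm{inv}_X$; this is well-defined because $P_n \tilde B P_n$ depends on $\tilde B$ only through $\tilde B\vert_{\mathscr{F}_{n,X}} = B$. Finite-dimensional duality yields a positive matrix $\rho_n$ on $\mathscr{F}_{n,X}$ with $\gamma_n(\cdot) = \tr{\rho_n\,\cdot\,}$ and $\tr{\rho_n} = \gamma(P_n)$.

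Tightness comes from the moment assumption. Restricted to the commutative $C^*$-subalgebra generated by $\1$ and $\{(N_x+z)^{-1}\}_{x \in X}$, $\gamma$ is represented by a Riesz--Markov probability measure $\mu$ on the joint spectrum of $\{N_x\}_{x\in X}$ (which includes ``points at infinity''). The finiteness of $\gamma((1+N_x)^p) = \lim_{\lambda \to \infty}\gamma((1+\lambda^{-1}N_x)^{-p}(1+N_x)^p)$, combined with monotone convergence under $\mu$, forces $\mu$ to assign zero mass to the set where $N_x = \infty$ (else the limit diverges as $\lambda^p$) and finite $p$-th moment in $N_x$. This gives the Chebyshev-type bound $\gamma(\1_{N_x > M}) \leq (1+M)^{-p}\gamma((1+N_x)^p)$, and a union bound over $x \in X$ yields $\gamma(\1 - P_{\leq M}) \to 0$. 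Hence $\sum_n \tr{\rho_n} = 1$ and $\rho_X := \bigoplus_n \rho_n$ is a density matrix on $\mathscr{F}_X$.

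To verify $\gamma(A) = \tr{\rho_X A}$ for every $A \in \mathcal{R}^\mathrm{inv}_X$, use that such $A$ commutes with $N_X$ (by gauge invariance together with locality of $A$ in $X$) and hence with $P_{\leq M}$. Writing $\gamma(A) = \gamma(P_{\leq M} A P_{\leq M}) + \gamma((\1-P_{\leq M}) A (\1-P_{\leq M}))$, the first summand equals $\sum_{n \leq M}\tr{\rho_n A\vert_{\mathscr{F}_{n,X}}}$ by construction of the $\rho_n$, while Cauchy--Schwarz bounds the second by $\|A\|\gamma(\1-P_{\leq M}) \to 0$. Letting $M \to \infty$ gives $\gamma(A) = \tr{\rho_X A}$. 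For the Hilbert--Schmidt claim, which is equivalent to $\tr{\rho_X(1+N_x)^p} < \infty$, apply this trace formula to the bounded positive operators $(1+\lambda^{-1}N_x)^{-p}(1+N_x)^p$; monotone convergence for the trace against the positive operator $\rho_X$ as $\lambda \to \infty$ then identifies the limit with $\gamma((1+N_x)^p)$, which is finite by hypothesis.

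The main difficulty is that $\gamma$ is only norm-continuous on $\mathcal{R}^\mathrm{inv}_X$, while the approximations $P_{\leq M} \to \1$ converge only strongly, so a naive monotone convergence argument is unavailable. This is circumvented at two points: first by the Riesz--Markov decomposition on the abelian subalgebra generated by the $N_x$ (which yields honest measure-theoretic monotone convergence), and second by gauge invariance, which eliminates all off-diagonal cross-terms $P_m A P_n$ with $m \neq n$ and reduces the global identification of $\gamma$ with $\tr{\rho_X\,\cdot\,}$ to a collection of finite-dimensional per-sector facts.
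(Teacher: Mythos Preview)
Your proof is correct and follows essentially the same strategy as the paper's: establish a Chebyshev-type tail bound $\gamma(\1_{N_x>M})\leq (1+M)^{-p}\gamma((1+N_x)^p)$ from the moment hypothesis, use gauge invariance to commute $A$ with the number projections and reduce to a sector-by-sector construction of $\rho_X$, control the tail by Cauchy--Schwarz, and finish the Hilbert--Schmidt claim by monotone convergence. The paper carries this out explicitly only for $|X|=1$ (where each $N_x$-sector is one-dimensional and $\rho$ is simply diagonal), deferring the general case as straightforward; you supply that generalization by invoking the analogue of Lemma~\ref{lem:structureResolventAlgebra} to get surjectivity of the restriction to each finite-dimensional sector $\mathscr{F}_{n,X}$ and then using finite-dimensional duality. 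One small remark: the Riesz--Markov detour is not needed for the tail bound, since the operator inequality $\1_{N_x>k}\leq f_\lambda(N_x)/f_\lambda(k)$ with $f_\lambda(t)=(1+t)^p(1+\lambda^{-1}t)^{-p}$ already holds in $\mathcal{R}^\mathrm{inv}_X$ and gives the Chebyshev estimate directly upon applying $\gamma$ and letting $\lambda\to\infty$, as the paper does.
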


\begin{rmk}
    The fact that \eqref{eq:momentBound} implies the above Hilbert--Schmidt condition for $\rho_X$ allows us to define the expectation of finite sums of unbounded operators whose absolute value is bounded by a constant times a finite sum of $(1+N_x)^p$, $x \in \Gamma$.
\end{rmk}

\begin{proof}[Proof of Lemma~\ref{lem:loc_normal}]
Here, we provide the proof for $X=\{x\}$ for the sake of a lighter notation, as the generalization is straightforward. For fixed $\lambda >1$, $p > 0$ we define the monotone increasing and bounded function $f_{\lambda}(x) = (1+x)^p/(1+ \lambda^{-1} x)^p$, $x \geq 0$. We have
\begin{equation}
    \1_{N_x>k} \leq \frac{f_{\lambda}(N_x)}{f_{\lambda}(k)},
\end{equation}
and hence
\begin{equation}\label{eq:locallyNormal1}
\gamma(\1_{N_x>k}) \leq \lim_{\lambda \to \infty} \gamma\left(  \frac{f_{\lambda}(N_x)}{f_{\lambda}(k)} \right) \leq \frac{ \gamma( (1+N_x)^p) }{(1+k)^p}.
 \end{equation}
In particular, we have
 \begin{equation}\label{eq:number-sum}
1 = \gamma(\1) = \sum_{n=0}^k \gamma(\1_{N_x=n}) +      \gamma(\1_{N_x>k}) = \sum_{n=0}^\infty \gamma(\1_{N_x=n}),
 \end{equation}
and hence $\gamma$ has almost surely finitely many particles at $x \in X$. We now use this information to construct a density operator $\rho$ for $\gamma$. 
 
 Applications of the Cauchy-Schwarz inequality for $(A,B)\mapsto\gamma(A^*B)$ and \eqref{eq:locallyNormal1} show 
 \begin{equation}
|\gamma(\1_{N_x>k}A)| \leq (1+k)^{-p/2} \gamma((1+N_x)^p)^{1/2}\gamma(A^*A)^{1/2} .
 \end{equation}
 Any element $A\in \mathcal{R}^\mathrm{inv}_x$ commutes with $N_x$, which allows us to write
\begin{equation}\label{eq:LocallyNormal1}
\gamma(A)=\sum_{n=0}^\infty \gamma\big(\1_{N_x=n}A \big)=\sum_{n=0}^\infty \gamma\big(\1_{N_x=n}A \1_{N_x=n}\big) .  
\end{equation}
We denote $\psi_n = (1/\sqrt{n!}) (a_x^*)^{n} \Omega_x$ with the vacuum vector $\Omega_x\in \mathscr{F}_x$ and note that $ | \psi_n \rangle \langle \psi_n | \otimes \1_{\mathscr{F}_{\Gamma \backslash\{ x\} } } \cong \1_{N_x = n} \in \mathcal{R}^\mathrm{inv}_x$, where $| \psi_n \rangle \langle \psi_n |$ denotes the orthogonal projection onto $\psi_n$. The unitary equivalence in this identity is the one in \eqref{eq:ExponentialIdentity}. Using \eqref{eq:LocallyNormal1}, we write
\begin{equation}
\gamma(A)=\sum_{n=0}^\infty \rho_{nn}a_{nn} \quad \text{ with } \quad \rho_{nn} = \gamma( | \psi_n \rangle \langle \psi_n | ) \quad \text{ and } \quad a_{nn} = \langle \psi_n, A \psi_n \rangle.
\end{equation}

The numbers $\rho_{nn}$ define an operator $\rho$ on $\mathscr{F}_x$ by $\langle \psi_n, \rho \psi_m \rangle= \rho_{nn}\delta_{nm}$. This operator is diagonal in the particle number basis, with nonnegative entries, and thus a positive definite operator.
An application of \eqref{eq:number-sum} shows $\sum_{n=0}^\infty\rho_{nn}=1$. Since $\rho$ is positive definite, this implies it is trace-class with trace equal to one.

It remains to prove the Hilbert-Schmidt condition.
We have, with $f_\lambda$ as above, by monotone convergence
\begin{equation}
\gamma((1+N_x)^p)= \lim_{\lambda \to \infty} \sum_{n=0}^\infty \rho_{nn} f_\lambda(n) = \sum_{n=0}^\infty \rho_{nn} (n+1)^p= \Tr{ \rho^{1/2} (1+N_x)^p \rho^{1/2}},
\end{equation}
which is finite by assumption.
This proves the claim.
\end{proof}

\section{Existence of dynamics}
\label{sec:existenceOfDynamics}
The goal of this section is to prove the existence of dynamics on $\mathcal{B}$. 
The following proposition is the analogue of \cite[Thm~4.6]{buchholz2018}, on a lattice. Since, in contrast to the setting of particles moving in $\mathbb{R}^d$, the Laplacian is a bounded operator, the proof is slightly simpler.
We do not expect any proper subalgebra of $\mathcal{R}^\mathrm{inv} \subset \mathcal{A} \subsetneq \mathcal{B}$ to be invariant under time evolution generated by generic Hamiltonians of the form~\eqref{eq:localHamiltonian} (for a given Hamiltonian $H$ one can find such an algebra, depending on $H$, cf.~\cite[Thm. 4.6]{buchholz2018}).
The reason is that commutators with the pair potential $V$, which appear in the expansion of the dynamics, will add operators acting on more and more particles, while the free time evolution will spread these over larger and larger sets (see Example~\ref{ex:continuity} below). This leads to a loss of locality in exactly the way described in Example~\ref{ex:R(f)}. 

\begin{prop}\label{prop:dynamics}
	Let $H$ be defined as in \eqref{eq:localHamiltonian} with an interaction potential $v:\Gamma\times \Gamma \to \mathbb{R}$ that satisfies $\lim_{d(x,y) \to \infty} v(x,y) = 0$. For all $t\in \R$ the map
	\begin{equation*}
		\tau_t(A)=\ue^{\ui Ht}A\ue^{-\ui Ht}
		\label{eq:HeisenbergDynamics}
	\end{equation*}
	defines a $*$-automorphism on $\mathcal{B}$.
\end{prop}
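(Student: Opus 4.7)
The plan is to use the inverse-limit characterization of $\mathcal{B}$ from Proposition~\ref{prop:Buchholz}: an operator $A \in \mathcal{L}(\mathscr{F})$ lies in $\mathcal{B}$ iff its restrictions $A_n := A|_{\mathscr{F}_n}$ belong to $\mathfrak{K}_n$ and satisfy $\kappa_n(A_n) = A_{n-1}$. Since $H$ commutes with $N$, $\tau_t$ preserves each $\mathscr{F}_n$ and $\tau_t(A)|_{\mathscr{F}_n} = \mathrm{e}^{\mathrm{i} H_n t} A_n \mathrm{e}^{-\mathrm{i} H_n t}$ with $H_n := H|_{\mathscr{F}_n}$. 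Crucially, $H_n$ is \emph{bounded}: the kinetic term $T_n$ is the second quantization of the one-body adjacency operator whose norm is controlled by the degree of $\Gamma$ (bounded in terms of $\sigma$ via Definition~\ref{def:dDimGraph}), giving $\|T_n\| \leq n\sigma$; and $\|V_n\| \leq \|v\|_\infty n(n-1)$. Since $\tau_t$ is already a $*$-automorphism of $\mathcal{L}(\mathscr{F})$ with inverse $\tau_{-t}$ satisfying the same hypotheses, it suffices to show $\tau_t(\mathcal{B}) \subseteq \mathcal{B}$, which reduces to (a) $\mathrm{e}^{\mathrm{i} H_n t} A_n \mathrm{e}^{-\mathrm{i} H_n t} \in \mathfrak{K}_n$ for every $n$, and (b) compatibility with $\kappa_n$.

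For (a), boundedness of $H_n$ yields the norm-convergent Dyson series
\begin{equation*}
\mathrm{e}^{\mathrm{i} H_n t} A_n \mathrm{e}^{-\mathrm{i} H_n t} = \sum_{k \geq 0} \frac{(\mathrm{i} t)^k}{k!}\,\mathrm{ad}_{H_n}^k(A_n),
\end{equation*}
so since $\mathfrak{K}_n$ is a norm-closed $C^*$-subalgebra, it suffices to show that $\mathrm{ad}_{H_n}$ preserves $\mathfrak{K}_n$. Using $\mathfrak{K}_n = \bigoplus_{m=0}^n \mathcal{C}_{m,n}$ and $H_n = T_n + V_n$, I would verify $[H_n, A] \in \mathfrak{K}_n$ on each generator $A = \mathds{1}^{\otimes_{\mathrm{s}}(n-m)} \otimes_{\mathrm{s}} K$ with $K \in \mathcal{K}_m$. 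For the kinetic term $T_n = \sum_{i=1}^n \tau_i$, only the particles acted on by $K$ contribute to the commutator, and the result involves $[\tau, K]$, which stays compact because compacts form a two-sided ideal; hence $[T_n, A] \in \mathcal{C}_{m,n}$.

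The interaction term is the main obstacle. In the prototype $n=2$, $m=1$, $A = \mathds{1}\otimes K$, a direct computation gives
\begin{equation*}
[V_2, \mathds{1}\otimes K] = \sum_{x\in\Gamma} |\delta_x\rangle\langle\delta_x| \otimes [M_{v(x,\cdot)}, K],
\end{equation*}
where $M_{v(x,\cdot)}$ is multiplication by $y \mapsto v(x,y)$ on $\ell^2(\Gamma)$. The decay $v(x,y) \to 0$ as $d(x,y) \to \infty$ implies $M_{v(x,\cdot)} \to 0$ strongly as $x \to \infty$, and compactness of $K$ upgrades this to $\|[M_{v(x,\cdot)}, K]\| \to 0$. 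Combined with compactness of each individual $[M_{v(x,\cdot)}, K]$, this makes $[V_2, \mathds{1}\otimes K]$ a norm limit of finite block-diagonal compact operators, hence itself compact, so it lies in $\mathcal{C}_{2,2} \subset \mathfrak{K}_2$. The general case reduces to the same mechanism after careful handling of the bosonic symmetrization: each two-body piece of $[V_n, A]$ either sits in some $\mathcal{C}_{m',n}$ by ideal considerations, or becomes norm-small once an uncoupled particle is sent to infinity, so the total lies in $\mathfrak{K}_n$.

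Finally, (b) will follow from the intertwining $\kappa_n \circ \mathrm{ad}_{H_n} = \mathrm{ad}_{H_{n-1}} \circ \kappa_n$ on $\mathfrak{K}_n$, propagated through the Dyson series by the continuity of $\kappa_n$. Since $\kappa_n$ algebraically implements ``removing a particle at infinity'' (cf.\ \cite[Lem.~3.4]{buchholz2018}), the removed particle's hopping contribution drops out (commutators with identity factors vanish, up to the coefficient bookkeeping built into $\kappa_n$), and its interaction with the surviving particles vanishes by $v(x,y) \to 0$. Combining (a) and (b) gives $\tau_t(A) \in \mathcal{B}$, and the automorphism property follows since $\tau_{-t}$ provides the two-sided inverse on $\mathcal{B}$.
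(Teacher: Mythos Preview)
Your approach is essentially the paper's: reduce via Proposition~\ref{prop:Buchholz} to showing that $\mathrm{e}^{\mathrm{i} H_n t}(\cdot)\mathrm{e}^{-\mathrm{i} H_n t}$ preserves $\mathfrak{K}_n$ and intertwines with $\kappa_n$, exploit boundedness of $H_n$ on the lattice to expand in a norm-convergent series, and use the decay of $v$ to show that the ``mixed'' commutator terms (one index in the compact block, one in the identity block) land in $\mathcal{C}_{m+1,n}$. The only structural difference is that you expand directly in powers of $\mathrm{ad}_{H_n}$, whereas the paper uses the interaction-picture iteration $F_n(f)(t)=\mathrm{e}^{\mathrm{i} T_n t}A_n\mathrm{e}^{-\mathrm{i} T_n t}+\mathrm{i}\int_0^t \mathrm{e}^{\mathrm{i} T_n(t-s)}[V_n,f(s)]\mathrm{e}^{-\mathrm{i} T_n(t-s)}\,\mathrm{d}s$ with $T_n$ as free evolution; this is cosmetic, since both series converge in norm and the free conjugation manifestly preserves each $\mathcal{C}_{m,n}$.

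One point deserves more care than you give it. Your step~(b), the intertwining $\kappa_n\circ\mathrm{ad}_{H_n}=\mathrm{ad}_{H_{n-1}}\circ\kappa_n$, is asserted via the heuristic ``$\kappa_n$ removes a particle at infinity, so its hopping and interactions drop out.'' This is the right intuition, but $\kappa_n$ is not a $*$-homomorphism, so the identity is not automatic and the combinatorial prefactors in its definition must be matched explicitly. The paper carries out precisely this check: for the mixed interaction terms one finds (after symmetrization) factors $m(n-m)\cdot\frac{n-m-1}{n}$ on one side and $\frac{n-m}{n}\cdot m(n-m-1)$ on the other, which agree. Your sketch would need this kind of bookkeeping to be complete, but there is no missing idea.
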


\begin{proof}
 The Hamiltonian $H$ is self-adjoint and $t \mapsto \ue^{-\ui Ht}$ therefore is a one-parameter unitary group acting on $\mathscr{F}$. For any $A\in \mathcal{B} \subseteq \mathcal{L}(\mathscr{F})$ we have $\ue^{\ui Ht} A \ue^{-\ui Ht} \in \mathcal{L}(\mathscr{F})$ and it is also clear that this map has the necessary algebraic properties to qualify as a $*$-isomoprhism. It therefore only remains to show that if $A \in \mathcal{B}$ then $\ue^{\ui Ht} A \ue^{-\ui Ht} \in \mathcal{B}$.
 
 In view of  Lemma~\ref{lem:structureResolventAlgebra}, we first show that $\ue^{\ui Ht} A \ue^{-\ui Ht}\vert_{\mathscr{F}_n}\in \mathfrak{K}_n$ holds for every $n \in \mathbb{N}_0$. For $A \in \mathcal{B}$ we denote $A_n = A \vert_{\mathscr{F}_n}$. The time evolution can be written as
 \begin{equation}
 	\ue^{\ui Ht} A \ue^{-\ui Ht} \mathds{1}_{N = n} = \ue^{\ui H_nt} A_n \ue^{-\ui H_nt} \mathds{1}_{N = n}
 	\label{eq:timeEvolutionAn}
 \end{equation}
 with the restriction 
 \begin{equation}
 	H_n= \sum_{j=1}^n -\Delta_{j} + \sum_{1 \leq i < j \leq n} v(x_i,x_j)=: T_n + V_n,
 	\label{eq:nParticleHamiltonian}
 \end{equation}
 of $H$ in \eqref{eq:localHamiltonian} to $\mathscr{F}_n$. Here, $\Delta_j$ is the discrete Laplacian 
 \begin{equation}
 	\Delta f(x) = \sum_{ \myfrac{y \in \Gamma }{ d(x,y) = 1 }} f(y)
 	\label{eq:discreteLaplacian}
 \end{equation}
 on $\Gamma$ acting on the $j$-th coordinate $x_j \in \Gamma$.
The time evolution $t\mapsto \ue^{\ui H_nt} A_n \ue^{-\ui H_nt}$ is then the unique fixed point of the map 
\begin{align}
 F_n:C(\R, \mathcal{L}(\mathscr{F}_n))& \to C(\R, \mathcal{L}(\mathscr{F}_n)) \\
 F_n(f)&=\Big(t\mapsto  e^{\mathrm{i} T_n t} A_n e^{-\mathrm{i} T_n t} + \ui \int_0^t  e^{\mathrm{i} T_n (t-s)} [V_n,f(s)]e^{-\mathrm{i} T_n (t-s)} \d s\Big). \notag
\end{align}
We highlight that $T_n$ is bounded, and hence the unitary group $e^{-\mathrm{i}T_n t}$ is norm continuous, so the integral is defined as a norm-convergent Riemann integral. 
Since $V_n$ is a bounded operator, too, we have
\begin{equation}
 \lim_{k\to \infty} F_n^k(0) = \ue^{\ui H_nt} A_n \ue^{-\ui H_nt}
\end{equation}
in operator norm, locally uniformly in $t$ (evaluating the $k$-th iteration $F_n^k(0)$ simply produces the first $k$ terms of the Dyson series).
It is thus sufficient to prove that $F_n$ leaves $C(\R, \mathfrak{K}_n)$ invariant.

All elements of the $C^*$-algebra $\mathfrak{K}_n$ can be written as 
finite linear combinations of elements of the form
\begin{equation}
 \widetilde{A} = \1_{\mathscr{F}_{n-m}} \otimes_{\mathrm{s}} K_m
 \label{eq:specialElement}
\end{equation}
with some $0 \leq m \leq n$ and a compact operator $K_m\in \mathcal{L}(\mathscr{F}_m)$.
Since $F_n$ is affine and continuous, it therefore suffices to show that $F(\widetilde{A})(t) \in \mathfrak{K}_n$ for any given $t$.
It is obvious that $ e^{\mathrm{i} T_n t} \widetilde A e^{-\mathrm{i} T_n t}= \1\otimes_{\mathrm{s}}\ue^{\ui t T_m} K_m  \ue^{-\ui t T_m}\in \mathfrak{K}_n$.
Moreover, we have (note that here the tensor product is not symmetrized)
 \begin{align}
  &\int_0^t \ue^{\ui T_n(t-s)}  [V_n, \1_{\mathscr{F}_{n-m}} \otimes K_m] \ue^{-\ui T_n(t-s)} \d s \label{eq:definitionDynamics1} \\
  &\qquad= \sum_{\myfrac{1 \leq i < j \leq n :}{ i,j > n-m}} \int_0^t \ue^{\ui T_n(t-s)} \Big[v(x_i,x_j), \1_{\mathscr{F}_{n-m}} \otimes K_m \Big]\ue^{-\ui T_n(t-s)} \ds \nonumber \\
  &\qquad\qquad+ \sum_{\myfrac{1 \leq i < j \leq n :}{  i\leq n-m, j >n-m }} \int_0^t \ue^{\ui T_n(t-s)} \Big[v(x_i,x_j),\1_{\mathscr{F}_{n-m}} \otimes K_m \Big]\ue^{-\ui T_n(t-s)} \ds. \notag
   \end{align}
 Since $v$ is a bounded function, the integrand of the first term is of the form $\1_{\mathscr{F}_{n-m}}\otimes Q_m(t)$ with a compact operator $Q_m \in \mathcal{L}(\mathscr{F}_m)$. Moreover, we claim that $\lim_{d(x,y) \to \infty} v(x,y)$ implies that the integrand of the second term is of the form $\mathds{1}_{\mathscr{F}_{n-m-1}} \otimes Q_{m+1}$ (up to ordering of the variables) with a compact operator $Q_{m+1} \in \mathcal{L}(\mathscr{F}_{m+1})$. To see this, we approximate $K_m$ in norm by a Hilbert--Schmidt operator and $v(x_i,x_j)$ in norm by multiplication with a function with compact support. One easily checks that the relevant commutator with the approximated objects is Hilbert--Schmidt in $m+1$ coordinates and acts as the identity in the remaining coordinates. Since the ideal of compact operators is norm-closed, this  proves the claim.
 Summing over the different terms of the symmetrization $\1_{\mathscr{F}_{n-m}} \otimes_{\mathrm{s}} K_m$ then proves that $F_n$ leaves the set of $\mathfrak{K}_n$-valued functions invariant.
 
 It remains to prove that $F_n$ preserves the consistency condition of Proposition~\ref{prop:Buchholz}, that is, $\kappa_n \circ F_n = F_{n-1}\circ \kappa_n$. It is clear that $\kappa_n(\ue^{\ui T_n t} A_n \ue^{-\ui T_n t})=\ue^{\ui T_{n-1} t}\kappa_n(A)\ue^{-\ui T_{n-1} t}$ since the free evolution is a product of evolutions on each tensor factor.  We may thus ignore the free evolution for the purpose of checking consistency of the action of the interaction potential.
 Consider the term corresponding to the symmetrization of the final line from~\eqref{eq:definitionDynamics1}, ignoring additionally the integral. With the unitary representation $U_\sigma$ of the permutation group $S_n$, $n \geq 2$ we find 
 \begin{align}
  &\kappa_n\bigg(\frac{1}{n!}\sum_{\sigma \in S_n}  U_\sigma \sum_{\myfrac{1 \leq i < j \leq n :}{  i\leq n-m, j >n-m}}  \Big[v(x_i,x_j),\1_{\mathscr{F}_{n-m}} \otimes_\mathrm{s} K_m \Big]U_\sigma^*\bigg)
  \label{eq:consistencyDynamics1} \\
  &\hspace{1cm}=\frac{m(n-m)}{n!}\kappa_n\bigg(\sum_{\sigma \in S_n}  U_\sigma  \Big[v(x_{n-m},x_n),\1_{\mathscr{F}_{n-m}} \otimes K_m \Big]U_\sigma^*\bigg). \notag
 \end{align}
As above, we denote $1/n!$ times the operator in the argument of $\kappa_n$ by $\mathds{1} \otimes_{\mathrm{s}} Q_{m+1}$. With this notation, the right-hand side equals
  \begin{align}
  &m(n-m) \kappa_n(\1_{\mathscr{F}_{n-m-1}}\otimes_\mathrm{s}Q_{m+1}) = m(n-m) n^{-1}(n-m-1) \1_{\mathscr{F}_{n-m-2}}\otimes_\mathrm{s} Q_{m+1} \\
  &\hspace{1cm}=\frac{m(n-m)(n-m-1)}{n!}\sum_{\sigma \in S_{n-1}}  U_\sigma  \Big[v(x_{n-m},x_{n-1}),\1_{\mathscr{F}_{n-m-1}} \otimes K_m \Big]U_\sigma^* \notag\\
  &\hspace{1cm}=\frac{m(n-m-1)}{(n-1)!}\sum_{\sigma \in S_{n-1}}  U_\sigma  \Big[v(x_{n-m},x_{n-1}),\kappa_n(\1_{\mathscr{F}_{n-m}} \otimes_\mathrm{s} K_m) \Big]U_\sigma^* \notag \\
  &\hspace{1cm}=\frac{1}{(n-1)!}\sum_{\sigma \in S_{n-1}}  U_\sigma \sum_{\myfrac{1 \leq i < j \leq n-1 }{   i\leq n-m, j >n-m}}  \Big[v(x_i,x_j), \kappa_n(\1_{\mathscr{F}_{n-m}}\otimes_\mathrm{s} K_m) \Big]U_\sigma^*. \notag
 \end{align}
To come to the third line, we used the definition of $\kappa_n$ in \eqref{eq:defKappaN}. The last line is exactly the term in $F_{n-1}\circ \kappa_n$ that corresponds to the term on the left-hand side of \eqref{eq:consistencyDynamics1}. The first term in~\eqref{eq:definitionDynamics1} can be treated similarly. This proves Proposition~\ref{prop:dynamics}.
\end{proof}

We highlight that Proposition~\ref{prop:dynamics} makes no claim about continuity in time. Indeed, $\tau_t$ is not even weakly continuous in $t$, as illustrated by the following example.
This example also shows that $\tau_t$ allows transport of infinitely many particles over arbitrary distances in arbitrarily short times. 
\begin{exam}\label{ex:continuity}
	We choose $\Gamma = \mathbb{Z}$ and consider the one-particle problem 
	\begin{equation}
	    \mathrm{i} \partial_t \psi_t(x) = -\Delta \psi_t(x) \quad \text{ with } \quad \Delta \psi(x) = \psi(x+1) + \psi(x-1)
	\end{equation}
	and the initial condition $\psi_0(x) = \delta_{x,0}$. We note that $\Delta e^{\mathrm{i} k \cdot x} = 2 \cos(k) e^{\mathrm{i} k \cdot x}$, $k \in [-\pi,\pi)$ and $\hat{\psi}_0(k) = (1/\sqrt{2\pi}) \int_{-\pi}^{\pi} \psi_0(x) \mathrm{d} x = 1/\sqrt{2\pi}$, and hence
	\begin{align}
		\psi_t(x) &= \frac{1}{\sqrt{2 \pi}} \int_{-\pi}^{\pi} \hat{\psi}_0(k) \exp(2\mathrm{i}t \cos(k)) e^{\mathrm{i} k \cdot x} \mathrm{d} k \label{eq:secondExample1} \\
		&= \frac{1}{\pi} \int_{0}^{\pi} \exp(2 \mathrm{i} t \cos(k)) \cos(k x) \d k = I_{|x|}(2 \mathrm{i} t)
		\nonumber
	\end{align}
	with the modified Bessel function $I_{|x|}(2 \mathrm{i} t)$, see \cite[9.6.19]{Abramowitz1972}. For fixed $x \in \mathbb{Z}$ the set of times $t$ such that $\psi_t(x) = 0$ holds is countable. For given $x$ we can therefore always pick $t>0$ as small as we wish such that $\psi_t(x) \neq 0$.
	
	For $x\in \Z$ and a function $f:\N_0\to \R$ such that $f(n)$ admits a limit for $n\to\infty$, we define
	\begin{equation}
		\gamma(f(N_x)) = \lim_{m \to \infty} \frac{1}{m!} \langle \Omega, a_0^m f(N_x) (a_0^*)^m \Omega \rangle
		\label{eq:secondExample2}
	\end{equation}
	with the vacuum vector $\Omega \in \mathscr{F}$.
	Using the argument above \eqref{eq:momentBound} based on the Hahn--Banach theorem, we can extend $\gamma$ to a state on $\mathcal{B}$. Denote by $H_0$ the Hamiltonian in \eqref{eq:localHamiltonian} with $v = 0$. 
	A short computation using that $H_0 \Omega = 0$ shows
	\begin{align}
		\gamma(\tau_t(f(N_x))) &= \lim_{m \to \infty} \frac{1}{m!} \big\langle \Omega, a_0^m e^{\mathrm{i}H_0 t} f(N_x) e^{-\mathrm{i}H_0 t} (a_0^*)^m \Omega \big\rangle \label{eq:secondExample3} \\
		&= \lim_{m \to \infty} \frac{1}{m!} \big\langle \Omega, (e^{-\mathrm{i}H_0 t} a_0 e^{\mathrm{i}H_0 t})^m f(N_x)  (e^{-\mathrm{i}H_0 t} a_0^* e^{\mathrm{i}H_0 t})^m \Omega \big\rangle \nonumber \\
		&= \lim_{m \to \infty} \frac{1}{m!} \big\langle \Omega, a(\psi_t)^m  f(N_x)  a^*(\psi_t)^m \Omega \big\rangle.
		\nonumber
	\end{align}
	
	Let us specify $f(x) = 1/(1+x)$ and assume that $x \in \mathbb{Z}$ and $t>0$ are chosen such that we have $|\psi_t(x)| > 0$. We also denote by $X_m$ the classical random variable associated to the operator $a_x^* a_x$ and the vector $\psi_t^{\otimes m}$ via the spectral theorem. Its expectation and variance satisfy
	\begin{align}
	    \mathbf{E}(X_m) &= \langle \psi_t^{\otimes m}, N_x \psi_t^{\otimes m} \rangle = m | \psi_t(x) |^2, \label{eq:secondExample3b} \\
	    \mathbf{Var}(X_m) &= \langle \psi_t^{\otimes m}, N_x^2 \psi_t^{\otimes m} \rangle - ( \langle \psi_t^{\otimes m}, N_x \psi_t^{\otimes m} \rangle )^2 = m ( |\psi_t(x)|^2 - |\psi_t(x)|^4 ). \notag
	\end{align}
	The right-hand side of \eqref{eq:secondExample3} with our choice of $f$ satisfies the upper bound
	\begin{align}
	    \mathbf{E}( 1/(X_m+1) ) &= \frac{1}{\mathbf{E}(X_m)} - \mathbf{E} \left( \frac{X_m-\mathbf{E}(X_m)}{\mathbf{E}(X_m)(1+X_m)} \right) \leq \frac{1}{\mathbf{E}(X_m)} + \frac{|\mathbf{E}(X_m)-X_m|}{ \mathbf{E}(X_m) } \nonumber \\
	    &\leq \frac{1}{\mathbf{E}(X_m)} + \frac{\sqrt{\mathbf{Var}(X_m)}}{\mathbf{E}(X_m)} = \frac{1}{m | \psi_t(x) |^2} \left( 1 + \sqrt{ m ( |\psi_t(x)|^2 - |\psi_t(x)|^4 ) } \right). \label{eq:secondExample3c}
	\end{align}
	The right-hand side goes to zero as $m \to \infty$, which allows us to conclude that
	\begin{equation}
	    \gamma(\tau_t(1/(1+N_x))) = 0\neq 1 =	\gamma(1/(1+N_x)),
	    \label{eq:secondExample3d}
	\end{equation}
	and hence $\tau_t$ is not weakly continuous. The interpretation is that there are infinitely many particles at site $x \in \mathbb{Z}$ after an arbitrarily short time, although initially there are only particles at the origin. Hence, we cannot expect any locality properties of the dynamics without further assumptions. 
\end{exam}
\section{Propagation bounds and finite-volume approximations}\label{sect:approx}
In this section, we show that the dynamics $\tau_t(A)$ of a local observable $A \in \mathcal{R}^\mathrm{inv}X$, with $X \Subset \Gamma$, can be approximated by a local time evolution $\tau_{t}^{\Lambda}(A) = e^{\mathrm{i}H_\Lambda t} A e^{-\mathrm{i}H_\Lambda t}$ for $X \subset \Lambda \Subset \Gamma$, where $H_{\Lambda}$ is the local Hamiltonian defined in~\eqref{eq:localHamiltonian}. This approximation holds, provided the relevant quantities are tested in a state with uniform bounds on the moments of the local particle number. The precise statement is given in Theorem~\ref{thm:unif_approx} below.

In the following, we will consider states on a finite or infinite subset $\Lambda\subseteq\Gamma$, which satisfy
\begin{equation}\label{eq:moment_bound}
 \sup_{x\in \Lambda} \gamma_\Lambda ((N_x+1)^p) \leq M
\end{equation}
for some $p\geq 1$ and $M>0$, independent of $\Lambda$. Our main approximation result is captured in the following theorem.

\begin{thm}\label{thm:unif_approx}
Let $v$ be an interaction potential with $v(x,y)=0$ if $d(x,y)\geq r$ and assume that $p> 2d+2$, $M>0$, $R>0$ and $T>0$.
For any $\veps>0$ there exists $m_0\in \N$ so that for all $m\geq m_0$,
and all (finite or infinite) subsets $\Lambda \subseteq \Gamma$ and $X\Subset \Lambda$ with $\mathrm{diam}(X)\leq R$,  and all states $\gamma_\Lambda$ on $\mathcal{R}^\mathrm{inv}_\Lambda$ satisfying~\eqref{eq:moment_bound} with the given $p,M$
we have
\begin{equation*}
\sup_{|t|\leq T} \left|  \gamma_\Lambda\Big(\big((\tau_t^\Lambda(A)-\tau^{X[2mr]}_t(A)\big)B\Big) \right| \leq \veps \|A\| \|B\|
\end{equation*}
for all $A\in \mathcal{R}^\mathrm{inv}_X$, $B\in \mathcal{R}^\mathrm{inv}_\Lambda$.
\end{thm}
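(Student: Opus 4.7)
My approach is to compare $\tau_t^\Lambda$ with $\tau_t^{X[2mr]}$ through a telescoping sum over nested volumes $\Lambda_j := X[2jr]\cap\Lambda$ for $j\geq m$, and to bound each telescoping term through a state-dependent Lieb--Robinson-type estimate. The strategy rests on a particle-number truncation that tames the unbounded local operators, with the truncation error controlled by the moment assumption $p>2d+2$.

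In the first step I will write the difference as the convergent telescoping series $\gamma_\Lambda((\tau_t^\Lambda(A)-\tau_t^{\Lambda_m}(A))B)=\sum_{j>m}\gamma_\Lambda((\tau_t^{\Lambda_j}(A)-\tau_t^{\Lambda_{j-1}}(A))B)$, justifying convergence by combining the existence of $\tau_t^\Lambda$ on $\mathcal{B}$ from Proposition~\ref{prop:dynamics}, the local normality of $\gamma_\Lambda$ from Lemma~\ref{lem:loc_normal}, and the inverse-limit description of $\mathcal{B}$ in Proposition~\ref{prop:Buchholz}, which gives norm convergence of $\tau_t^{\Lambda_j}(A)$ to $\tau_t^\Lambda(A)$ on each $n$-particle sector. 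In the second step, the Duhamel formula applied to $s\mapsto \tau_{t-s}^{\Lambda_j}(\tau_s^{\Lambda_{j-1}}(A))$ produces
\begin{equation*}
\tau_t^{\Lambda_j}(A)-\tau_t^{\Lambda_{j-1}}(A) = -\mathrm{i}\int_0^t \tau_{t-s}^{\Lambda_j}\big([\Delta H_j,\tau_s^{\Lambda_{j-1}}(A)]\big)\,ds,
\end{equation*}
with $\Delta H_j := H_{\Lambda_j}-H_{\Lambda_{j-1}}$. Since $\tau_s^{\Lambda_{j-1}}(A)$ is supported on $\mathscr{F}_{\Lambda_{j-1}}$, only those boundary terms $E_{xy}\in\Delta H_j$ with one endpoint in $\Lambda_{j-1}$ contribute; these lie in a shell of thickness at most $r$ around the inner boundary of $\Lambda_{j-1}$, and by the $d$-dimensionality hypothesis their number is at most $C|X|(jr)^{d-1}$.

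The central step is a state-dependent bound on each boundary commutator of the form $|\gamma_\Lambda(\tau_{t-s}^{\Lambda_j}([E_{xy},\tau_s^{\Lambda_{j-1}}(A)])B)|\leq \eta_j(s)\|A\|\|B\|$ with $\eta_j$ ultimately decaying faster than $j^{-d}$. I plan to obtain this by introducing a particle-number cutoff $n_0$ on a suitable neighborhood and working on the subspace $\{N_z\leq n_0\}$: the truncation error is bounded via Chebyshev's inequality and the moment assumption by $O(n_0^{-p})$ per site, while the truncated local Hamiltonian terms have operator norm at most $Cn_0^2$. On the truncated subspace, a standard iterated Duhamel/Lieb--Robinson argument then yields factorial decay $(Cn_0^2 T)^{2(j-1)}/(2(j-1))!$ from the fact that $A$ must propagate a distance $2(j-1)r$ to reach $E_{xy}$. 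Choosing $n_0$ as a suitable power of $j$ will balance the truncation error against the Lieb--Robinson factor, and the range of admissible exponents is nonempty precisely because $p>2d+2$.

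The main obstacle will be the unboundedness of the local hopping and interaction terms, which rules out a direct operator-norm commutator estimate. The truncation remedies this at the cost of inflating the effective Lieb--Robinson velocity to $O(n_0^2)$, so the delicate point is to choose $n_0$ and $j$ simultaneously such that the combined truncation-plus-propagation error is summable in $j>m$ and tends to zero as $m\to\infty$. The strength of the moment assumption $p>2d+2$ is precisely what makes this optimization succeed.
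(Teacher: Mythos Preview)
Your high-level strategy---particle-number truncation followed by a Lieb--Robinson estimate---is the right one and matches the paper's approach in spirit. However, several key ingredients are missing or mis-stated, and as written the optimization does not yield the threshold $p>2d+2$.

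\textbf{(i) Propagation of moment bounds.} Your Chebyshev estimate controls $\gamma_\Lambda(\1_{N_z>n_0})$ for the \emph{initial} state. But in your Duhamel formula the outer conjugation $\tau_{t-s}^{\Lambda_j}$ means you actually need moment bounds for the time-evolved state $\gamma_\Lambda\circ\tau_{t-s}^{\Lambda_j}$. This requires a separate Gr{\o}nwall argument (the paper's Proposition~\ref{prop:propagate_N}) showing $\sup_x\gamma_\Lambda(\tau_t((N_x+1)^p))\leq e^{\eta|t|}M$. Without this, the truncation error is uncontrolled.

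\textbf{(ii) ``Working on the subspace'' is not well-defined.} The projection $P=\prod_z\1_{N_z\leq n_0}$ does not commute with the hopping terms, so the subspace $\mathrm{ran}\,P$ is not invariant under $\tau_s^{\Lambda_{j-1}}$. You cannot simply insert $P$ and then treat the dynamics as bounded. The correct procedure---which the paper carries out in Lemma~\ref{lem:cutoff}---is to replace $H_\Lambda$ by $PH_\Lambda P$ and compare the two dynamics via Duhamel; this picks up boundary terms $P^\perp T_{xy}P$ whose control again requires the propagated moment bounds of (i). The resulting truncation error is of order $\lambda^{-p/2+1}|Y|$, not $\lambda^{-p}|Y|$: the exponent is halved by the unavoidable Cauchy--Schwarz step, and a further factor $\lambda$ is lost to $\|PT_{xy}\|\sim\lambda$.

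\textbf{(iii) The velocity $n_0^2$ is too large.} With local terms of norm $Cn_0^2$, the Lieb--Robinson factor $(Cn_0^2T)^{2(j-1)}/(2(j-1))!$ forces $n_0\lesssim j^{1/2}$. Combined with the correct truncation error $\sim j^d n_0^{-p/2+1}$ and summing over $j>m$, one needs roughly $p>4d+6$, not $p>2d+2$. The paper avoids this by passing to the interaction picture with respect to $V$: since $V$ commutes with all $N_x$, it drops out of the commutator iteration, and only the truncated hopping (of norm $\sim n_0$) drives propagation. This reduces the effective velocity to $n_0$ and is essential for the stated threshold.

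\textbf{(iv) Organization.} The paper does not telescope: it makes a single comparison of $\tau_t^\Lambda$ and $\tau_t^{X[2mr]}$ by first replacing both with their cut-off versions $\widetilde\tau$ on the fixed region $Y=X[(2m+1)r]$ (Lemma~\ref{lem:cutoff}), then applying a single Lieb--Robinson bound to the bounded difference (Lemma~\ref{lem:local_approx_V}), choosing $\lambda\sim m$. Your telescoping is a legitimate reorganization, but it tends to cost an extra power of $j$ for summability and makes the unbounded boundary operators $E_{xy}$ appear \emph{before} truncation, which is awkward. Also, for infinite $\Lambda$ your justification that the telescope converges to $\gamma_\Lambda(\tau_t^\Lambda(A)B)$ via $n$-particle sectors is circular: states of interest have infinitely many particles in $\Lambda$, so $\1_{N_\Lambda=n}$ has zero expectation and the sector decomposition fails.
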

The proof of Theorem~\ref{thm:unif_approx} is given in Section~\ref{sect:proof-LR} below. This result is a Lieb--Robinson-type bound in the sense that it shows locality of the dynamics. 
Its proof yields an algebraic decay of time-evolved correlations as a corollary. 

\begin{cor}\label{cor:LRB}
Assume the hypothesis of Theorem~\ref{thm:unif_approx} and let $A\in \mathcal{R}^\mathrm{inv}_X$, $B\in \mathcal{R}^\mathrm{inv}_Y$ with $X,Y \Subset \Gamma$. Then
\begin{equation*}
\big|\gamma\big([\tau_t(A),B]\big)\big|\leq C d(X,Y)^{d-p/2+1} \|A\| \|B\|,
\end{equation*}
for a constant $C$ depending on $t$, $\Gamma,|X|,|Y|$, $v$ and $M,p$, but not the specific state $\gamma$.
\end{cor}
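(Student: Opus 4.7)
The plan is to exploit the near-locality of the dynamics provided by Theorem~\ref{thm:unif_approx}: choose the auxiliary volume $X[2mr]$ disjoint from $Y$ so that the commutator $[\tau_t^{X[2mr]}(A),B]$ vanishes identically, and then estimate the error made by replacing $\tau_t(A)$ with $\tau_t^{X[2mr]}(A)$. Concretely, set $m:=\lfloor d(X,Y)/(4r)\rfloor$; since $v$ has range $r$, this ensures $X[2mr]\cap Y=\emptyset$. Because resolvents of fields supported on disjoint subsets of $\Gamma$ commute (by the CCR~\eqref{eq:CCR}), the algebras $\mathcal{R}^\mathrm{inv}_{X[2mr]}$ and $\mathcal{R}^\mathrm{inv}_Y$ commute elementwise. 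The proof of Proposition~\ref{prop:dynamics}, applied in finite volume $\Lambda=X[2mr]$, shows that $\tau_t^{X[2mr]}(A)\in \mathcal{R}^\mathrm{inv}_{X[2mr]}$, and hence $[\tau_t^{X[2mr]}(A),B]=0$.

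This reduces the commutator to a difference of one-sided products, each of which is controlled by Theorem~\ref{thm:unif_approx}:
\[
[\tau_t(A),B] = (\tau_t(A)-\tau_t^{X[2mr]}(A))\,B - B\,(\tau_t(A)-\tau_t^{X[2mr]}(A)).
\]
To the first term I apply the theorem with $\Lambda=\Gamma$. For the second I use the identity $\gamma(BC)=\overline{\gamma(C^*B^*)}$ with $C=\tau_t(A)-\tau_t^{X[2mr]}(A)$; since $C^*=\tau_t(A^*)-\tau_t^{X[2mr]}(A^*)$ and $\|A^*\|=\|A\|$, the same theorem applies with $A^*,B^*$ in place of $A,B$. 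Two further applications with $\pm i$ factors upgrade the one-sided bound to a modulus.

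The main obstacle is that Theorem~\ref{thm:unif_approx} is stated only qualitatively (``for every $\veps$ there exists $m_0$''), whereas the corollary asks for an explicit algebraic decay rate in $d(X,Y)$. One therefore has to revisit the proof in Section~\ref{sect:proof-LR} and track how $\veps$ depends on $m$. The relevant ingredients are a Chebyshev-type tail estimate on $(1+N_x)^p$ coming from the moment bound~\eqref{eq:moment_bound}, the geometric counting $|X[2mr]|=O(m^d)$ supplied by Definition~\ref{def:dDimGraph}, and a Dyson-series summation over the annular layers of $X[2mr]$. Assembled, these should produce a rate of order $m^{d-p/2+1}$; substituting $m\sim d(X,Y)/(4r)$ then yields the stated exponent, with the constant $C$ absorbing the dependence on $t,|X|,|Y|,v,M,p$ and the surface parameter $\sigma$ of $\Gamma$. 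Granted this quantitative form of Theorem~\ref{thm:unif_approx}, the corollary is immediate.
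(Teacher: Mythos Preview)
Your approach is essentially the same as the paper's: choose $m$ proportional to $d(X,Y)/r$ so that $X[2mr]\cap Y=\emptyset$, use that the commutator with the localized evolution vanishes, and invoke the quantitative estimate~\eqref{eq:dynamics-full-approx} from the proof of Theorem~\ref{thm:unif_approx} (rather than its qualitative statement) to get the rate $m^{d-p/2+1}$. The paper in fact takes $m=\lfloor d(X,Y)/(2r)\rfloor$, which is slightly sharper than your $\lfloor d(X,Y)/(4r)\rfloor$ but only affects the constant. One small point: you do not need to argue that $\tau_t^{X[2mr]}(A)\in \mathcal{R}^\mathrm{inv}_{X[2mr]}$ via Proposition~\ref{prop:dynamics} (which gives membership in $\mathcal{B}$, not $\mathcal{R}^\mathrm{inv}$); the vanishing of the commutator follows directly from the tensor factorization~\eqref{eq:ExponentialIdentity}, since $\ue^{\ui t H_{X[2mr]}}$ acts as the identity on $\mathscr{F}_{X[2mr]^{\mathrm{c}}}\supseteq \mathscr{F}_Y$.
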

This result can be interpreted as a generalization of~\cite[Cor.~4.7]{Buchholz2014}, which allows for a larger class of states and gives a quantitative bound. Its proof will be given after that of Theorem~\ref{thm:unif_approx}.

Theorem~\ref{thm:unif_approx} implies the existence of the thermodynamic limit for the dynamics when tested with states $\gamma$ that satisfy the moment bounds in \eqref{eq:moment_bound}. Before providing the precise statement, we introduce the following notation. We say that a sequence $\{\Lambda_n\}_{n \in \mathbb{N}}$ of finite subsets $\Lambda_n \Subset \Gamma$ converges to $\Gamma$ (or exhausts $\Gamma$), denoted by $ \Lambda_n \to \Gamma$, if for every finite  $X\Subset \Gamma$ there is $n_0\in \N$ so that $X\subseteq \Lambda_n$ for all $n\geq n_0$.

\begin{cor}[Existence of thermodynamic limit]
\label{cor:existenceThermodynamicLimit}
     Assume that the interaction potential $v$ has finite range, and let $\gamma$ be a state on the Buchholz algebra $\mathcal{B}$ that satisfies the moment bound in \eqref{eq:moment_bound} with some $p > 2d + 2$. Let $X \Subset \Gamma$ and assume that $\{ \Lambda_n \}_{n \in \mathbb{N}}$ with $ X \subset \Lambda_n \subset \Gamma$ is a sequence of sets that converges to $\Gamma$. Then we have
\begin{equation}
    \lim_{n \to \infty} \sup_{ \substack{A \in \mathcal{R}^\mathrm{inv}_X \\ \Vert A \Vert \leq 1}} \sup_{ \substack{B \in \mathcal{R}^\mathrm{inv} \\ \Vert B \Vert \leq 1}} \sup_{|t| \leq T} \left| \gamma\Big(\big(\tau_t(A)-\tau^{\Lambda_n}_t(A)\big)B\Big) \right| = 0.
    \label{eq:existenceThermodynamicLimit}
\end{equation}
\end{cor}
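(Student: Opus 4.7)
My plan is to apply Theorem~\ref{thm:unif_approx} twice through a triangle inequality, with the local dynamics $\tau_t^{X[\ell]}$ serving as the intermediate approximation. For given $\veps>0$, Theorem~\ref{thm:unif_approx} with parameters $\veps$, $p$, $M$, $R=\operatorname{diam}(X)$, and $T$ provides $m_0$, and I set $\ell=2m_0r$. Since $\Lambda_n\to \Gamma$, there is $n_0$ with $X[\ell]\subseteq\Lambda_n$ for all $n\geq n_0$. For such $n$ I split
\[
\tau_t(A)-\tau_t^{\Lambda_n}(A) = \bigl[\tau_t(A)-\tau_t^{X[\ell]}(A)\bigr] + \bigl[\tau_t^{X[\ell]}(A)-\tau_t^{\Lambda_n}(A)\bigr].
\]
Theorem~\ref{thm:unif_approx} applied with $\Lambda=\Gamma$ and the state $\gamma$ (which satisfies the moment bound on $\mathcal{R}^\mathrm{inv}_\Gamma=\mathcal{R}^\mathrm{inv}$) handles the first bracket uniformly over all $B\in \mathcal{R}^\mathrm{inv}$. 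For the second, setting $C_n:=\tau_t^{X[\ell]}(A)-\tau_t^{\Lambda_n}(A)$, Proposition~\ref{prop:dynamics} places $C_n\in \mathcal{B}_{\Lambda_n}$, and Theorem~\ref{thm:unif_approx} with $\Lambda=\Lambda_n$ gives the corresponding bound, but only over $B\in \mathcal{R}^\mathrm{inv}_{\Lambda_n}$.

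The hard part will be extending this second bound from $\mathcal{R}^\mathrm{inv}_{\Lambda_n}$ to the full $\mathcal{R}^\mathrm{inv}$. My strategy is a Cauchy--Schwarz estimate in the GNS representation $(\pi_\gamma,\mathcal{H}_\gamma,\Omega_\gamma)$ of $\gamma$: since
\[
|\gamma(C_n B)| = |\langle \pi_\gamma(C_n^*)\Omega_\gamma,\pi_\gamma(B)\Omega_\gamma\rangle| \leq \|\pi_\gamma(C_n^*)\Omega_\gamma\|\,\|B\|,
\]
the task reduces to bounding $\gamma(C_n C_n^*)$. I will introduce the cutoff $R_K:=C_n^*\,\1_{N_{\Lambda_n}\leq K}$. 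Because $C_n$ commutes with $N_{\Lambda_n}$ (both dynamics preserve particle number) and $\1_{N_{\Lambda_n}\leq K}$ projects onto a finite-dimensional subspace of $\mathscr{F}_{\Lambda_n}$, a version of Lemma~\ref{lem:structureResolventAlgebra} adapted to $\mathcal{R}^\mathrm{inv}_{\Lambda_n}$ places $R_K$ in $\mathcal{R}^\mathrm{inv}_{\Lambda_n}$ with $\|R_K\|\leq 2\|A\|$. The moment bound combined with Markov's inequality then yields
\[
\|\pi_\gamma(C_n^*-R_K)\Omega_\gamma\|^2 = \gamma(C_n C_n^*\,\1_{N_{\Lambda_n}>K}) \leq 4\|A\|^2\,\gamma(\1_{N_{\Lambda_n}>K}) \xrightarrow[K\to\infty]{} 0
\]
at fixed $n$.

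Applying the already-established second-bracket bound to $R_K\in \mathcal{R}^\mathrm{inv}_{\Lambda_n}$ gives $|\gamma(C_n R_K)|\leq 2\veps\|A\|^2$; passing $K\to\infty$ via local normality of $\gamma$ (Lemma~\ref{lem:loc_normal}) yields $\gamma(C_n C_n^*)\leq 2\veps\|A\|^2$, and hence $\|\pi_\gamma(C_n^*)\Omega_\gamma\|\leq\sqrt{2\veps}\,\|A\|$. Thus for any $B\in\mathcal{R}^\mathrm{inv}$ with $\|B\|\leq 1$, $|\gamma(C_n B)|\leq\sqrt{2\veps}\,\|A\|$. Reassembling the two brackets produces a total bound $\leq (\veps+\sqrt{2\veps})\|A\|\|B\|$, and absorbing the square-root loss by starting from a suitably smaller $\veps$ (e.g.\ $\veps\mapsto\veps^2/8$) at the outset yields the desired conclusion. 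The principal obstacle throughout is this extension step: without the GNS reformulation and the local normality of $\gamma$, the operator norm-approximation of $C_n^*\in\mathcal{B}_{\Lambda_n}$ by elements of $\mathcal{R}^\mathrm{inv}_{\Lambda_n}$ fails (cf.\ Example~\ref{ex:R(f)}), so the argument genuinely relies on passing to a state-dependent norm and exploiting the moment bound.
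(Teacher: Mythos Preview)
Your approach is correct, and the triangle-inequality split
\[
\tau_t(A)-\tau_t^{\Lambda_n}(A)=\bigl[\tau_t(A)-\tau_t^{X[\ell]}(A)\bigr]+\bigl[\tau_t^{X[\ell]}(A)-\tau_t^{\Lambda_n}(A)\bigr]
\]
is exactly the paper's starting point (this is implicit in Remark~\ref{rem:topology} and is used verbatim in the proof of Theorem~\ref{thm:gamma_dynamics}). The paper does not spell out a proof of the corollary; it treats the result as immediate from Theorem~\ref{thm:unif_approx}.

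The difference lies in how the second bracket is handled. The paper simply applies Theorem~\ref{thm:unif_approx} a second time, relying on the fact that its \emph{proof} (via Proposition~\ref{prop:propagate_N} and Lemma~\ref{lem:cutoff}) already establishes the decoupled version: one may take the state $\gamma$ on $\mathcal{R}^\mathrm{inv}_\Gamma$, the dynamics generated by $H_{\Lambda_n}$ for any $\Lambda_n\supseteq X[2mr]$, and $B$ ranging over all of $\mathcal{R}^\mathrm{inv}$. Indeed, in Lemma~\ref{lem:cutoff} the operator $B$ enters only through $\|B\|$ via Cauchy--Schwarz, and the moment propagation in Proposition~\ref{prop:propagate_N} works for any sub-Hamiltonian with the state living on the full lattice. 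So no extension step is needed.

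You instead read Theorem~\ref{thm:unif_approx} literally (with state, dynamics and $B$ all tied to the same $\Lambda$), which forces you to upgrade the bound from $B\in\mathcal{R}^\mathrm{inv}_{\Lambda_n}$ to $B\in\mathcal{R}^\mathrm{inv}$ by hand. Your GNS/Cauchy--Schwarz reduction to $\gamma(C_nC_n^*)$, the cutoff $R_K=C_n^*\1_{N_{\Lambda_n}\leq K}$, and the local-normality limit $K\to\infty$ all work; the claim $R_K\in\mathcal{R}^\mathrm{inv}_{\Lambda_n}$ follows most cleanly from the definition of $\mathcal{B}_{\Lambda_n}$ (pick $R\in\mathcal{R}^\mathrm{inv}_{\Lambda_n}$ with $C_n^*\1_{N_{\Lambda_n}\leq K}=R\,\1_{N_{\Lambda_n}\leq K}$ and note $\1_{N_{\Lambda_n}\leq K}\in\mathcal{R}^\mathrm{inv}_{\Lambda_n}$), rather than from Lemma~\ref{lem:structureResolventAlgebra}. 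Also note that the GNS formulation is not really needed: the inequality $|\gamma(C_nB)|\leq\sqrt{\gamma(C_nC_n^*)}\,\|B\|$ is just Cauchy--Schwarz for the state on $\mathcal{B}$. Your route is more self-contained relative to the stated theorem, at the cost of extra machinery; the paper's route is shorter but tacitly uses the stronger form of Theorem~\ref{thm:unif_approx} that its proof actually yields.
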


Concerning Theorem~\ref{thm:unif_approx} and Corollary~\ref{cor:existenceThermodynamicLimit} we have the following remarks.

\begin{rmk}[Topology of convergence] \label{rem:topology}
    Theorem~\ref{thm:unif_approx} with $\Lambda=\Gamma$ actually implies a slightly stronger statement than that provided in Corollary~\ref{cor:existenceThermodynamicLimit}. For given $p>2d+2$, $M>0$, we define a family of norms on $\mathcal{R}^\mathrm{inv}$ by
\begin{equation}
 \|A\|_{M,p}=  \sup_{\gamma\,\text{satisfies~\eqref{eq:moment_bound}}} \sup_{\substack{B\in \mathcal{R}^\mathrm{inv} \\ \Vert B \Vert \leq 1}} |\gamma(AB)|.
\end{equation}
Since the statement of Theorem~\ref{thm:unif_approx} is also uniform in $\gamma$, we know that $\tau_t^{\Lambda_n}(A)\to \tau_t(A)$ in this family of norms, where the limit is taken over any sequence $\{ \Lambda_n \}_{n \in \mathbb{N}}$ with $\Lambda_n \to \Gamma$. In other words, the convergence in \eqref{eq:existenceThermodynamicLimit} is also uniform in $\gamma$ if one takes the supremum over all states satisfying \eqref{eq:moment_bound} with the same constants $p$ and $M$.
\end{rmk}
\begin{rmk}[Strong convergence of dynamics]
\label{rmk:strongConvergence}
If we restrict attention to normal states $\gamma(A)=\Tr{\rho A}$ satisfying the moment bound in~\eqref{eq:moment_bound}, Theorem~\ref{thm:unif_approx} implies
\begin{equation}
 \rho \tau_t^{\Lambda_n}(A) \to \rho \tau_t(A)
\end{equation}
in trace norm for any sequence $\Lambda_n \to \Gamma$. 
In particular, choosing $\rho$ of rank one, $\ue^{\ui tH_\Lambda}\to\ue^{\ui tH}$ in the strong operator topology. However, this does not imply convergence of $\ue^{\ui H_\Lambda t} A\ue^{-\ui H_\Lambda t}$  in the operator norm or that $\mathcal{R}^\mathrm{inv}$ is $\tau_t$-invariant.
Indeed, this does not even hold for the free time evolution: 
For $f\in \ell^2_\mathrm{fin}(\Gamma)$, 
$f_t=e^{\mathrm{i} \Delta t} f$ generally does not have finite support. As Example~\ref{ex:R(f)} shows, the free evolution of  $(1+a^*(f)a(f))^{-1}$, given by $(1+a^*(f_t)a(f_t))^{-1}$, is therefore not an element of $\mathcal{R}^\mathrm{inv}$.
\end{rmk}

Given a $\tau_t$-invariant state $\gamma$ satisfying the moment bounds, we now consider the corresponding GNS representation of the invariant resolvent algebra $\mathcal{R}^\mathrm{inv}$. 
Theorem~\ref{thm:unif_approx} allows us to construct a unitary time evolution on the related von Neumann algebra, obtained by the completion of this representation in the weak operator topology. Moreover, this dynamics is continuous in the strong operator topology and defines  a W*-dynamical system, which is completely determined by the dynamics of localized observables.
\begin{thm}\label{thm:gamma_dynamics}
 Let $\gamma$ be state on $\mathcal{B}$ and assume $v,\gamma$ satisfy the hypothesis of Theorem~\ref{thm:unif_approx}.
 Assume moreover that $\gamma(\tau_t(A))=\gamma(A)$ for $A\in \mathcal{R}^\mathrm{inv}$.
 Let $(\sH_\gamma, \pi_\gamma, \Omega_{\gamma})$ be the corresponding GNS representation of $\mathcal{R}^\mathrm{inv}$. Then there exists a unique strongly continuous unitary group $U_\gamma(t)$ on $\sH_\gamma$ so that for all $A,B\in \mathcal{R}^\mathrm{inv}$
 \begin{equation*}
\langle U_{\gamma}(t)\pi_\gamma(A) \Omega_{\gamma},\pi_\gamma(B) \Omega_{\gamma}  \rangle_{\mathscr{H}_\gamma}= \gamma(\tau_t(A) B).
\end{equation*}
Moreover, for $X\Subset \Gamma$ and $A\in \mathcal{R}^\mathrm{inv}_X$, 
\begin{equation*}
 U_{\gamma}(t)\pi_\gamma(A) \Omega_{\gamma} = \lim_{n \to \infty} \pi_\gamma(\tau_t^{\Lambda_n}(A)) \Omega_\gamma,
\end{equation*}
in the norm of $\sH_\gamma$, where the limit is taken over any sequence $\{ \Lambda_n \}_{n=1}^{\infty}$ with $X \subseteq \Lambda_n \Subset \Gamma$ and $\Lambda_n \to \Gamma$.
\end{thm}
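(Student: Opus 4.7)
The plan is to build $U_\gamma(t)$ first on the dense subspace of vectors arising from bounded-diameter local observables, using the finite-volume approximations, and then extend by continuity. For $A \in \mathcal{R}^{\mathrm{inv}}_X$ with $\mathrm{diam}(X) \le R$ and an exhausting sequence $\Lambda_n \to \Gamma$ with $X \subset \Lambda_n$, I set $U_\gamma(t)\pi_\gamma(A)\Omega_\gamma := \lim_n \pi_\gamma(\tau_t^{\Lambda_n}(A))\Omega_\gamma$ in $\sH_\gamma$. The sequence is Cauchy because
\[
\|\pi_\gamma(\tau_t^{\Lambda_n}(A) - \tau_t^{X[2mr]}(A))\Omega_\gamma\|^2 = \gamma\bigl((\tau_t^{\Lambda_n}(A^*)-\tau_t^{X[2mr]}(A^*))(\tau_t^{\Lambda_n}(A)-\tau_t^{X[2mr]}(A))\bigr),
\]
and the right-hand side is bounded by $2\veps\|A\|^2$ via Theorem~\ref{thm:unif_approx} with $A$ replaced by $A^*$ and the test operator $B = \tau_t^{\Lambda_n}(A) - \tau_t^{X[2mr]}(A) \in \mathcal{R}^{\mathrm{inv}}_{\Lambda_n}$. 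Sending $m\to\infty$ and using the uniformity of the estimate in $\Lambda_n \supseteq X[2mr]$ yields the Cauchy property and independence of the exhausting sequence, as well as the uniform approximation $\|U_\gamma(t)\pi_\gamma(A)\Omega_\gamma - \pi_\gamma(\tau_t^{X[2mr]}(A))\Omega_\gamma\|_{\sH_\gamma} \le \sqrt{2\veps}\,\|A\|$.

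Next I would verify that this definition does not depend on the representative $A$ of $\xi_A=\pi_\gamma(A)\Omega_\gamma$ and is isometric, using the assumed invariance $\gamma\circ\tau_t = \gamma$ on $\mathcal{R}^{\mathrm{inv}}$. For $C \in \mathcal{R}^{\mathrm{inv}}$ localised in some fixed $Y \Subset \Gamma$, applying Theorem~\ref{thm:unif_approx} with $B=\mathds{1}$ to both $\Lambda=\Lambda_n$ and $\Lambda=\Gamma$ and invoking the triangle inequality give $|\gamma(\tau_t^{\Lambda_n}(C)) - \gamma(\tau_t(C))| \le 2\veps\|C\|$, and invariance identifies $\gamma(\tau_t(C))$ with $\gamma(C)$. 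Taking $C=(A_1-A_2)^*(A_1-A_2)$ gives well-definedness while $C=A^*A$ gives $\|U_\gamma(t)\xi_A\| = \|\xi_A\|$, so $U_\gamma(t)$ extends uniquely to an isometry on $\sH_\gamma$. An analogous limiting argument applied to $\langle \pi_\gamma(\tau_t^{\Lambda_n}(A))\Omega_\gamma, \pi_\gamma(B)\Omega_\gamma\rangle = \gamma(\tau_t^{\Lambda_n}(A^*)B)$, now with general $B \in \mathcal{R}^{\mathrm{inv}}$ playing the role of the test operator in Theorem~\ref{thm:unif_approx}, delivers the inner product identity of the theorem. Strong continuity at $t=0$ will follow from the norm-continuity of $t\mapsto \tau_t^{\Lambda_n}(A)$ for each finite $\Lambda_n$ (which, via $\|\pi_\gamma(C)\Omega_\gamma\|\le \|C\|$, makes $t \mapsto \pi_\gamma(\tau_t^{\Lambda_n}(A))\Omega_\gamma$ continuous in $\sH_\gamma$) together with the uniform-in-$t$ approximation above on compact time intervals; continuity at other $t$ will follow once the group property is proven.

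The main obstacle I anticipate is the group property $U_\gamma(t+s) = U_\gamma(t)U_\gamma(s)$, which I would check at the level of matrix elements on the dense local subspace. Using the inner product identity, $\langle U_\gamma(t+s)\xi_A, \xi_B\rangle = \gamma(\tau_t(\tau_s(A^*))\,B)$, while $\langle U_\gamma(t)U_\gamma(s)\xi_A, \xi_B\rangle$, obtained by pulling the continuous operator $U_\gamma(t)$ inside the limit defining $U_\gamma(s)\xi_A$, equals $\lim_n \gamma(\tau_t(\tau_s^{\Lambda_n}(A^*))\,B)$. Their difference is $\lim_n \gamma\bigl(\tau_t(\tau_s(A^*)-\tau_s^{\Lambda_n}(A^*))\,B\bigr)$, which Cauchy--Schwarz for $\gamma$ majorises by $\|B\|\,\gamma(\tau_t(F_nF_n^*))^{1/2}$ with $F_n := \tau_s(A^*)-\tau_s^{\Lambda_n}(A^*)$. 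The delicate step is controlling $\gamma(\tau_t(F_nF_n^*))$ in spite of the fact that $F_n \notin \mathcal{R}^{\mathrm{inv}}$; the idea is to split $F_n$ through the intermediate truly local evolution $\tau_s^{X[2mr]}(A^*) \in \mathcal{R}^{\mathrm{inv}}_{X[2mr]}$, apply Theorem~\ref{thm:unif_approx} first at scale $m$ for the $\tau_s$-dynamics of $A^*$ and then at a larger scale $m'$ for the $\tau_t$-dynamics of the now-localised $\tau_s^{X[2mr]}(A^*)$, and close the estimates using the $\tau_t$-invariance of $\gamma$ on $\mathcal{R}^{\mathrm{inv}}$ (which is enough because all resulting inner products involve arguments in $\mathcal{R}^{\mathrm{inv}}$ once the adjoint-and-product have been expanded). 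Once the group law is proven, unitarity follows from $U_\gamma(t)U_\gamma(-t)=\mathds{1}$ applied to the dense subspace, and uniqueness of $U_\gamma(t)$ is immediate since the matrix elements on a total subset of $\sH_\gamma$ determine a bounded operator.
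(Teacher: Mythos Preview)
Your overall strategy is the same as the paper's: define $U_\gamma(t)$ on local vectors via finite-volume limits using Theorem~\ref{thm:unif_approx}, prove the Cauchy property and isometry via $\tau_t$-invariance, and extend. The paper even treats the group law in a single sentence, so your more careful two-scale discussion there is not a defect; the key observation that makes your sketch work is that $\gamma\circ\tau_t^{\Lambda_n}$ again satisfies the moment bound~\eqref{eq:moment_bound} (with constant $e^{\eta|t|}M$, by Proposition~\ref{prop:propagate_N}), so Theorem~\ref{thm:unif_approx} applies to it as well.

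There is, however, a genuine gap in your continuity argument. You write that strong continuity follows from ``the norm-continuity of $t\mapsto \tau_t^{\Lambda_n}(A)$ for each finite $\Lambda_n$''. This premise is false: $H_{\Lambda_n}$ is unbounded on $\mathscr{F}_{\Lambda_n}$ (both the hopping and the interaction terms are unbounded in the particle number), so $e^{-itH_{\Lambda_n}}$ is only strongly continuous, and conjugation by it is not norm-continuous on generic $A\in\mathcal{R}^{\mathrm{inv}}_X$. For instance, for $A=(1+N_x)^{-1}$ one computes that $\|[H_{\Lambda_n},A]\|_{\mathscr{F}_k}$ grows like $\sqrt{k}$, so $\|\tau_t^{\Lambda_n}(A)-A\|$ does not tend to zero as $t\to 0$. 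Consequently the inequality $\|\pi_\gamma(C)\Omega_\gamma\|\le\|C\|$ does not help. The paper closes this gap by invoking local normality: by Lemma~\ref{lem:loc_normal}, the restriction of $\gamma$ to $\mathcal{R}^{\mathrm{inv}}_{\Lambda_n}$ is given by a trace-class operator $\rho_{\Lambda_n}$, so $\gamma(B^*\tau_t^{\Lambda_n}(A))=\Tr{\rho_{\Lambda_n}B^*e^{itH_{\Lambda_n}}Ae^{-itH_{\Lambda_n}}}$, and continuity in $t$ then follows by approximating $\rho_{\Lambda_n}$ in trace norm by a finite-rank operator and using the \emph{strong} continuity of $e^{itH_{\Lambda_n}}$. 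Combined with your uniform-in-$t$ approximation this yields weak, hence strong, continuity of $U_\gamma(t)$.
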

In combination with Example~\ref{ex:continuity} above, this shows that the only reason for the non-locality of the dynamics $\tau_t$ is an accumulation of a large number of particles at one lattice site.

\begin{rmk}
The above result can be interpreted in the following way. If we choose $\gamma$ as a $\tau_t$-invariant state satisfying the moment bound~\eqref{eq:moment_bound} and describing infinitely many particles, then $\mathscr{H}_\gamma$ carries a representation of $\mathcal{R}^\mathrm{inv}$ that is inequivalent to the representation on the Fock space we started with. In this case the unitary group $U_\gamma$ and the associated generator $H_\gamma$ describe the dynamics of (essentially) finitely many excitations relative to $\gamma$. This group can be expressed as the limit of localized dynamics on the algebra. The $\tau_t$-invariance of $\gamma$ is needed, because for a far-from-equilibrium state $\gamma$ one can expect that $\gamma\circ\tau_t$ and $\gamma$ yield different descriptions for an infinite number of particles. This yields inequivalent representations of $\mathcal{R}^\mathrm{inv}$ in the GNS Hilbert spaces of the state at different times and precludes the implementation of the dynamics in $\mathscr{H}_\gamma$. Note that the existence of the unitary $U_\gamma$ can already be obtained from representation theory \cite[Cor. 2.3.17]{BraRob1}, but the continuity requires the local approximation of Theorem 4.1.
\end{rmk}

Because the proof of Theorem~\ref{thm:gamma_dynamics} is rather short (admitting Theorem~\ref{thm:unif_approx}), we give it immediately.

\begin{proof}[Proof of Theorem~\ref{thm:gamma_dynamics}]
 Let $X \Subset \Gamma$ and choose a sequence of sets $\{ \Lambda_{n} \}_{n=1}^{\infty}$ with $X \subseteq \Lambda_{n} \Subset \Gamma$ for all $n \in \mathbb{N}$ and $\Lambda_n \to \Gamma$. For $A \in \mathcal{R}^\mathrm{inv}_X$ we have
 \begin{equation}
  \| \pi_\gamma(\tau_t^{\Lambda_k}(A)) \Omega_\gamma- \pi_\gamma(\tau_t^{\Lambda_{\ell}}(A)) \Omega_\gamma\|_{\mathscr{H}_\gamma} = \sup_{\myfrac{B\in \mathcal{R}^\mathrm{inv}}{\|B\|=1 }}
  |\gamma(B(\tau_t^{\Lambda_k}(A)-\tau_t^{\Lambda_{\ell}}(A) ) )|. 
 \end{equation}
An application of Theorem~\ref{thm:unif_approx} shows that the right-hand side is smaller than $\veps$ provided $X[2mr]\subseteq \Lambda_k, \Lambda_{\ell}$ and $m$ is sufficiently large. We conclude that $\pi_\gamma(\tau_t^{\Lambda_n}(A)) \Omega_\gamma$ is a Cauchy sequence in $\sH_\gamma$ and denote its (norm) limit by $\Psi_{A,t}$. Assume that $B \in \mathcal{R}^\mathrm{inv}_Y$ with some $Y \Subset \Gamma$. Then we have
\begin{align}
    \langle \Psi_{A,t}, \Psi_{B,t} \rangle_{\mathscr{H}_\gamma} &= \lim_{n \to \infty} \lim_{m \to \infty} \langle \pi_\gamma(\tau_t^{\Lambda_n}(A)) \Omega_\gamma, \pi_\gamma(\tau_t^{\Lambda_m}(B)) \Omega_\gamma \rangle_{\mathscr{H}_\gamma} \label{eq:vonNeumannAndi1}  \\
    &= \lim_{n \to \infty} \lim_{m \to \infty} \gamma(  \tau_t^{\Lambda_n}(A^*) \tau_t^{\Lambda_m}(B) ) = \gamma(  \tau_t(A^* B) ) \nonumber \\
    & = \gamma(A^* B)= \langle \pi(A) \Omega_{\gamma}, \pi(B) \Omega_{\gamma} \rangle_{\mathscr{H}_\gamma}, \nonumber
\end{align}
where the second equality in the second line is a consequence of Corollary~\ref{cor:existenceThermodynamicLimit}, and the third line follows from the $\tau_t$-invariance of $\gamma$. 

These considerations allow us to define a linear map $U_{\gamma}(t)$ on the dense set $D = \{ \pi_\gamma(A)\Omega_\gamma: A\in \mathcal{R}^\mathrm{inv}_X, X\Subset \Gamma\}$ by $U_\gamma(t)\pi_\gamma(A)\Omega_\gamma=\Psi_{A,t}$. From \eqref{eq:vonNeumannAndi1} we know that it is an isometry and can be extended uniquely to $\sH_\gamma$ by continuity. Equation~\eqref{eq:vonNeumannAndi1} also implies that it is unitary.  The identity $U_{\gamma}(t) U_{\gamma}(s) = U_{\gamma}(t+s)$ follows from the related property of $\tau_t$. 
It remains to prove the strong continuity of $U_{\gamma}(t)$ in time. 

Since $U_{\gamma}(t)$ is unitary it suffices  to prove weak continuity in time. It is also sufficient to test with vectors in $D$. Assume that $A \in \mathcal{R}^\mathrm{inv}_X$, $B \in \mathcal{R}^\mathrm{inv}_Y$ with $X,Y \Subset \Gamma$. For given $\veps > 0$ we choose $n \in \mathbb{N}$ large enough such that  
 \begin{equation}
 \label{eq:vonNeumannAndi2}
 \sup_{|t|\leq 1} \big| \gamma\big(B^*(\tau_t(A)-\tau^{\Lambda_n}_t(A))\big)\big|<\veps/2
 \end{equation}
holds. From Lemma~\ref{lem:loc_normal} we know that there exists a trace-class operator $\rho_{\Lambda_n}$ with
\begin{equation}
 \gamma\big(B^* \tau^{\Lambda_n}_t(A) \big) = \Tr{\rho_{\Lambda_n} B^* \ue^{\ui t H_{\Lambda_n}} A\ue^{-\ui t H_{\Lambda_n}}}.
\end{equation}
To show the continuity of this expression with respect to $t$, we approximate $\rho_{\Lambda_n}$ in trace norm by an operator with finite rank and then use the strong continuity of $e^{\mathrm{i} t H_{\Lambda_n} }$. 
Hence,
\begin{equation}
    | \gamma\big(B^* \tau^\Lambda_t(A) \big) - \gamma\big(B^* A \big) | < \veps/2 
\end{equation}
provided $|t|$ is small enough. In combination with \eqref{eq:vonNeumannAndi2}, this shows 
\begin{equation}
 \big|\langle \pi_\gamma(B)\Omega_\gamma, (U_\gamma(t)-1) \pi_\gamma(A)\Omega_\gamma\rangle\big| < \veps 
\end{equation}
for the same values of $t$. We conclude the weak continuity of $t\mapsto U_{\gamma}(t)$ and thus the claim.
\end{proof}

\subsection{Proof of Theorem~\ref{thm:unif_approx}}\label{sect:proof-LR}

The strategy for proving Theorem~\ref{thm:unif_approx} follows ideas in~\cite{KVS2024}. Since we do not need control on the precise dependence on $t$ we can simplify the analysis considerably.
To control the dynamics of a local observable $A \in \mathcal{R}^\mathrm{inv}_X$, $X \Subset \Gamma$, we first show that the the moment bound in \eqref{eq:moment_bound} can be propagated in time.
This allows us to introduce a particle number cutoff in the part of the Hamiltonian acting on degrees of freedom in the enlarged region $X[(2m+1)r]$ (see \eqref{eq:enlargement} for the notation $X[\ell]$) in a controlled way. This step in the proof relies crucially on the fact that $A$ commutes with the number operator $N_X$ (or at least changes this in a controlled way).
With this cutoff, we can use Lieb--Robinson bounds to control the propagation through the boundary region $X[2mr] \setminus X$ and prove the result. 
We start by proving the propagation bound, using a Gr{\o}nwall estimate.

\begin{prop}\label{prop:propagate_N}
Assume that $\lim_{d(x,y) \to \infty} v(x,y) = 0$.
For all $p\geq 1$ there exits $\eta\geq 2$ so that for all $\Lambda\subset \Gamma$, and all positive linear functionals $\gamma_\Lambda$ on $\mathcal{R}^\mathrm{inv}_\Lambda$ satisfying~\eqref{eq:moment_bound} with $M>0$ we have
\begin{equation*}
 \sup_{x\in \Lambda} \gamma_\Lambda (\ue^{\ui t H_\Lambda} (N_x+1)^p \ue^{-\ui t H_\Lambda}) \leq \ue^{\eta |t|}M
\end{equation*}
for every $t\in \R$.
\end{prop}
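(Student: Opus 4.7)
The plan is to establish a Grönwall-type differential inequality for
$$F_x(t) := \gamma_\Lambda\!\left(\tau_t^\Lambda\bigl((N_x+1)^p\bigr)\right),$$
where the pairing of $\gamma_\Lambda$ with the unbounded observable is interpreted through the monotone approximants $(1+\lambda^{-1}N_x)^{-p}(N_x+1)^p \in \mathcal{B}$ as in~\eqref{eq:momentBound}. Since $V_\Lambda$ is a function of the number operators alone, $[V_\Lambda,(N_x+1)^p]=0$, and only the hopping contributes to the Heisenberg derivative. The CCR intertwining identities $a_x f(N_x)=f(N_x+1)a_x$ (and their adjoints) yield
$$[T_\Lambda,(N_x+1)^p] \;=\; \sum_{\substack{y\in\Lambda\\ d(x,y)=1}} \bigl(B_y - B_y^*\bigr),\qquad B_y := a_x^*\,g_p(N_x+1)\,a_y,$$
where $g_p(n):=(n+1)^p-n^p$ satisfies $0\leq g_p(n)\leq p(n+1)^{p-1}$ by the mean value theorem.

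The crux is an operator-valued Cauchy--Schwarz bound. Writing $B_y=Y^*X$ with $X:=g_p(N_x+1)^{1/2}a_y$ and $Y:=g_p(N_x+1)^{1/2}a_x$, the elementary inequality $(X\mp\ui Y)^*(X\mp\ui Y)\geq 0$ gives $\pm\ui(B_y-B_y^*)\leq X^*X+Y^*Y$. Using $a_y^*g_p(N_x+1)a_y=g_p(N_x+1)N_y$ (valid for $y\neq x$) and $a_x^*g_p(N_x+1)a_x=N_x g_p(N_x)$, together with the pointwise bound on $g_p$ and Young's inequality with conjugate exponents $p/(p-1),p$, I expect to obtain
$$\pm\ui\bigl[T_\Lambda,(N_x+1)^p\bigr] \;\leq\; C\,(N_x+1)^p + C\!\!\sum_{y:\,d(x,y)=1}(N_y+1)^p,$$
for a constant $C=C(p,\sigma)$ depending only on $p$ and the surface parameter of Definition~\ref{def:dDimGraph} (which in particular caps the degree of $\Gamma$ via $\ell=1$ in~\eqref{eq:dDimGraph}). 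Taking the expectation against $\gamma_\Lambda\circ\tau_t^\Lambda$ — the remark following Lemma~\ref{lem:loc_normal} making this pairing meaningful — and differentiating in $t$, the inequality becomes $|F_x'(t)|\leq C'\sup_{y\in\Lambda}F_y(t)$ with $C'=C'(p,\sigma)$ uniform in $\Lambda$ and $\gamma_\Lambda$. A standard Grönwall argument applied to $G(t):=\sup_{x\in\Lambda}F_x(t)$ then gives $G(t)\leq e^{C'|t|}M$, which is the claim with $\eta:=\max(C',2)$.

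The main technical difficulty I anticipate is the rigorous justification of the differentiation of $F_x(t)$ and of the pairing of $\gamma_\Lambda$ with the unbounded operator $\tau_t^\Lambda((N_x+1)^p)$, especially when $\Lambda$ is infinite and $H_\Lambda$ itself is unbounded. My plan is first to establish the bound for finite $\Lambda$, where Lemma~\ref{lem:loc_normal} represents $\gamma_\Lambda$ by a trace-class density matrix and the number-conservation of $H_\Lambda$ decouples the calculation into finite-dimensional particle-number sectors (allowing honest differentiation and dominated convergence in the cutoff $\lambda$). Because the Grönwall constant $C'$ is independent of $\Lambda$, the infinite-$\Lambda$ case then follows by approximation with an exhausting sequence of finite subsets, using the strong-resolvent convergence of the Hamiltonians together with the uniformity of the estimate.
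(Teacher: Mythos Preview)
Your proposal is correct and follows essentially the same route as the paper: both observe that $V_\Lambda$ commutes with $(N_x+1)^p$, compute $[T_\Lambda,(N_x+1)^p]$ via the CCR intertwining relations, bound it by an operator Cauchy--Schwarz inequality (the paper writes it as $A^*B+B^*A\le A^*A+B^*B$), use the mean-value estimate $g_p(n)\le p(n+1)^{p-1}$ together with Young's inequality to close in terms of $(N_x+1)^p$ and $(N_y+1)^p$, and finish with Gr{\o}nwall. The paper works directly with the Duhamel integral form rather than differentiating $F_x$, and simply presents the case $\Lambda=\Gamma$ without discussing the approximation issues you raise; your plan to first establish the estimate for finite $\Lambda$ (where local normality and number conservation make everything rigorous) and then pass to the limit using the $\Lambda$-independence of the Gr{\o}nwall constant is a reasonable way to make this step honest.
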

\begin{proof}
 We give the proof in the case $\Lambda=\Gamma$, $t>0$.
Denote
 \begin{equation}
  \nu_x(t)=\gamma (\ue^{\ui t H} (N_x+1)^{p} \ue^{-\ui t H}).
 \end{equation}
Then $\nu(0)\in \ell^\infty(\Gamma)$ by our hypothesis and $\nu(t)$ satisfies
\begin{align}
  \nu_x(t)& =  \nu_x(0) + \int\limits_0^t  \gamma(\ue^{\ui s H}\ui [H,(N_x+1)^p] \ue^{-\ui s H})\d s \label{eq:nu-Duhamel} \\
 &= \nu_x(0) + \int\limits_0^t  \gamma(\ue^{\ui s H}\ui [T,(N_x+1)^p] \ue^{-\ui s H})\d s.\notag
 \end{align}
 With $ (N_x+1)^p a_x=a_x N_x^p $, we have
 \begin{align}
 \ui [T,(N_x+1)^p]&= \sum_{y:d(x,y)=1} \ui [a_x^*a_y+ a_y^* a_x, (N_x+1)^p]
 \\
 &= \sum_{y:d(x,y)=1} \Big(\ui \big(N_x^p-(N_x+1)^p\big)a_x^*a_y -\ui a_y^* a_x \big(N_x^p-(N_x+1)^p\big)\Big). \notag
\end{align}
Using the operator inequality $A^*B+B^*A \leq A^*A + B^*B$ with $A= \big((N_x+1)^p-N_x^p\big)^{1/2}  a_x^*$, $B=\ui \big((N_x+1)^p-N_x^p\big)^{1/2}  a_y^*$, we obtain
\begin{align}
 \ui [T,(N_x+1)^p] &\leq \big((N_x+1)^p-N_x^p\big)^{1/2}  a_y a_y^*
  + a_x \big((N_x+1)^p-N_x^p\big)  a_x^* \\
  &\leq \big((N_x+1)^p-N_x^p\big) (N_y+1) +  \big((N_x+2)^p-(N_x+1)^p\big)(N_x+1). \notag
\end{align}
Since $(u+1)^p-u^p\leq p (u+1)^{p-1}$ for $u\geq 0$ and $p\geq 1$, we then find with  Young's inequality
\begin{align}
 \ui [T,(N_x+1)^p] &\leq p\big((N_x+1)^{p-1}(N_y+1) + 2^{p-1} (N_x+1)^p\big) \\
 &\leq (N_y+1)^p + (p-1+p2^{p-1})(N_x+1)^p \notag.
\end{align}
Inserting this into the identity~\eqref{eq:nu-Duhamel}, we obtain the closed inequality for $\nu(t)$
\begin{equation}
  \nu_x(t) \leq \nu_x(0) +  \sum_{y:d(x,y)=1} \int\limits_0^t \big((p-1+p2^{p-1})\nu_x(s) + \nu_y(s) \big)\d s,
\end{equation}
and thus
\begin{equation}
  \| \nu(t)\|_\infty \leq  \| \nu(0)\|_\infty +
  p\sigma (2^{p-1}+1) \int_0^t \| \nu(s)\|_\infty \d s.
\end{equation}
Gr{\o}nwall's inequality now implies the claim.
\end{proof}

With this result at hand, we can now compare the dynamics $\ue^{-\ui H_\Lambda t}$ to those, where all interaction terms in the Hamiltonian in some finite region $Y$, containing the support $X$ of our observable, have been multiplied with a projection that restricts the local particle number. 
For $Y\Subset \Lambda \subset \Gamma$ and $\lambda\geq 1$ we define
\begin{equation}\label{eq:Pdef}
 P_{Y,\lambda}:= \prod_{x\in Y} \1_{N_x\leq \lambda},
\end{equation}
$P_{Y,\lambda}^\perp=\1-P_{Y,\lambda}$, and
 \begin{equation}
  \widetilde{\tau}^{\Lambda}_t(A):=\ue^{\ui t P_{Y,\lambda}H_\Lambda P_{Y,\lambda}} A\ue^{-\ui t P_{Y,\lambda}H_\Lambda P_{Y,\lambda}}.
 \end{equation}
The following lemma quantifies the influence of the cutoff $P_{Y,\lambda}$ on the dynamics.

\begin{lm}\label{lem:cutoff}
Assume that $\lim_{d(x,y) \to \infty} v(x,y) = 0$.
There exists a constant $C_\Gamma$ so that for all $\emptyset \neq X\subseteq Y\Subset \Lambda\subseteq \Gamma$, $p\geq 2$, $M\geq 1$, $\gamma_\Lambda$ satisfying~\eqref{eq:moment_bound}, $t\in \R$, and $A\in \mathcal{R}^\mathrm{inv}_X$, $B\in \mathcal{L}(\mathscr{F}_\Lambda)$
  we have
\begin{equation*}
|\gamma_\Lambda(\tau^\Lambda_t(A)B) - \gamma_{\Lambda}(\widetilde\tau^\Lambda_t(P_{Y,\lambda}AP_{Y,\lambda})B)|\leq C_\Gamma
\lambda^{-p/2+1}  |Y| |X|^{p/2} \ue^{\eta |t|} \sqrt{M}  \|A\|\|B\|,
\end{equation*}
with $\eta\geq 2$ given by Proposition~\ref{prop:propagate_N}.
\end{lm}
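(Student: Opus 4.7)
The plan is to combine Duhamel's formula with a Cauchy--Schwarz estimate, using the projection $P^\perp := \1 - P_{Y,\lambda}$ to isolate the high-occupation sector where the two dynamics disagree. First, differentiating $s\mapsto \tau^\Lambda_s(\widetilde\tau^\Lambda_{t-s}(A))$ yields the Duhamel identity
\begin{equation*}
\tau^\Lambda_t(A) - \widetilde\tau^\Lambda_t(A) = i\int_0^t \tau^\Lambda_s\bigl([W, \widetilde\tau^\Lambda_{t-s}(A)]\bigr)\,ds,
\end{equation*}
where $W := H_\Lambda - PH_\Lambda P$. Using $P^2=P$, I would rewrite $W = H_\Lambda P^\perp + P^\perp H_\Lambda P$, so that every summand of $W$ carries a factor $P^\perp$. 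Setting $\sigma_s := \gamma_\Lambda \circ \tau^\Lambda_s$ and $B_s := \tau^\Lambda_{-s}(B)$ with $\|B_s\|=\|B\|$, Proposition~\ref{prop:propagate_N} provides the propagated moment bound $\sigma_s((N_x+1)^p) \leq Me^{\eta|s|}$ uniformly in $x\in\Lambda$, and the task reduces to bounding $|\sigma_s([W, \widetilde A_{t-s}] B_s)|$, with $\widetilde A_r := \widetilde\tau^\Lambda_r(A)$ satisfying $\|\widetilde A_r\|\leq \|A\|$.

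Second, I would expand the commutator $[W, \widetilde A_{t-s}]$ into four terms of the form $\sigma_s(D_1 P^\perp D_2)$, and for each apply the Cauchy--Schwarz inequality with a splitting that places the bounded operators $\widetilde A_{t-s}, B_s$ on one side, giving the clean contribution $\|A\|\|B\|$, and the factor containing $H_\Lambda$ on the other. The operator inequality $P^\perp \leq \lambda^{-p}\sum_{y\in Y}(N_y+1)^p$ together with the propagated moment bound yields the Markov-type contribution $\lambda^{-p/2}\sqrt{M|Y|}$. The combinatorial factor $|X|^{p/2}$ arises from the convexity bound $(N_y+1)^p\leq (N_X+1)^p \leq |X|^{p-1}\sum_{x\in X}(N_x+1)^p$ for $y\in X$, which is valid thanks to the gauge invariance $[A, N_X]=0$ ensuring $A^*(N_X+1)^pA \leq \|A\|^2(N_X+1)^p$. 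Finally, the extra factor of $\lambda$ (upgrading $\lambda^{-p/2}$ to $\lambda^{-p/2+1}$) arises because the creation/annihilation operators in $H_\Lambda$ acting on the image of $P^\perp$, where some $N_y>\lambda$, produce an amplification of order $\sqrt{\lambda}$.

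The main obstacle will be controlling the Cauchy--Schwarz factor containing $H_\Lambda$ uniformly in $\Lambda$: for infinite $\Lambda$, the operator $H_\Lambda$ has unboundedly many terms, and one must show that only terms supported within a finite neighborhood of $Y$ contribute. This requires a careful operator inequality bounding $H_\Lambda P^\perp H_\Lambda$ in terms of \emph{localized} polynomials of the particle numbers $N_y$ for $y$ in an $O(1)$-neighborhood of $Y$, exploiting that the distant interaction terms commute with $P^\perp$ and can be absorbed by the dual moment bound. The finite range of $v$ together with the $d$-dimensionality of $\Gamma$ ensure that only $O(|Y|)$ such terms survive, accounting for the linear $|Y|$ factor. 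Integrating over $s\in[0,t]$ and collecting the bounds then produces the stated inequality, with the constant $C_\Gamma$ absorbing the geometric constants.
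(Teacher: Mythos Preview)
The Duhamel formula you write is correct, but the decomposition $W=H_\Lambda P^\perp+P^\perp H_\Lambda P$ leaves a term that cannot be controlled uniformly in $\Lambda$ by the Cauchy--Schwarz split you describe. Placing the bounded operators on one side and $H_\Lambda$ on the other produces a factor such as $\sigma_s(H_\Lambda P^\perp H_\Lambda)^{1/2}$; far-away hopping terms $T_{xy}$ with $x,y\notin Y$ commute with $P^\perp$ and contribute $T_{xy}^*T_{xy}P^\perp$, so the sum over all such terms diverges as $\Lambda\to\Gamma$. Your suggestion that these distant terms ``can be absorbed by the dual moment bound'' does not close the commutator either: the cutoff generator $PH_\Lambda P$ still contains $H_{\mathrm{far}}P$, so $e^{i(t-s)PH_\Lambda P}$ genuinely spreads $A$ outside $Y$ and $[H_{\mathrm{far}},\widetilde A_{t-s}]\neq 0$. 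There is also a misattribution in your sketch: the extra factor $\lambda$ (upgrading $\lambda^{-p/2}$ to $\lambda^{-p/2+1}$) does not come from amplification on $\mathrm{ran}\,P^\perp$, but from the bound $\|T_{xy}P\|\lesssim\lambda$ --- it is the projection $P$, not $P^\perp$, that makes the hopping bounded; without a $P$ next to $T_{xy}$ the operator is simply unbounded.

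The paper sidesteps this by a two-step substitution, replacing the left and right factors $e^{\pm itH_\Lambda}$ by $e^{\pm itPH_\Lambda P}$ one at a time, and inserting $\1=P+P^\perp$ \emph{before} applying Duhamel. The $P^\perp$ piece is handled directly by Markov and the propagated moments. The $P$ piece then yields, via Duhamel, the operator $H_\Lambda P-PH_\Lambda P=P^\perp H_\Lambda P$ with $P$ on the $H_\Lambda$ side. This is the decisive point: since $V_\Lambda$ commutes with $P$ one has $P^\perp H_\Lambda P=P^\perp T_\Lambda P$, and since $T_{xy}$ commutes with $P$ unless $\{x,y\}\cap Y\neq\emptyset$, only $O(|Y|)$ hopping terms remain, each bounded by $\|T_{xy}P\|\lesssim\lambda$. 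For the substitution on the right (between $A$ and $B$) the paper introduces the auxiliary positive functional $\gamma_A(K)=\gamma_\Lambda(e^{itH_\Lambda}AKA^*e^{-itH_\Lambda})$, checks that it satisfies~\eqref{eq:moment_bound} with constant $\|A\|^2|X|^pMe^{\eta t}$, and re-applies Proposition~\ref{prop:propagate_N} to it; this is where $[A,N_X]=0$ and $N_X^p\leq|X|^{p-1}\sum_{x\in X}N_x^p$ enter to produce the $|X|^{p/2}$ factor, as you correctly anticipated.
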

\begin{proof}
In the following we abbreviate $P=P_{Y,\lambda}$, $\gamma = \gamma_{\Lambda}$ and choose $\Lambda=\Gamma$, $t\geq 0$ for the sake of simplicity. We will separate the difference in two terms. We start by computing
\begin{align}
 &\Big|\gamma\Big( \ue^{\ui t PHP} PAP \ue^{-\ui t PHP}B-\ue^{\ui t H }A P\ue^{-\ui t PHP}B\Big)\Big| \label{eq:LRBoundAndi1} \\
 &\leq  \Big|\gamma\Big(\ue^{\ui t H }P^\perp AP\ue^{-\ui t PHP }B \Big) \Big|
 + \Big|\gamma\Big( \big(\ue^{\ui t H }-\ue^{\ui t PHP}\big)PAP\ue^{-\ui tPHP}B \Big) \Big|. \notag
\end{align}
The Cauchy-Schwarz inequality for the sesquilinear form $(A,B)\mapsto \gamma(A^*B)$ gives
\begin{align}
\big|\gamma\big(\ue^{\ui t H }P^\perp  AP \ue^{-\ui t P HP  }B\big)  \big|
\leq \sqrt{\gamma(\ue^{\ui t H } P^\perp\ue^{-\ui t H })}\|A\|\|B\| 
\end{align}
The bound $P^\perp \leq \sum_{x\in Y} \1_{N_x>\lambda}$, and  Proposition~\ref{prop:propagate_N}, imply
\begin{equation}
\sqrt{\gamma(\ue^{\ui t H } P^\perp\ue^{-\ui t H })}
\leq \lambda^{-p/2} \Big(\sum_{x\in Y}\gamma\big(\ue^{\ui t H }  N_x^p\ue^{-\ui t H }\big)\Big)^{1/2} \leq \ue^{\eta t/2}\sqrt{\lambda^{-p}M |Y|},
\end{equation}
whence
\begin{equation}\label{eq:cutoff_bound1}
 \big|\gamma\big(\ue^{\ui t H }P^\perp  AP \ue^{-\ui t P HP  }B\big)  \big|
 \leq \ue^{\eta t/2}\sqrt{\lambda^{-p}M |Y|}\|A\|\|B\|.
\end{equation}
%
%
%
%
By Duhamel's formula and $PV P^\perp=0$ (since $V_{xy}$ commutes with $N_x$), we have
\begin{align}
\label{eq:cutoff_bound_Duhamel}
 &\Big|\gamma\Big(\big(\ue^{\ui t H }-\ue^{\ui t PHP}\big)P AP\ue^{-\ui tPHP }B \Big) \Big| \\
 &\leq
 \int_0^t \Big|\gamma\Big(\ue^{\ui (t-s)H}(HP-PHP)\ue^{-\ui s P HP}AP\ue^{-\ui tPHP}B \Big)\Big| \d s \notag \\
 &\leq 2\int_0^t \Big|\sum_{\myfrac{x\in Y, y\in Y[1]}{d(x,y)=1}}\gamma\big(\ue^{\ui (t-s)H}  P^\perp_{ \{x,y\}\cap Y,\lambda} T_{xy}P\ue^{-\ui s P HP} AP \ue^{-\ui t PHP} B \big)\Big|  \d s \notag \\
 &\leq 2\int_0^t \sum_{\myfrac{x\in Y,y\in Y[1]}{d(x,y)=1}}\gamma\big(\ue^{\ui (t-s)H}  P^\perp_{ \{x,y\}\cap Y,\lambda} T_{xy}P T_{xy} P^\perp_{ \{x,y\}\cap Y,\lambda} \ue^{\ui (t-s)H}\big)^{1/2} \|A\| \|B\| \d s, \notag
\end{align}
where we used that $P_{Y\setminus\{x,y\},\lambda}$  commutes with $T_{xy}$, and $P^\perp P_{Y\setminus\{x,y\},\lambda}
=P_{\{x,y\}\cap Y,\lambda}^\perp P_{Y\setminus\{x,y\},\lambda}$.
Using the bounds
\begin{equation}\label{eq:PT-bound}
\|P_{\{xy\},\lambda} T_{xy}\|\leq 2\sqrt{\lambda(\lambda+1)}\leq 2\sqrt{2}\lambda,
\qquad
\| P_{x,\lambda} T_{xy} (N_y+1)^{-1/2}\| \leq2\sqrt{\lambda+1},
\end{equation}
and Proposition~\ref{prop:propagate_N}, we obtain for $x,y\in Y$
\begin{align}
 \gamma\big(\ue^{\ui (t-s)H}  P^\perp_{ \{x,y\},\lambda} T_{xy}P T_{xy} P^\perp_{ \{x,y\},\lambda} \ue^{\ui (t-s)H}\big)
 \leq 8 \lambda^2 \gamma\big(\ue^{\ui (t-s)H} P^\perp_{ \{x,y\},\lambda}  \ue^{-\ui (t-s)H}\big) \leq 8 \lambda^{-p+2} M \ue^{(t-s)\eta},
\end{align}
and for $y\notin Y$,
\begin{align}
 \gamma\big(\ue^{\ui (t-s)H}  P^\perp_{ \{x\},\lambda} T_{xy}P T_{xy} P^\perp_{ \{x\},\lambda} \ue^{\ui (t-s)H}\big) &
 \leq 4 (\lambda+1)  \gamma\big(\ue^{\ui (t-s)H} P^\perp_{ \{x\},\lambda}(N_y+1)  \ue^{-\ui (t-s)H}\big) \\
 &\leq 4(\lambda+1)\lambda^{-p+1 }M \ue^{(t-s)\eta}.\notag
\end{align}

Bounding the number of terms in the sum by $|Y| \sigma$ and integrating in $s$ then shows that
\begin{align}
 &\Big|\gamma\Big(\big(\ue^{\ui t H }-\ue^{\ui t PHP}\big)P A P\ue^{-\ui tH}B \Big) \Big|
 \leq 8  |Y|  \lambda^{-p/2+1}  (\ue^{\eta t/2 }-1)  \sigma  \sqrt{M}\|A\| \|B \|.
 \label{eq:cutoff_bound2}
\end{align}

We now come to the term where the difference of the evolutions is between $A$ and $B$, i.e.,
\begin{equation}
  \big|\gamma\big( \ue^{\ui t H } AP \ue^{-\ui t PHP}B-\ue^{\ui t H }A \ue^{-\ui t H}B\big)\big|.
 %
\end{equation}
As above, we have
\begin{align}
 \big|\gamma\big(\ue^{\ui t H }  A \ue^{-\ui t H } P^\perp  B\big)  \big|
 \leq \sqrt{\gamma\big(\ue^{\ui t H } A\ue^{-\ui t H } P^\perp\ue^{\ui t H } A^* \ue^{-\ui t H }\big)}  \|B\|. 
\end{align}
Consider the positive linear functional $\gamma_A$ on $\mathcal{R}^\mathrm{inv}$ defined by
\begin{equation}
 \gamma_A(K)= \gamma\big(\ue^{\ui t H } A K A^* \ue^{-\ui t H }\big).
\end{equation}
For $x\notin X$ an application of Proposition~\ref{prop:propagate_N} shows
\begin{equation}
    \gamma_A(N_x^p)=  \gamma\big(\ue^{\ui t H }  N_x^{p/2} AA^*N_x^{p/2} \ue^{-\ui t H }\big) 
    \leq \|A\|^2 M \ue^{\eta t},
\end{equation}
and for $x\in X$ we have 
\begin{equation}
    \gamma_A(N_x^p)\leq   \gamma\big(\ue^{\ui t H } A N_X^p A^* \ue^{-\ui t H }\big) 
    \leq \|A\|^2 \gamma\big(\ue^{\ui t H }  N_X^p  \ue^{-\ui t H }\big) \leq  \|A\|^2 |X|^p  M \ue^{\eta t}.
\end{equation}
To obtain the result we used $N_X^p\leq |X|^{p-1} \sum_{x\in X} N_x^p$. By applying Proposition~\ref{prop:propagate_N} to $\gamma_A$, we obtain
\begin{equation}
 \sup_{x\in \Gamma} \gamma_A(\ue^{\ui t H }  N_x^p  \ue^{-\ui t H }) \leq \|A\|^2 |X|^p  M \ue^{2\eta t},
\end{equation}
and thus
\begin{equation}
  \big|\gamma\big(\ue^{\ui t H }  A P^\perp \ue^{-\ui t H } B\big)  \big| \leq \ue^{ \eta t}\sqrt{\lambda^{-p}|X|^{p} M |Y|}\|A\|\|B\|.
  \label{eq:cutoff_bound3}
\end{equation}
Then, using Duhamel's formula as in~\eqref{eq:cutoff_bound_Duhamel}, we find
\begin{align}
 \Big|\gamma\Big( &\ue^{\ui t H } A\big(\ue^{-\ui t H }-\ue^{-\ui t PHP}\big)  P B \Big) \Big|
 \leq \int_0^t \Big|\sum_{\myfrac{x\in Y, y\in Y[1]}{d(x,y)=1}}\gamma\big(\ue^{\ui t H } A  P^\perp  T_{xy}  P  \ue^{-\ui s  H} B \big)\Big|  \d s \\
  &\qquad\leq \int_0^t \sum_{\myfrac{x\in Y, y\in Y[1]}{d(x,y)=1}}\gamma_A\big( P^\perp_{\{x,y\}\cap Y,\lambda}  T_{xy}  P  T_{xy} P^\perp_{\{x,y\}\cap Y,\lambda} \big)^{1/2}\|B\| \d s .\notag
\end{align}
Using~\eqref{eq:PT-bound} together with the bounds on $\gamma_A$ derived above, we conclude as for~\eqref{eq:cutoff_bound2} that
\begin{equation}\label{eq:cutoff_bound4}
\Big|\gamma\Big( \ue^{\ui t H } A\big(\ue^{-\ui t H }-\ue^{-\ui t PHP}\big)  P B \Big) \Big|
\leq 8  |Y|  \lambda^{-p/2+1}  t \ue^{\eta t/2 }  \sigma  \sqrt{M|X|^p}\|A\| \|B \|.
\end{equation}
Combining~\eqref{eq:cutoff_bound1},~\eqref{eq:cutoff_bound2},~\eqref{eq:cutoff_bound3}, and~\eqref{eq:cutoff_bound4} then implies the claim.
\end{proof}

\begin{lm}\label{lem:local_approx_V}
Assume that $v(x,y) = 0$ if $d(x,y) \geq r \geq 1$. There exists $\kappa>0$ only depending on $r$ and $\sigma$, and for all $R>0$ there is a constant $C>0$ so that for all $X\Subset \Lambda$ with $\mathrm{diam}(X)\leq R$, $A\in \mathcal{L}(\mathscr{F}_X)$, 
$t>0$, $\lambda \geq 1$ and $m\in \N$ with $ X[(2m+1)r]\subseteq \Lambda$, we have
\begin{equation*}
 \Big\|\ue^{\ui t P H_{X[2mr]} P} PAP \ue^{-\ui t P H_{X[2mr]} P} - \ue^{\ui t P H_\Lambda P} PAP \ue^{-\ui t P H_\Lambda P}\Big\|\leq
 C   \lambda m^{d}\|A\| \frac{(\kappa \lambda t)^{m+1}}{(m+1)!}
 \end{equation*}
 with $P=P_{X[(2m+1)r],\lambda}$ in~\eqref{eq:Pdef}.
\end{lm}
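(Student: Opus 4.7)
The plan is to apply Duhamel's formula and then a Lieb--Robinson-type bound for the bounded local dynamics generated by $K_0 := P H_{X[2mr]} P$. Setting $K := P H_\Lambda P$ and $W := K - K_0 = P(H_\Lambda - H_{X[2mr]}) P$, Duhamel's formula yields
\begin{equation*}
\big\|\ue^{\ui t K}A\ue^{-\ui t K} - \ue^{\ui t K_0}A\ue^{-\ui t K_0}\big\| \leq \int_0^t \big\|\big[W, A_s^{(0)}\big]\big\|\, ds,
\end{equation*}
where $A_s^{(0)} := \ue^{\ui s K_0}A\ue^{-\ui s K_0}$. Because all interaction terms appearing in $K_0$ are spatially supported in $X[2mr]$, the evolved observable $A_s^{(0)}$ remains supported there for every $s$. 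Hence only the ``bridge'' terms $h_Z \in \{T_{xy},V_{xy}\}$ of $W$---those with $Z \cap X[2mr] \neq \emptyset$ and $Z \not\subseteq X[2mr]$---can give a nonzero commutator. By the range hypothesis $v(x,y)=0$ for $d(x,y)\geq r$, every such $Z$ lies inside $X[(2m+1)r]$ and satisfies $d(X,Z)\geq (2m-1)r$. The cutoff $P$ then controls their norms via~\eqref{eq:PT-bound} and $\|PV_{xy}P\|\leq \|v\|_\infty\lambda^2$, while the $d$-dimensional geometry bounds their number by $|\partial X[2mr]|\cdot O(r^d) = O(m^{d-1})$.

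Next, I would invoke the standard Lieb--Robinson bound for $K_0$, which is a sum of bounded, finite-range local terms on $X[(2m+1)r]$. A careful iteration of Duhamel's formula applied to $A_s^{(0)}$ gives, for any bridge term $h_Z$,
\begin{equation*}
\big\|[A_s^{(0)}, h_Z]\big\| \leq C|X|\,\|A\|\,\|h_Z\|\sum_{n\geq d(X,Z)/r}\frac{(\kappa\lambda s)^n}{n!},
\end{equation*}
where the effective Lieb--Robinson velocity is of order $\lambda$; the key point is that although $\|PVP\| \sim \lambda^2$, the operator $V$ commutes with every $N_z$, so in the iteration $[V_{xy}, A_s^{(0)}]$ produces only one net factor of $\lambda$ per propagation step, since it only ``sees'' the off-diagonal part of the evolved observable, which is itself bounded via the hopping estimate~\eqref{eq:PT-bound}. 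Using $d(X,Z) \geq (2m-1)r \geq (m+1)r$ (valid for $m \geq 2$; the cases $m \in \{0,1\}$ are handled directly by a single Duhamel step), the sum is dominated by its first term $(\kappa\lambda s)^{m+1}/(m+1)!$. Integrating in $s \in [0,t]$, summing over the $O(m^{d-1})$ bridges and absorbing the per-bridge factor $\|h_Z\| \lesssim \lambda$ then yields the claimed bound $C\lambda m^d \|A\| (\kappa\lambda t)^{m+1}/(m+1)!$.

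The main technical obstacle is establishing the Lieb--Robinson estimate with velocity of order $\lambda$ rather than the naive $\lambda^2$ suggested by $\|PV_{xy}P\| \sim \lambda^2$. This requires a recursive argument that tracks, at each commutator step, the separate action of $T$ and $V$ terms on the evolved observable, together with the cutoff-dependent norm bound~\eqref{eq:PT-bound} for hopping and the number-preservation $[V_{xy},N_z]=0$ of the interaction. Once this sharper velocity is in place, the combinatorial estimate $|\partial X[2mr]|\sim m^{d-1}$ and the factorial decay from the Lieb--Robinson sum combine cleanly to give the stated bound, uniformly in $\Lambda$, in the choice of $X$ with $\mathrm{diam}(X)\leq R$, and in the state used to evaluate the observables.
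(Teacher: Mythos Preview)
Your overall strategy---Duhamel followed by a Lieb--Robinson bound for the cutoff dynamics---is the same as the paper's, but the implementation diverges at two substantive points.

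First, your difference $W=P(H_\Lambda-H_{X[2mr]})P$ is in general unbounded: the cutoff $P$ acts only on sites in $X[(2m+1)r]$, so interaction and hopping terms $Ph_ZP$ with $Z$ entirely outside that set are not controlled. Your claim that only bridge terms contribute to $[W,A_s^{(0)}]$ because ``$A_s^{(0)}$ remains supported in $X[2mr]$'' is not justified. The generator $K_0=PH_{X[2mr]}P$ carries the projection $P$, and although $[P,K_0]=0$ gives $[P,A_s^{(0)}]=\ue^{\ui sK_0}[P,A]\ue^{-\ui sK_0}$, a general $A\in\mathcal L(\mathscr F_X)$ has no reason to commute with $\prod_{x\in X}\1_{N_x\le\lambda}$. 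For $h_Z$ with $Z$ far outside $X[(2m+1)r]$ one then finds $[Ph_ZP,A_s^{(0)}]=h_Z[P,A_s^{(0)}]$, which is unbounded, so the Duhamel integrand is not a priori a bounded operator and the formula as you wrote it is formal. The paper avoids this by first adding $PH_{\Lambda\setminus X[2mr]}P$ to the inner generator (using that $H_{\Lambda\setminus X[2mr]}$ commutes with $A$ and $H_{X[2mr]}$), so the evolution of $A$ is unchanged while the resulting difference $D=PH_\Lambda P-PH_{X[2mr]}P-PH_{\Lambda\setminus X[2mr]}P$ is a \emph{finite} sum of boundary-crossing terms, explicitly bounded and supported in the annulus $X[(2m+1)r]\setminus X[(2m-1)r]$.

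Second, your mechanism for obtaining a Lieb--Robinson velocity of order $\lambda$ rather than $\lambda^2$ is too vague to be a proof. You suggest tracking $T$- and $V$-contributions separately, arguing that $V$ ``only sees the off-diagonal part''; it is unclear how to turn this into a clean iteration. The paper's device is sharper: pass to the interaction picture via $U(0,t)=\ue^{\ui tPV_{X[2mr]}P}\ue^{-\ui tPH_{X[2mr]}P}$, whose time-dependent generator $\widetilde T_{xy}(s)=\ue^{\ui sPV P}PT_{xy}P\ue^{-\ui sPVP}$ consists solely of conjugated hopping terms with $\|\widetilde T_{xy}(s)\|\le 2\sqrt2\,\lambda$ and support enlarged by at most $r$. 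The potential is thereby removed from the propagation analysis entirely, and the standard iteration with the Jacobi identity then produces the factor $(\kappa\lambda t)^m/m!$ directly, with $\kappa$ depending only on $r,\sigma$.
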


\begin{proof}
We start by noting that, since $H_{\Lambda\setminus X[2mr]}$ commutes with $A$ and $H_{X[2mr]}$,
 \begin{equation}
  \ue^{\ui t (PH_{\Lambda\setminus X[2mr]}P + PH_{X[2mr]}P)  }PAP \ue^{-\ui t ( PH_{\Lambda\setminus X[2mr]}P + PH_{X[2mr]}P)} = \ue^{\ui t  PH_{X[2mr]}P  }PAP \ue^{-\ui t  PH_{X[2mr]}P}.
 \end{equation}
Hence, it is sufficient to compare the evolution on the left to $\ue^{-\ui t P H_{\Lambda} P}$. The difference of the generators of the two dynamics is bounded and supported in the set $X[(2m+1)r]\setminus X[(2m-1)r]$.
It equals
\begin{equation}
 P H_\Lambda P - PH_{\Lambda\setminus X[2mr]}P - PH_{X[2mr]}P
 =  \sum_{\myfrac{ \{x,y\}\cap \Lambda\setminus X[2mr]\neq \emptyset }{
 \{x,y\}\cap  X[2mr]\neq \emptyset }} \underbrace{PT_{xy}P + PV_{xy}P}_{=:D_{xy}}=:D,
\end{equation}
and thus
\begin{multline}\label{eq:difference_LRB}
 \ue^{\ui t P H_{X[2mr]} P} PAP \ue^{-\ui t P H_{X[2mr]} P} - \ue^{\ui t P H_\Lambda P} PAP \ue^{-\ui t P H_\Lambda P} \\
 =-\ui\int_0^t  \ue^{\ui (t-s) P H_{\Lambda} P}\ue^{\ui s P H_{X[2mr]} P}\Big[ \ue^{-\ui s P H_{X[2mr]} P} D\ue^{\ui s P H_{X[2mr]} P} ,   PAP\Big]\ue^{-\ui s P H_{X[2mr]} P}\ue^{-\ui (t-s) P H_{\Lambda} P} \d s. 
\end{multline}
Since the distance of  the points $x,y$ contibuting to $D$ to the support of $A$ is larger than $(2m-1)r$ and $PH_{X[2mr]}P$ is (the projection of) a sum of bounded, local terms, we can use Lieb--Robinson bounds to estimate the norm of the time-dependent commutator in the above equation.
Such a bound for quantum spin systems can be found in~\cite[Thm.~3.1]{NSY2019}. The proof of a similar bound in our setting goes along the same lines. For the sake of completeness, we provide the details.

We can remove the potential terms, which commute with the number operators, from the Hamiltonian, at the price of making the hopping time-dependent (cf.~\cite{NSY2019, KVS2024}).
This yields an improved estimate, where we can bound interaction terms by $\lambda$ instead of $ \lambda^2$. To do this, we introduce the dynamics
\begin{equation}
 U(0,t) = \ue^{\ui t PV_{X[2mr]}P}\ue^{-\ui t P H_{X[2mr]} P}= 1 - \ui \int_{0}^{t} \ue^{\ui   PV_{X[2mr]} P} P T_{X[2mr]}P \ue^{-\ui s  PV_{X[2mr]} P} U(s) \d s
 \end{equation}
 with time-dependent generator
 \begin{equation}
\widetilde {T}_{X[2mr]}(s)=\ue^{\ui s  PV_{X[2mr]} P} P T_{X[2mr]}P \ue^{-\ui s  PV_{X[2mr]} P}.
 \end{equation}
 The norm of the commutator involving $A,D$ can be written in terms of $U(t) = U(0,t)$ as
 \begin{equation}
  \Big\|\Big[ \ue^{-\ui s P H_{X[2mr]} P} PDP\ue^{\ui s P H_{X[2mr]} P} ,   PAP\Big] \Big\| = \Big\|\Big[ U(s) DU(s)^* ,  \ue^{\ui s PV_{X[2mr]}P} PAP\ue^{-\ui s PV_{X[2mr]}P}\Big] \Big\|.
 \end{equation}
Since $V$ equals a sum of mutually commuting terms, we have
\begin{equation}
 \ue^{\ui s P V_{X[2mr]} P} PAP \ue^{-\ui s P V_{X[2mr]} P} = P\ue^{\ui s P V_{X[r]} P}A \ue^{-\ui s PV_{X[r]} P}P=:\widetilde {A},
\end{equation}
which is the projection of an operator supported in $X[r]$. Since for any pair of neighboring points $x,y$ the potential terms involving points outside of $x[r]$ commute with $T_{xy}$ and $P$, we also have
\begin{equation} \ue^{\ui s V_{X[2mr]}}P T_{xy} P \ue^{-\ui s V_{X[2mr]}}=P \ue^{\ui s V_{x[r]}}T_{xy} \ue^{-\ui s V_{x[r]}}P=:
\widetilde{T}_{xy}(s).
\end{equation}
Note that $\widetilde{T}_{xy}(s)$ is the projection of an operator supported in $x[r]$, and in particular $\widetilde{T}_{xy}(s)$ commutes with $\widetilde A$ if $x[r]\cap X[r]=\varnothing$.

For a bounded operator $B$ we denote $B(s,t)=U(s,t)  B U(s,t)^*$. Assume that $B$ commutes with $\widetilde {A}$ and that there is some $x_0\in \Gamma$ so that $B$ also commutes with $\widetilde T_{xy}(t)$, unless $x\in x_0[2r]$ (as is the case for $B=D_{x_0y}$).
Then, using the Jacobi identity,
\begin{align}
 \frac{\d}{\d t} [B(s,t),\widetilde {A}] =& \Big[ U(s,t)\ui\big[\widetilde T_{X[2mr]}(s), B\big]U(s,t)^*,\widetilde {A}\Big] \\
 =& \sum_{\myfrac{x\in x_0[2r]}{ y\in x[1] }}\ui \Big[ U(s,t)\big[\widetilde T_{xy}(s), B\big]U(s,t)^*,\widetilde {A}\Big] \notag \\
 =&\begin{aligned}[t]
 \sum_{\myfrac{x\in x_0[2r]}{ y \in x[1] }} \Big(& \Big[ \ui U(s,t)\widetilde T_{xy}(s)U(s,t)^*, \big[B(s,t), \widetilde {A}\big]\Big] \notag \\
 & +\ui \Big[ B(s,t),  \big[\widetilde {A}, U(s,t)\widetilde T_{xy}(s)U(s,t)^*\big]\Big] \Big).
 \end{aligned}
  \notag
\end{align}
The above equation is of the form $\partial_t f(s,t) = \mathrm{i} [ K(s,t), f(s,t) ] + \mathrm{i} C(s,t)$ with a self-adjoint operator $K(s,t)$ and the initial condition $f(s,s) = 0$. Its solution satisfies the bound $\Vert f(s,t) \Vert \leq \int_s^t \Vert C(s',t) \Vert \mathrm{d} s' $, and hence
\begin{equation}\label{eq:LR-intbound}
 \| [B(s,t),\widetilde {A}] \| \leq  2\|B\| \int_s^t \sum_{\myfrac{x\in x_0[2r], }{ y \in x[1] }}
 \big\| \big[U(s',t)\widetilde T_{xy}(s')U(s',t)^*, \widetilde {A}\big]\big\|\d s'.
\end{equation}
If the distance of $X[r]$ and $x_1\in x_0[2r]$ is larger than $r$, then $\widetilde T_{x_1 y}(s')$ with $x_1\in x_0[2r]$ is again an operator with the same properties as $B$. That is, it commutes with $\widetilde A$ and all hopping terms $\widetilde T_{xy}(t)$, where $x\notin x_1[2r]$. The above holds if $d(x_0,X) \geq 2kr$ with $k \geq 2$. Accordingly, we can iterate the inequality~\eqref{eq:LR-intbound} another $k-1$ times.
Using the trivial bound $\|[U(s,t)\widetilde T_{xy}(s)U(s,t)^*,\widetilde A] \|\leq 2 \|\widetilde T_{xy}(s)\| \|\widetilde A\|$ in the last iteration, this yields
\begin{multline}
 \|[B(0,t),\widetilde {A}]\|
 \leq 2^{k}\|B\| \| \widetilde A\| \hspace{-12pt}
 \int\limits_{0\leq s_1  \dots \leq s_{k}\leq  t}
 \sum_{x_1\in x_0[2r]} \|\widetilde T_{x_1 y}(s_{1})\| \cdots\hspace{-16pt} \sum_{x_{k}\in x_{k-1}[2r]} \|\widetilde T_{x_{k} y}(s_{k})\|  \d s_{k} \dots \d s_1 . 
\end{multline}
We know from~\eqref{eq:PT-bound} that $\| \widetilde T_{x y}(t)\|\leq 2\sqrt{2}\lambda$, and, using Definition~\ref{def:dDimGraph}, we check that the number of terms in each sum is bounded by $\sigma^2(2r)^d$. Thus, with $\kappa= 4\sqrt{2}\sigma^2(2r)^d$, we have
\begin{equation}
 \| [B(0,t),\widetilde {A}]\|\leq \|B\| \| \widetilde A\| \frac{(\kappa \lambda t)^{k}}{k!}.
\end{equation}

We now apply this to each summand $D_{xy}$ in~\eqref{eq:difference_LRB}, where $k=m$.
The number of such terms in $D$ is bounded by $ \sigma r^d|X[(m+1)r]| \leq (m+1)^d \sigma^2 r^{2d}|X|$.
With $\|D_{xy}\|\leq (2+\|v\|_\infty)\lambda^2$, we thus obtain
\begin{align}
\|\eqref{eq:difference_LRB}\|\leq  \|D\|\|A\| \int_0^t \frac{(\kappa \lambda s)^{m}}{(m)!} \d s \leq \|A\|  C m^{d} \lambda  \frac{(\kappa \lambda t)^{m+1}}{(m+1)!},
\end{align}
which proves the claim.
\end{proof}

\begin{proof}[Proof of Theorem~\ref{thm:unif_approx}]
Applying first Lemma~\ref{lem:cutoff} with $Y=X[(2m+1)r]$ and $\lambda= m /(\ue^{2}\kappa T)$ and then Lemma~\ref{lem:local_approx_V} together with $(m+1)!\geq  m^{m+1}\ue^{-m}$, we obtain
\begin{align}
 \Big| \gamma\Big((\tau_t^\Lambda(A)-\tau^{X[2mr]}_t(A))B\Big) \Big|
 &\stackrel{\ref{lem:cutoff}}\leq \Big| \gamma\Big((\widetilde{\tau}_t^\Lambda(PAP)-\widetilde{\tau}^{X[2mr]}_t(PAP))B\Big) \Big| \label{eq:dynamics-full-approx} \\
 &\qquad + 2\lambda^{-p/2+1}  |X[(2m+1)r]| |X|^{p/2}\ue^{\eta t} \sqrt{M} C_\Gamma \|A\|\|B\| \notag \\
 & \stackrel{\ref{lem:local_approx_V}}\leq C
 \Big( m^{d+1} \ue^{-m} + m^{d-p/2+1} \Big)\|A\|\|B\|. \notag
\end{align}
Since $p/2>d+1$, taking $m$ sufficiently large makes the parenthesis smaller than $\veps$, which proves the claim.
\end{proof}


%
%

\begin{proof}[Proof of Corollary~\ref{cor:LRB}]
 This follows from~\eqref{eq:dynamics-full-approx} with the choice $m=\lfloor d(X,Y)/(2r)\rfloor$ and the identity
 \begin{equation}
      \gamma\big([\tau_t^{X[2mr]}(A),B]\big)=0.
 \end{equation}
\end{proof}

\section{Equilibrium states}\label{sect:KMS}

In this section we show the existence of KMS states. In particular, we show that the limit of any convergent net (or sequence) of suitable finite-volume Gibbs states at temperature $\beta > 0$ satisfies the KMS condition. Our result applies to the standard Bose--Hubbard model, see Remark~\ref{exm:Bose--Hubbard} below.

Our definition of KMS states requires moment bounds on the local particle numbers in order to be able to define the quasi-local dynamics in Theorem~\ref{thm:gamma_dynamics} and formulate the KMS condition for this dynamics. 
\begin{defn}\label{def:KMS}
 Let $\gamma$ be a state on $\mathcal{B}$ with $\gamma(\tau_t(A))=\gamma(A)$ for $A\in \mathcal{R}^\mathrm{inv}$.
 We call $\gamma$ a tempered KMS state at inverse temperature $\beta>0$ if it satisfies the moment bound~\eqref{eq:moment_bound} for some $p>2d+2$ and if
for all $A,B\in \mathcal{R}^\mathrm{inv}$  there exists a bounded, continuous function
 \begin{equation*}
  F:S_\beta:=\{z \in \C : -\beta \leq \Im z \leq 0\} \to \C,
 \end{equation*}
which is holomorphic in $\mathrm{int} (S_\beta)$, and satisfies
\begin{equation*}
 F(t)=\gamma(\tau_t(A)B), \qquad F(t-\ui \beta)=\gamma(B\tau_t(A)).
\end{equation*}
\end{defn}

 Note that we do not require $t \mapsto \tau_t(A)$ to be continuous for all $A \in \mathcal{B}$, as is required by the definition of KMS states for $C^*$-dynamical systems.
 Our form of the KMS condition should rather be compared to the usual condition (cf.~\cite{DJP2003}) for the $W^*$-dynamical system obtained from Theorem~\ref{thm:gamma_dynamics}.

The main result of this section is captured in the following theorem. 

\begin{thm}\label{thm:KMS}
Let $v$ be an interaction potential with $v(x,y)=0$ if $d(x,y)\geq r$ and assume that $v$, $\mu_\Lambda \in \mathbb{R}$, and $\beta > 0$ are such that $Z_\Lambda:=\Tr{ \ue^{-\beta (H_\Lambda- \mu_\Lambda N_\Lambda)}} < + \infty$ holds for any $\Lambda \Subset \Gamma$. Assume in addition that the state 
 \begin{equation}
 \gamma_\Lambda(A)= Z_\Lambda^{-1}
 \Tr{ A \ue^{-\beta (H_\Lambda- \mu_\Lambda N_\Lambda)} \otimes | \Omega_{\Lambda^{\mathrm{c}}} \rangle \langle \Omega_{\Lambda^{\mathrm{c}}} | } ,
 \label{eq:finiteVolumeGibbsStateTimesVacuum}
 \end{equation}
 where $\Omega_{\Lambda^{\mathrm{c} }}$ denotes the vacuum vector of the Fock space $\mathscr{F}_{\Lambda^{\mathrm{c}}}$ (compare~\eqref{eq:ExponentialIdentity}), satisfies
\begin{equation}
 \sup_{\Lambda \Subset \Gamma} \sup_{x\in \Lambda} \gamma_\Lambda \big((1+N_x)^p\big) <\infty
 \label{eq:MomentBoundAssumption}
\end{equation}
with some $p>2d+2$.
Then any accumulation point of $\{ \gamma_{\Lambda} \}_{\Lambda \Subset \Gamma}$ is a tempered KMS state at inverse temperature $\beta > 0$. 
\end{thm}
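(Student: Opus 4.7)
The plan is to extract a weak-$*$ convergent subnet $(\gamma_{\Lambda_\alpha})_\alpha\to\gamma$ of the Gibbs states on $\mathcal{B}$, available by weak-$*$ compactness of the state space, and then verify the three defining properties of a tempered KMS state by passage to the limit. The moment bound transfers because each $(1+\lambda^{-1}N_x)^{-p}(1+N_x)^p$ is a bounded gauge-invariant operator, hence an element of $\mathcal{B}$ dominated by $(1+N_x)^p$; weak-$*$ convergence together with the hypothesis~\eqref{eq:MomentBoundAssumption} bound $\gamma((1+\lambda^{-1}N_x)^{-p}(1+N_x)^p)$ uniformly in $\lambda$, and sending $\lambda\to\infty$ gives~\eqref{eq:moment_bound} via the definition~\eqref{eq:momentBound}. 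For $\tau_t$-invariance on strictly local $A\in\mathcal{R}^\mathrm{inv}_X$, trace cyclicity and the commutation $[\rho_{\Lambda_\alpha},H_{\Lambda_\alpha}]=0$ give $\gamma_{\Lambda_\alpha}(\tau_t^{\Lambda_\alpha}(A))=\gamma_{\Lambda_\alpha}(A)$; Theorem~\ref{thm:unif_approx}, applied uniformly over the family $\gamma_{\Lambda_\alpha}$ (which all satisfy~\eqref{eq:moment_bound} with the given constants, trivially so outside $\Lambda_\alpha$) both with $\Lambda=\Lambda_\alpha$ and with $\Lambda=\Gamma$, yields $\gamma_{\Lambda_\alpha}(\tau_t(A)-\tau_t^{\Lambda_\alpha}(A))\to 0$, whence $\gamma(\tau_t(A))=\gamma(A)$. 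Quasi-locality of $\mathcal{R}^\mathrm{inv}$ extends invariance to the full algebra.

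The core of the proof is the KMS condition. For $A,B\in\mathcal{R}^\mathrm{inv}_X$ with $X\subseteq\Lambda_\alpha$, I set $K_{\Lambda_\alpha}:=H_{\Lambda_\alpha}-\mu_{\Lambda_\alpha}N_{\Lambda_\alpha}$ and define
\[
 F_\alpha(z):=Z_{\Lambda_\alpha}^{-1}\Tr{\ue^{\ui zK_{\Lambda_\alpha}}A\,\ue^{-\ui zK_{\Lambda_\alpha}}B\,\ue^{-\beta K_{\Lambda_\alpha}}},\qquad z\in S_\beta.
\]
Gauge invariance of $A$ allows $H_{\Lambda_\alpha}$ to be replaced by $K_{\Lambda_\alpha}$ in the dynamics; since $K_{\Lambda_\alpha}$ is bounded below with $\ue^{-\beta K_{\Lambda_\alpha}}$ trace class, a standard spectral-calculus argument (distributing $\ue^{-\beta K_{\Lambda_\alpha}}$ to dominate the unbounded parts of $\ue^{\pm\ui zK_{\Lambda_\alpha}}$ on the strip) shows that $F_\alpha$ is continuous on $S_\beta$ and holomorphic in its interior. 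Trace cyclicity produces the boundary identities
\[
 F_\alpha(t)=\gamma_{\Lambda_\alpha}(\tau_t^{\Lambda_\alpha}(A)B),\qquad F_\alpha(t-\ui\beta)=\gamma_{\Lambda_\alpha}(B\tau_t^{\Lambda_\alpha}(A)),
\]
and Hadamard's three-lines theorem together with $|\gamma_{\Lambda_\alpha}(CD)|\leq\|C\|\|D\|$ yields the uniform bound $|F_\alpha(z)|\leq\|A\|\|B\|$ on $S_\beta$.

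Montel's theorem extracts a subnet (relabeled $F_n$) converging locally uniformly on $\mathrm{int}(S_\beta)$ to a bounded holomorphic $F$. Combining weak-$*$ convergence of $\gamma_{\Lambda_n}$ on $\mathcal{B}$ with Theorem~\ref{thm:unif_approx} gives the pointwise boundary convergence
\[
 F_n(t)\to\gamma(\tau_t(A)B),\qquad F_n(t-\ui\beta)\to\gamma(B\tau_t(A)),
\]
where the approximation on the right-multiplication side is obtained by replacing $(A,B)$ with $(A^*,B^*)$ and using that $\tau_t$ is a $*$-automorphism. The main obstacle is the final identification: pointwise convergence of uniformly bounded boundary data does not automatically produce boundary continuity of the holomorphic limit $F$. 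I would resolve this via the Poisson integral representation for bounded holomorphic functions on the strip---each $F_n$ is continuous on $S_\beta$ and hence equals the Poisson integral of its boundary data against the explicit strip kernels, and dominated convergence transfers the representation to $F$. The candidate limiting boundary functions $t\mapsto\gamma(\tau_t(A)B)$ and $t\mapsto\gamma(B\tau_t(A))$ are continuous by Theorem~\ref{thm:gamma_dynamics} and the strong continuity of $U_\gamma(t)$, so the resulting Poisson integral yields a bounded continuous extension of $F$ to $S_\beta$ with the correct boundary values. A final norm-density argument in $A,B$ then extends the KMS property from strictly local observables to all of $\mathcal{R}^\mathrm{inv}$.
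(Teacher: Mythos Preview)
Your overall architecture matches the paper's: Banach--Alaoglu for the accumulation point, transfer of the moment bound, $\tau_t$-invariance via Theorem~\ref{thm:unif_approx}, finite-volume KMS functions $F_\alpha$ uniformly bounded on the strip, passage to the limit, and a density argument in $A,B$. The genuine gap is in the passage to the limit. You extract a convergent \emph{subnet} via Montel and then write ``dominated convergence transfers the representation to $F$''. But the Dominated Convergence Theorem fails for nets in general, and here you cannot avoid nets: $\mathcal{B}$ is non-separable, so the weak-$*$ convergent object produced by Banach--Alaoglu is a subnet, not a subsequence, and the same is inherited by your family $(F_\alpha)$. Your pointwise boundary convergence $F_\alpha(t)\to\gamma(\tau_t(A)B)$ along this net is therefore not enough to pass the Poisson integral to the limit. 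Invoking Theorem~\ref{thm:gamma_dynamics} for continuity of the \emph{limiting} boundary data is a nice idea, but it does not help you match the Montel limit $F$ with the Poisson integral of $g_0,g_\beta$ --- that identification is precisely the DCT step.

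The paper confronts this issue explicitly. It proves a uniform derivative bound on $t\mapsto\gamma_\Lambda(\tau_t^{X[R]}(A)B)$ (Lemma~\ref{lem:boundDerivativeGreensFunction}), which yields equicontinuity and upgrades the pointwise convergence of the boundary values to \emph{locally uniform} convergence (via \cite[Lem.~6.3.23]{BraRob2}). With locally uniform convergence in hand, one can choose $\alpha_n$ so that the error on $[-n,n]$ is $\leq 1/n$, thereby producing a genuine \emph{sequence} $F_{\Lambda_n}$ to which \cite[Prop.~IV.4.3]{Simon1993} applies. The paper even flags the point after the proof: passing from net to sequence is necessary because the cited result relies on dominated convergence. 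Your Poisson-kernel route would be a legitimate alternative to \cite[Prop.~IV.4.3]{Simon1993} \emph{once} you have a sequence, but you still need the equicontinuity input (Lemma~\ref{lem:boundDerivativeGreensFunction}) to get there.
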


Since $\{ \gamma_{\Lambda} \}_{\Lambda \Subset \Gamma}$ has at least one accumulation point by the Banach--Alaoglu Theorem, Theorem~\ref{thm:KMS} directly implies the existence of KMS states.
\begin{rmk}\label{exm:Bose--Hubbard}
We expect that the hypotheses of Theorem~\ref{thm:KMS}, particularly \eqref{eq:MomentBoundAssumption}, are satisfied under the assumptions that the interaction is superstable (which is the generic case and holds if $U>0$ in~\eqref{eq:BoseHubbardHamiltonian}) and that $\lim_{n \to \infty} \mu_{\Lambda_n} = \mu \in \mathbb{R}$. Indeed,~\cite[Prop.~2.2.1]{Park1985} provides bounds on the expectation of the exponential of local number operators in the Gibbs state for bosons in $\mathbb{R}^d$ under these assumptions. The argument relies on path integral techniques, and we expect that it can be adapted to the lattice setting. 
On lattices, \eqref{eq:MomentBoundAssumption} has also been proved by other methods,  for the first moment in~\cite{lemm2023thermal} and for general $p$ at high temperatures in~\cite{tong2024boson}.
We now sketch a simple proof under the more restrictive assumptions $\Gamma=\Z^d$, $v\geq 0$, and $\lim_{n \to \infty} \mu_{\Lambda_n} = \mu < -2d$. The partition function is well defined in this case because $v$ is repulsive. To check the validity of the moment bounds, consider the $2n$-point function
\begin{equation}
    \varrho_{\Lambda_n}(x_1,...,x_n;y_1,...,y_n) = \gamma_{\Lambda_n}( a_{y_1}^* ... a_{y_1}^* a_{x_1} ... a_{x_n} ). 
\end{equation}
Then, one can argue as in the proof of~\cite[Theorem~6.3.17]{BraRob1} to see that it satisfies the point-wise bound
\begin{equation}
    0 \leq \varrho_{\Lambda_n}(x_1,...,x_n;y_1,...,y_n) \leq \varrho^{\mathrm{id}}_{\Lambda_n}(x_1,...,x_n;y_1,...,y_n), 
    \label{eq:UpperBoundCorrelationFunction}
\end{equation}
where $\varrho^{\mathrm{id}}_{\Lambda_n}$ denotes the $2n$-point function of the ideal gas with chemical potential $\mu_{\Lambda_n} < -2d$. The only difference in the argument is that the Feynman--Kac formula for the Laplacian in $\mathbb{R}^d$ needs to be replaced by that of the discrete Laplacian in $\mathbb{Z}^d$. This amounts to replacing Brownian motion by a continuous time random walk on $\mathbb{Z}^d$, see e.g.,~\cite[Chapter II.3]{Carmona1990}. The right-hand side of \eqref{eq:UpperBoundCorrelationFunction} can be easily computed with the Wick rule. Using \eqref{eq:UpperBoundCorrelationFunction} and the canonical commutation relations one obtains bounds for the moments of local number operators in the state $\gamma_{\Lambda_n}$ that are uniform in the size of $\Lambda_n$.
\end{rmk}

Before we give the proof of Theorem~\ref{thm:KMS}, we state and prove two lemmas to not interrupt the main line of the argument later. In the first lemma we show that the state on $\mathcal{R}^\mathrm{inv}_{\Lambda}$ given by the density matrix in \eqref{eq:finiteVolumeGibbsStateTimesVacuum} is a tempered KMS state. 
 
 \begin{lm}\label{lem:Lambda-KMS}
 Let $\Lambda \Subset \Gamma$ and assume that $v$, $\mu_{\Lambda}$, and $\beta$ satisfy the assumptions of Theorem~\ref{thm:KMS}. Then the state given by the density matrix
 \begin{equation*}
 \rho_{\Lambda} = Z_\Lambda^{-1}
 \ue^{-\beta (H_\Lambda- \mu_\Lambda N_\Lambda)}
 \end{equation*}
 is a $(\tau_\Lambda,\beta)$--KMS state.
 \end{lm}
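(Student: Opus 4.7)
The plan is to reduce the statement to the standard finite-system Gibbs-state KMS computation, after noticing that the chemical potential term $-\mu_\Lambda N_\Lambda$ is invisible to gauge-invariant observables. Set $K := H_\Lambda - \mu_\Lambda N_\Lambda$. Since $[N_\Lambda,H_\Lambda]=0$ and every $A\in \mathcal{R}^{\mathrm{inv}}_\Lambda$ commutes with $N_\Lambda$, the one-parameter groups $e^{itH_\Lambda}$ and $e^{itK}$ implement the same $*$-automorphism on $\mathcal{R}^{\mathrm{inv}}_\Lambda$; in particular $\tau^\Lambda_t(A)=e^{itK}Ae^{-itK}$. The assumption $Z_\Lambda<\infty$ forces $K$ to be bounded below on $\mathscr{F}_\Lambda$, with purely discrete spectrum $\{E_n\}$ and an orthonormal eigenbasis $\{\phi_n\}$. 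Trace cyclicity then immediately gives $\gamma_\Lambda\circ \tau^\Lambda_t=\gamma_\Lambda$, which verifies the invariance requirement in Definition~\ref{def:KMS}. The moment-bound hypothesis is trivially satisfied at fixed finite $\Lambda$ because $\gamma_\Lambda$ is normal and $N_x$ has compactly supported spectral measure on the $\rho_\Lambda$-thermal part of $\mathscr{F}_\Lambda$ up to any polynomial decay provided by $e^{-\beta K}$.

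For $A,B\in \mathcal{R}^{\mathrm{inv}}_\Lambda$ I would then define
\[
 F(z)\;:=\;Z_\Lambda^{-1}\,\mathrm{Tr}\bigl(e^{-\beta K}\,e^{izK}Ae^{-izK}B\bigr),
\]
and expand in the eigenbasis of $K$:
\[
 F(z)\;=\;Z_\Lambda^{-1}\sum_{n,m} e^{-\beta E_n+iz(E_n-E_m)}\,A_{nm}B_{mn},\qquad A_{nm}=\langle \phi_n,A\phi_m\rangle.
\]
For $z=t-is$ with $s\in[0,\beta]$ the modulus of each factor equals $e^{-(\beta-s)E_n-sE_m}$, which is a convex combination of $e^{-\beta E_n}$ and $e^{-\beta E_m}$. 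Two Cauchy--Schwarz applications, using $\sum_{n,m} e^{-\beta E_n}|A_{nm}|^{2}=\mathrm{Tr}(e^{-\beta K} AA^{*})\leq Z_\Lambda \|A\|^{2}$ and the analogous bound with $n,m$ swapped, give $|F(z)|\leq \|A\|\|B\|$ uniformly on the closed strip $S_\beta$. Dominated convergence yields continuity on $S_\beta$, and termwise differentiation in $z$ (again dominated by the same envelope) yields holomorphy on $\mathrm{int}(S_\beta)$.

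For the boundary identities, $F(t)=\gamma_\Lambda(\tau^\Lambda_t(A)B)$ is immediate. For the lower edge, using that every function of $K$ commutes with $e^{iwK}$, one computes
\[
 e^{-\beta K}\,e^{i(t-i\beta)K}Ae^{-i(t-i\beta)K}B \;=\; e^{itK} A\,e^{-\beta K} e^{-itK} B,
\]
and then cyclicity of the trace (moving $B$ and $e^{-itK}$ past the bounded/trace-class factors, using only commutativity of functions of $K$) rearranges the trace into $\mathrm{Tr}(e^{-\beta K}\,B\,e^{itK}Ae^{-itK})$, giving $F(t-i\beta)=\gamma_\Lambda(B\,\tau^\Lambda_t(A))$ as required.

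The main obstacle is bookkeeping: one must keep a careful eye on which exponentials of $K$ are unbounded when the imaginary part of $z$ moves through the strip and track why the combination with $e^{-\beta K}$ remains trace class. The spectral-decomposition estimate above is what makes this bookkeeping automatic, because it replaces all operator-level arguments by absolutely convergent sums with explicit majorants, and it is simultaneously what supplies boundedness on $S_\beta$ and holomorphy on its interior.
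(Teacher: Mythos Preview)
Your proof is correct and follows essentially the same route as the paper's: define the candidate $F(z)$, expand in the spectral decomposition of the Gibbs generator, use the convexity estimate $e^{-(\beta-s)E_n-sE_m}\le \tfrac{\beta-s}{\beta}e^{-\beta E_n}+\tfrac{s}{\beta}e^{-\beta E_m}$ together with Cauchy--Schwarz to obtain the uniform bound $|F(z)|\le \|A\|\|B\|$ and a summable majorant, then deduce continuity, holomorphy, and the boundary identities. Your choice to work directly with $K=H_\Lambda-\mu_\Lambda N_\Lambda$ (using $[A,N_\Lambda]=0$ for $A\in\mathcal{R}^{\mathrm{inv}}_\Lambda$) is a minor cosmetic simplification over the paper, which keeps $H_\Lambda$ in the time-evolution and first invokes H\"older's inequality for Schatten norms before passing to the spectral expansion; the estimates are otherwise the same.
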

 \begin{proof}
 We start by noting that the state given by $\rho_{\Lambda}$ is $\tau_t^{\Lambda}$ invariant. This follows from the cyclicity of the trace, see e.g. \cite[Theorem~3.1]{Simon2005}. 
 
 For given $A, B \in \mathcal{R}^\mathrm{inv}_{\Lambda}$ and $z \in S_{\beta}$ we consider the function
 \begin{equation}
     F(z) = Z_\Lambda^{-1} \Tr{ \ue^{-\beta (H_\Lambda- \mu_\Lambda N_\Lambda)} \ue^{\mathrm{i} H_{\Lambda} z} A \ue^{-\mathrm{i} H_{\Lambda} z} B }.
     \label{eq:finiteVolumeKMS0}
 \end{equation}
 It is well defined because for $z = t - \mathrm{i} s$ with $t \in \mathbb{R}$ and $0 \leq s \leq \beta$ an application of Hölder's inequality for traces, see e.g. \cite[Theorem~2.8]{Simon2005}, shows
 \begin{equation}
    \Vert \ue^{-\beta (H_\Lambda- \mu_\Lambda N_\Lambda)} \ue^{\mathrm{i} H_{\Lambda} z} A \ue^{-\mathrm{i} H_{\Lambda} z} B \Vert_1 \leq \Vert \ue^{-(\beta - s) ( H_{\Lambda} - \mu_{\Lambda} N_{\Lambda} ) } \Vert_p \Vert \ue^{-s ( H_{\Lambda} - \mu_{\Lambda} N_{\Lambda} ) } \Vert_q \Vert A \Vert \Vert B \Vert.
    \label{eq:finiteVolumeKMS1}
 \end{equation}
To obtain \eqref{eq:finiteVolumeKMS1} we also used that $H_{\Lambda}, A$, and $B$ commute with $N_{\Lambda}$. When we choose $p = \beta/(\beta-s)$ and $q = \beta/s$, the product of the first two terms on the right-hand side equals $Z_{\Lambda} < +\infty$, and hence $|F(z)| \leq \Vert A \Vert \Vert B \Vert$. Moreover, for $t \in \mathbb{R}$ we have $F(t) = \gamma_{\Lambda}(\tau_t(A) B)$ and $F(t-\mathrm{i}\beta) = \gamma_{\Lambda}(B \tau_t(A))$. To obtain the second equality, we used the cyclicity of the trace and again that $N_{\Lambda}$ commutes with the other operators. It remains to show that $F$ is a continuous function on $S_{\beta}$, which is analytic in the interior.

To prove this, we first use the (simultaneous) spectral decomposition 
\begin{equation}
    H_{\Lambda} = \sum_{j=1}^{\infty} E_{j} | \psi_{j} \rangle \langle \psi_{j} |, \qquad N_{\Lambda} = \sum_{j=1}^{\infty}  n_j | \psi_{j} \rangle \langle \psi_{j} |
\end{equation}
to write $F$ as
\begin{equation}
    F(z) = Z_{\Lambda}^{-1} \sum_{j=1}^{\infty} \ue^{-E_{j}(\beta - \mathrm{i} z ) + \beta \mu_{\Lambda} n_{j}} \sum_{k=1}^{\infty} \ue^{-\mathrm{i} E_{k} z}  \langle \psi_{j}, A \psi_{k} \rangle \langle \psi_{k},  B \psi_{j} \rangle.
    \label{eq:finiteVolumeKMS2}
\end{equation}
The summands in the above equation are entire functions and we claim that the sum converges uniformly for $z \in S_{\beta}$. We choose $p = \beta/(\beta+\mathrm{Im}(z))$ and $q = -\beta/\mathrm{Im}(z)$ and estimate
\begin{align}
    &|\ue^{-E_{j}(\beta - \mathrm{i} z ) + \beta \mu_{\Lambda} n_j} \ue^{-\mathrm{i} E_{k} z} \langle \psi_{j}, A \psi_{k} \rangle \langle \psi_{k},  B \psi_{j} \rangle | \label{eq:finiteVolumeKMS3} \\
    &\hspace{1cm}\leq \ue^{-E_{j}(\beta + \mathrm{Im}( z ) )  + \beta \mu_{\Lambda} n_j} \ue^{\mathrm{Im}(z)  E_{k}} \frac{1}{2} \left( |\langle \psi_{j}, A \psi_{k} \rangle|^2 + | \langle \psi_{k},  B \psi_{j} \rangle |^2 \right) \notag \\
    &\hspace{1cm}\leq \left( \frac{1}{p} \ue^{-\beta ( E_{j} - \mu_{\Lambda} n_j )} + \frac{1}{q} \ue^{-\beta ( E_{k} - \mu_{\Lambda} n_{k} )} \right)  \frac{1}{2} \left( |\langle \psi_{j}, A \psi_{k} \rangle|^2 + | \langle \psi_{k},  B \psi_{j} \rangle |^2 \right). \notag
\end{align}
To obtain this result, we used again that $A$ and $B$ commute with $N_{\Lambda}$ to replace $n_j$ by $n_k$. The right-hand side satisfies the bound
\begin{align}
    &Z_{\Lambda}^{-1} \sum_{j=1}^{\infty} \sum_{k=1}^{\infty} \left( \frac{1}{p} \ue^{-\beta ( E_{j} - \mu_{\Lambda} n_{j} )} + \frac{1}{q} \ue^{-\beta ( E_{k} - \mu_{\Lambda} n_{k} )} \right)  \frac{1}{2} \left( |\langle \psi_{j}, A \psi_{k} \rangle|^2 + | \langle \psi_{k},  B \psi_{j} \rangle |^2 \right) \label{eq:finiteVolumeKMS4} \\
    &\hspace{3cm}\leq \frac{1}{2} \left( \Vert A \Vert^2 + \Vert B \Vert^2 \right), \notag
\end{align}
which proves the claim. We conclude that $F$ is a continuous function on $S_{\beta}$ that is analytic in the interior and the claim of the lemma is proved.
\end{proof}

In the second lemma we derive a bound for the derivative of the time-dependent Green function related to $\gamma_{\Lambda}$ that does not depend on $\Lambda$.

\begin{lm}
\label{lem:boundDerivativeGreensFunction}
    Assume the hypothesis and notation of Theorem~\ref{thm:KMS}.
    For $X\Subset \Gamma$, $T>0$ there exists a constant $C > 0$ so that for all $A \in \mathcal{R}^\mathrm{inv}_X$, $B \in \mathcal{R}^\mathrm{inv}$, $R>0$ and $|t|\leq T$
     \begin{equation}
         \Big| \frac{\mathrm{d}}{\mathrm{d}t} \gamma_{\Lambda} \big(\tau_t^{X[R]}(A) B) \Big| + \Big| \frac{\mathrm{d}}{\mathrm{d}t} \gamma_{\Lambda} (B \tau_t^{X[R]}(A)) \Big| \leq C \Vert A \Vert \Vert B \Vert.
        \notag%
     \end{equation}
\end{lm}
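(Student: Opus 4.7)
The plan is to differentiate using the Duhamel identity and bound the resulting integrand uniformly. Writing $\tau_{t+h}^{X[R]}(A)-\tau_t^{X[R]}(A)=\mathrm{i}\int_t^{t+h}\tau_s^{X[R]}([H_{X[R]},A])\,\d s$ (understood in the strong sense after pairing with the locally normal state $\gamma_\Lambda$ provided by Lemma~\ref{lem:loc_normal}), and testing against $\gamma_\Lambda(\cdot B)$, both differentiability of $t\mapsto \gamma_\Lambda(\tau_t^{X[R]}(A)B)$ and the formula
$\frac{\d}{\d t}\gamma_\Lambda(\tau_t^{X[R]}(A)B)=\mathrm{i}\gamma_\Lambda(\tau_t^{X[R]}([H_{X[R]},A])B)$
follow once the integrand is shown to be continuous and uniformly bounded. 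The analogous identity for $\gamma_\Lambda(B\tau_t^{X[R]}(A))$ is obtained by putting $B$ on the other side. So it suffices to bound $|\gamma_\Lambda(\tau_s^{X[R]}([H_{X[R]},A])B)|$ and $|\gamma_\Lambda(B\tau_s^{X[R]}([H_{X[R]},A]))|$ uniformly in $R$, $\Lambda$, and $|s|\leq T$.

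The decisive structural observation is that $[H_{X[R]},A]$ is a finite sum of local terms with a cardinality independent of $R$ and $\Lambda$. Since $A\in \mathcal{R}^{\mathrm{inv}}_X$ commutes with every operator supported in $\Gamma\setminus X$ (in particular with $N_y$ for $y\notin X$) and, by gauge invariance combined with locality, also with $N_X:=\sum_{z\in X}N_z$, we have $[T_{xy},A]=0$ unless $\{x,y\}\cap X\neq \emptyset$ and $d(x,y)=1$, and similarly $[V_{xy},A]=0$ unless $\{x,y\}\cap X\neq\emptyset$ and $d(x,y)\leq r$. The number of such pairs is controlled by $|X|$, $r$, and $\sigma$ only.

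For each local commutator I would apply the Cauchy--Schwarz inequality $|\gamma_\Lambda(XY)|^2\leq \gamma_\Lambda(XX^*)\gamma_\Lambda(Y^*Y)$. Splitting $[T_{xy},A]=T_{xy}A-AT_{xy}$, the ``easy'' summand with $X=\tau_s^{X[R]}(T_{xy}A)$, $Y=B$ gives $\gamma_\Lambda(\tau_s^{X[R]}(T_{xy}AA^*T_{xy}^*))\leq \|A\|^2\gamma_\Lambda(\tau_s^{X[R]}(T_{xy}^2))$, and the elementary estimate $T_{xy}^2\leq 4(N_x+1)(N_y+1)$ reduces matters to second moments, bounded uniformly via Proposition~\ref{prop:propagate_N} and the hypothesis $p\geq 2$. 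The ``hard'' summand $AT_{xy}$ produces the sandwich $AT_{xy}T_{xy}^*A^*\leq 4A(N_x+1)(N_y+1)A^*$ in which $A$ may fail to commute with $N_x$ or $N_y$; here gauge invariance saves the day, since $[A,N_X]=0$ yields $AN_XA^*=N_XAA^*\leq \|A\|^2 N_X$, and hence $A(N_x+1)(N_y+1)A^*$ is bounded by $C\|A\|^2$ times a polynomial in the $N_z$, $z\in X\cup\{y\}$, to which the moment propagation again applies.

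The $V_{xy}$ contributions are treated by the same scheme using $V_{xy}^2\leq \|v\|_\infty^2(N_x+1)^2(N_y+1)^2$ and the pointwise bound $N_z^2\leq (N_X+1)^2\leq |X|\sum_{z'\in X}(N_{z'}+1)^2$ when $z\in X$; this requires fourth moments, available because $p>2d+2\geq 4$. Summing the uniformly bounded contributions of the finitely many commutator terms, and performing the same analysis for $\gamma_\Lambda(B\tau_s^{X[R]}([H_{X[R]},A]))$ (where the roles of easy and hard summands are swapped by placing $B$ on the other side of Cauchy--Schwarz), yields the claimed estimate with a constant depending only on $X$, $T$, $v$, $M$, and $p$. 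The main obstacle is controlling the sandwich $AT_{xy}T_{xy}^*A^*$ (and its quartic analogue for $V$), which is precisely where gauge invariance $[A,N_X]=0$ enters essentially.
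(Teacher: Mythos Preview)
Your proposal is correct and follows essentially the same approach as the paper: differentiate via Duhamel, observe that only the finitely many interaction terms touching $X$ contribute to $[H_{X[R]},A]$, apply Cauchy--Schwarz in the state, and control the resulting particle-number polynomials using the gauge invariance $[A,N_X]=0$ together with Proposition~\ref{prop:propagate_N}. The only difference is organizational: the paper keeps the commutator as a single object, bounding $[H_{X[r]},A][H_{X[r]},A]^* \leq 2(\|A\|^2 H_{X[r]}^2 + A H_{X[r]}^2 A^*)$ and then $H_{X[r]}^2 \leq C(1+N_{X[r]}^4)$, whereas you decompose into individual $T_{xy}$ and $V_{xy}$ contributions; the paper's version is slightly more streamlined but the content is the same.
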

\begin{proof}
   We have
    \begin{equation}
        \frac{\mathrm{d}}{\mathrm{d}t} \gamma_\Lambda(\tau_t^\Lambda(A)B)=\ui \gamma_\Lambda( \tau^{\Lambda}_t( [H_{X[r]}, A] ) B ) 
    \end{equation}
    since $A\in  \mathcal{R}^\mathrm{inv}_X$ and the range of $v$ is $r$.
    Moreover,
    \begin{equation}
        | \gamma_\Lambda( \tau^{X[R]}_t( [H_{X[r]}, A] ) B ) |
        \leq \Vert B \Vert \sqrt{ \gamma_\Lambda\big(\tau^{X[R]}_t([H_{X[r]},A] [H_{X[r]},A]^*) \big) }.
        \label{eq:derivativeBound3}
    \end{equation}
    When multiplied out, the product of the two commutators reads
    \begin{align}
        &H_{X[r]} A A^* H_{X[r]} + A H_{X[r]}^2 A^* - H_{X[r]} A H_{X[r]} A^* - A H_{X[r]} A^* H_{X[r]} \label{eq:derivativeBound4} \\
        &\hspace{2cm} \leq 2 ( H_{X[r]} A A^* H_{X[r]} + A H_{X[r]}^2 A^* ) \leq 2 ( \Vert A \Vert^2 H_{X[r]}^2 + A H_{X[r]}^2 A^*). \notag
    \end{align}
    The first bound follows from the Cauchy--Schwarz inequality. 
    It is not difficult to check that there exists a constant $C(|X|,r,\sigma)$ such that
    \begin{equation}
        H_{X[r]}^2 \leq C ( 1 + N_{X[r]}^4 ) 
        \label{eq:derivativeBound5}
    \end{equation}
    holds. Using that $[A,N_{X[r]}]=[A,N] = 0$, and \eqref{eq:derivativeBound5}, we also find
    \begin{equation}
        A H_{X[r]}^2 A^* \leq C A ( 1 + N_{X[r]}^4 ) A^* = C (1+ N_{X[r]}^4)^{1/2} A A^* (1+ N_{X[r]}^4)^{1/2} \leq C \Vert A \Vert^2 ( 1 + N_{X[r]}^4 ).
        \label{eq:derivativeBound6}
    \end{equation}
    In combination with the fact that $ N_{X[r]}^4 \leq |X[r]|^3 \sum_{x \in X[r]} N_x^4$, the bounds  \eqref{eq:derivativeBound3}--\eqref{eq:derivativeBound6} and the assumption $p > 2d+2 \geq 4$ imply
    \begin{equation}
        | \gamma_\Lambda( \tau^{X[R]}_t( [H_{X[r]}, A] ) B ) | \leq C(|X|,\sigma,r) \Vert A \Vert \Vert B \Vert \sup_{x\in X[r]}\sqrt{\gamma_\Lambda(\tau^{X[R]}_t(1+N_x^4))}.
        \label{eq:derivativeBound8}
    \end{equation}
    The right hand side is bounded uniformly in $\Lambda$ and $|t|\leq T$ by the hypothesis and Proposition~\ref{prop:propagate_N}.
    The bound on the derivative of $\gamma_\Lambda(B\tau_t^{X[R]}(A))$ follows from a similar argument (interchanging the roles of $A$ and $A^*$ in all formulas starting from \eqref{eq:derivativeBound4}), and this proves the claim.
\end{proof}

With this, we are prepared to prove Theorem~\ref{thm:KMS}.

\begin{proof}[Proof of Theorem~\ref{thm:KMS}]
We apply the Banach--Alaoglu theorem to obtain a subnet $\{ \gamma_{\Lambda_\alpha} \}_{\alpha \in I}$ of the net $\{ \gamma_{\Lambda} \}_{\Lambda \Subset \Gamma}$ (with the subsets $\Lambda\Subset \Gamma$ ordered by inclusion) converging to a state $\gamma$ on $\mathcal{B}$. The uniform moment bound clearly passes to the limit, so $\gamma$, $\gamma_\Lambda$ satisfy the hypothesis~\eqref{eq:moment_bound}. From Lemma~\ref{lem:Lambda-KMS} we know that the finite-volume Gibbs states $\rho_{\Lambda}$ are $(\tau^\Lambda,\beta)$--KMS states on the local algebra $\mathcal{R}^\mathrm{inv}_\Lambda$. For given $X \subset \Lambda \Subset \Gamma$ and two operators $A, B \in \mathcal{R}^\mathrm{inv}_X$, there therefore exists an analytic function $F_{\Lambda} :  S_{\beta} \to \mathbb{C}$, which satisfies $F_{\Lambda}(t) = \gamma_{\Lambda}(\tau^{\Lambda}_t(A)B)$ and $F_{\Lambda}(t-\mathrm{i}\beta) = \gamma_{\Lambda}( B \tau^{\Lambda}_t(A))$ for all $t \in \mathbb{R}$. By the maximum principle, see e.g.~\cite[Prop.~IV.4.3]{Simon1993}, $F_\Lambda$ attains its maximum on the boundary, and since $|F_\Lambda(t)|, |F_\Lambda(t-\ui \beta)|\leq \|A\| \|B\|$ the family $F_\Lambda$ is bounded uniformly in $\Lambda$.
We now claim that $F_{\Lambda_\alpha}(t)$, $F_{\Lambda_\alpha}(t-\ui\beta)$ converge uniformly on compacts to $\gamma(\tau_t(A)B)$, respectively $\gamma(B\tau_t(A))$. For $X[R]\subset \Lambda_\alpha$, consider the difference
\begin{multline}
    \sup_{|t|\leq T}\big|\gamma(\tau_t(A)B)- F_{\Lambda_\alpha}(t)\big| \leq  \sup_{|t|\leq T}\big|\gamma(\tau_t^{X[R]}(A)B)-\gamma_{\Lambda_\alpha}(\tau_t^{X[R]}(A)B)\big|  \label{eq:proofTheoremKMS1} \\
    + \sup_{|t|\leq T}\big( \big|\gamma(\tau_t(A)B) - \gamma(\tau_t^{X[R]}(A)B)\big|+  \big|\gamma_{\Lambda_\alpha}(\tau^{\Lambda_\alpha}_t(A)B) - \gamma_{\Lambda_\alpha}(\tau_t^{X[R]}(A)B)\big| \big).
\end{multline}
Applying Theorem~\ref{thm:unif_approx} twice, with $\Lambda\equiv \Gamma$ and $\Lambda\equiv\Lambda_\alpha$, we can make the second line smaller than any $\eps>0$ by choosing $R$ sufficiently large. Moreover, $t\mapsto \gamma_{\Lambda_\alpha}(\tau_t^{X[R]}(A)B)$ is a net of functions that converges point-wise to $\gamma(\tau_t^{X[R]}(A)B)$ and whose derivatives are uniformly bounded by Lemma~\ref{lem:boundDerivativeGreensFunction}. Hence, by~\cite[Lem. 6.3.23]{BraRob2}, these functions converge uniformly on compact sets and thus we can make~\eqref{eq:proofTheoremKMS1} smaller than $2\eps$ by choosing first $R$ and then $\alpha$ large enough.
The same argument applies to $F_{\Lambda_\alpha}(t-\ui \beta)$.

We now define $\Lambda_n:=\Lambda_{\alpha_n}$ by choosing $\alpha_n \succ \alpha_{n-1}$ so that for $T=n$,~\eqref{eq:proofTheoremKMS1} and its analogue for $F_{\Lambda_\alpha}(t-\ui \beta)$ are less than $n^{-1}$ for all $\alpha\succ \alpha_n$.
Then, $\{ F_{\Lambda_n}\}_{n\in \N}$ is a bounded sequence of continuous functions on $S_\beta$, holomorphic in the interior, whose boundary values converge uniformly on compact intervals to $\gamma(\tau_t(A)B)$, $\gamma(B\tau_t(A))$.
 Using~\cite[Prop.~IV.4.3]{Simon1993}, we conclude that there is a continuous function $F$ on $S_{\mathrm{\beta}}$, holomorphic in the interior, with these boundary values, and $F_{\Lambda_n}\to F$ locally uniformly.
 This proves the existence of the analytic function in Definition~\ref{def:KMS} if $A,B\in \mathcal{R}^\mathrm{inv}_X$ for some $X \Subset \Gamma$.

For general $A,B\in \mathcal{R}^\mathrm{inv}$, we choose sequences $A_n,B_n\in \mathcal{R}^\mathrm{inv}_{X_n}$ with $X_n \Subset \Gamma$ of localized operators that converge to $A,B$ in norm and satisfy $\|A_n\|\leq \|A\|$, $\|B_n\|\leq \Vert B \Vert$. Then, as proved above, there are continuous functions $F_n$ on $S_\beta$, holomorphic in the interior, with boundary values $\gamma(\tau_t(A_n)B_n)$, $\gamma(B_n\tau_t(A_n))$. These functions are bounded by $\|A_n\| \|B_n\| \leq \|A\| \|B\|$ and their boundary values converge point-wise to the continuous functions $\gamma(\tau_t(A)B)$, $\gamma(B\tau_t(A))$. Thus, again by~\cite[Prop.~IV.4.3]{Simon1993}, they converge locally uniformly to a continuous function $F$, holomorphic inside $S_\beta$, with these boundary values.  

It remains to show that $\gamma(\tau_t(A)) = \gamma(A)$ holds for all $A \in \mathcal{R}^\mathrm{inv}$. But this follows from our approximations above and the fact that the same invariance holds for the finite-volume Gibbs states $\rho_{\Lambda}$. This completes the proof.
\end{proof}

Note that in this proof we used the uniformity of the local approximation from Theorem~\ref{thm:unif_approx} with respect to both, $t$ and $\gamma$. Together with Lemma~\ref{lem:boundDerivativeGreensFunction}, this allowed us to establish locally uniform convergence and extract a convergent subsequence (since this topology is first-countable).
It was necessary to pass from the net to a sequence, since~\cite[Prop.~IV.4.3]{Simon1993} relies on the Dominated Convergence Theorem, which may fail for nets. 

\section*{Acknowledgements}

A. D. gratefully acknowledges funding from the Swiss National Science Foundation (SNSF) through the Ambizione grant PZ00P2 185851. Part of this research was performed while A. D. was visiting the Institute for Pure and Applied Mathematics (IPAM) in Los Angeles, which is supported by the National Science Foundation (Grant No. DMS-1925919). It is a pleasure for A.D. to acknowledge an inspiring correspondence with Detlev Buchholz. J. L. thanks Mathieu Lewin for inspiring discussions on the topic of this article and acknowledges funding by the French National Research Agency (ANR) through the project (MaBoP, ANR-23-CE40-0025; PI 1114/8-1) and the EIPHI Graduate School (ANR-17-EURE-0002) and Bourgogne-Franche-Comté Region through the project SQC. The research of M. L.\ is supported by the DFG through the grant TRR 352 – Project-ID 470903074 and by the European Union (ERC Starting Grant MathQuantProp, Grant Agreement 101163620).\footnote{Views and opinions expressed are however those of the authors only and do not necessarily reflect those of the European Union or the European Research Council Executive Agency. Neither the European Union nor the granting authority can be held responsible for them.}


\end{document}